\def\bC{{\mathbb C}}
\def\bQ{{\mathbb Q}}
\def\bR{{\mathbb R}}
\def\bN{{\mathbb N}}
\def\bZ{{\mathbb Z}}
\spnewtheorem{thm}{Theorem}{\itshape}{\rmfamily}
\spnewtheorem{ex}[thm]{Example}{\itshape}{\rmfamily}
\spnewtheorem{defi}[thm]{Definition}{\itshape}{\rmfamily}
\spnewtheorem{lem}[thm]{Lemma}{\itshape}{\rmfamily}
\spnewtheorem{pro}[thm]{Problem}{\itshape}{\rmfamily}
\spnewtheorem{cor}[thm]{Corollary}{\itshape}{\rmfamily}
\spnewtheorem{rem}[thm]{Remark}{\itshape}{\rmfamily}
\spnewtheorem{alg}[thm]{Algorithm}{\itshape}{\rmfamily}
\DeclareMathOperator{\resultant}{resultant}
\DeclareMathOperator{\ord}{ord}
\begin{document}

\frontmatter

\tableofcontents

\mainmatter

%
%
%

%
%
%
%
%
%
%

\let\set\mathbb
\def\<#1>{\langle#1\rangle}

\title{Creative Telescoping}
\author{Shaoshi Chen\orcidID{0000-0001-8756-3006},\\
  Manuel Kauers\orcidID{0000-0001-8641-6661}, and\\
  Christoph Koutschan\orcidID{0000-0003-1135-3082}}
\institute{Shaoshi Chen \at KLMM, Academy of Mathematics and Systems Science, Chinese Academy of Sciences, Beijing, China, \email{schen@amss.ac.cn}
  \and Manuel Kauers \at Johannes Kepler University, Linz, Austria, \email{manuel.kauers@jku.at}
  \and Christoph Koutschan \at Johann Radon Institute for Computational and Applied Mathematics, Linz, Austria, \email{christoph.koutschan@ricam.oeaw.ac.at}}

\thanks{S.\ Chen was partially supported by the National Key R\&D Program of China (No.\ 2023YFA1009401), 
the NSFC grant (No.\ 12271511), and the CAS Funds of the Youth Innovation Promotion Association (No.\ Y2022001).
M.\ Kauers and C.\ Koutschan were supported by the Austrian FWF grant 10.55776/I6130. M.\ Kauers was also supported by the Austrian FWF grants 10.55776/PAT8258123 and 10.55776/PAT9952223).  All authors also were supported by the International Partnership Program of Chinese Academy of Sciences (Grant No.\ 167GJHZ2023001FN).}

%
%
\maketitle

\abstract{These notes on creative telescoping is based on a series of lectures at the Institut Henri Poincar\'e in November and December 2023.}

\section{Introduction}\label{sec:intro}

This chapter contains lecture notes of a series of lectures that have
been held at the Institut Henri Poincar\'e in November and December 2023.
Our goal was to teach the basics of creative telescoping, review some
historical developments, and discuss recent advancements and future trends
of the topic. This chapter does not contain any original research
results that are not published elsewhere.

The field of mathematics that has most profited from the invention
of the creative telescoping method~\cite{zeilberger91}
is probably combinatorics. 
Let us recall some basic mathematical quantities that appear frequently
in the enumeration of combinatorial objects and that we will use in our
examples throughout this chapter:
\begin{itemize}
\item The \emph{exponential function} $2^n$ counts the
  number of $\{0,1\}$-vectors of length~$n$.
\item The \emph{factorial} $n! := 1\cdot2\cdots n$ counts the number
  of permutations of $n$ elements (often, the more general Gamma function~$\Gamma(n)$
  is used, with $n!=\Gamma(n+1)$ whenever $n\in\set N$).
\item The \emph{Pochhammer symbol}
  $(a)_n := a\cdot(a+1)\cdots(a+n-1)$
  (also called ``rising factorial'') can be defined in terms of
  the Gamma function: $(a)_n = \Gamma(a+n)/\Gamma(a)$.
\item The \emph{binomial coefficient}
  $\binom{n}{k} := \frac{n!}{k!\cdot(n-k)!} = \frac{(n-k+1)_k}{(1)_k}$
  counts the number of ways to choose $k$ elements from a set of $n$ elements.
\item The \emph{Catalan numbers}
  $C_n := \frac{1}{n+1}\binom{2n}{n}$
  count the number of Dyck paths of length~$2n$, or alternatively
  binary trees with $n$ internal nodes, and many other things.
\item The \emph{Stirling numbers} (of the second kind) $S(n,k)$
  (or $\genfrac{\{}{\}}{0pt}{}{m}{i}$)
  count the number of partitions of an $n$-set into $k$ non-empty subsets.
\end{itemize}
Most of these quantities have the nice property of being \emph{hypergeometric}
(exercise: find out which one is not!). We say that an expression~$f(n)$ is
hypergeometric if its shift quotient is a rational function in~$n$, in symbols:
\[
  \frac{f(n+1)}{f(n)}\in\set Q(n).
\]

Combinatorial identities that originate from counting problems often take the
form of hypergeometric summations, such as:
\begin{align*}
  \sum_{k=0}^n \binom{n}{k} &= 2^n, \\
  \sum_{k=0}^{n-1} C_k\cdot C_{n-k-1} &= C_n, \\
  \sum_{k=0}^n\binom{n}{k}^{\!\!2}\binom{k+n}{k}^{\!2} &=
  \sum_{k=0}^n\binom{n}{k}\binom{k+n}{k}\sum_{j=0}^k\binom{k}{j}^{\!3}.
\end{align*}
Such identities can nowadays be proven
in an automatic and mechanical way~\cite{Wilf1992},
and it is one of the purposes of this chapter to explain how this is done;
see in particular Sections~\ref{sec:Celine}, \ref{sec:Gosper}, and~\ref{sec:Zeilberger}.

A meanwhile classical example of a hypergeometric sum
is the following 
\[
  b_n := \sum_{k=0}^n \binom{n}{k}^2 \binom{n+k}{k}^2
\]
that played a crucial role in Roger Ap\'{e}ry's proof~\cite{vanDerPoorten79}
of the irrationality of $\zeta(3)$. The task was to show
that the quantity~$b_n$ satisfies the second-order recurrence
\[
  (n+2)^3 b_{n+2} = (2 n+3) (17 n^2+51 n+39) b_{n+1} - (n+1)^3 b_n.
\]
Alf van der Poorten wrote: ``Neither Cohen nor I had been able to prove [\dots]
in the intervening 2 months.  After a few days of fruitless effort the
specific problem was mentioned to Don Zagier (Bonn), and with irritating speed
he showed that indeed the sequence $(b_n)$ satisfies the recurrence''.
With the algorithms that will be presented in this chapter, this
formerly demanding task is nowadays completely routine.


However, hypergeometric summations are just the starting point: the same
algorithmic ideas work in the much more general setting of D-finite
functions (see Sections~\ref{sec:DfiniteUniv} and~\ref{sec:DfiniteMult}),
which allows us to prove large classes of special function identities
with the help of the computer.

Generally speaking, special functions
are functions that (1) arise in real-world phenomena (e.g., physics)
and as such are typically solutions to certain differential equations,
but (2) cannot be expressed in terms of the usual elementary functions
($\sqrt{\phantom{x}}$, $\exp$, $\log$, $\sin$, $\cos$, \dots).
If such a function seems important enough it will
receive its own name and be considered a ``special function''.
Examples for special functions are: the Airy function that describes
the intensity of light in the neighborhood of a caustic, the Bessel function
that represents the modes of vibration of a thin circular acoustic membrane,
or the Coulomb wave function that describes the behavior of charged
particles in a Coulomb potential.

The \emph{holonomic systems approach} that was proposed by Zeilberger in his
1990 seminal paper~\cite{Zeilberger1990} can deal with large classes of special
functions (so-called holonomic functions).
Apart from having many applications in mathematics, physics, and
elsewhere, it created a large research area within symbolic computation
(computer algebra). A large portion of special function identities that
were formerly tabulated in voluminous books 
\cite{AbramowitzStegun1964,GradshteynRyzhik14,Olver10,PBM1986}
can nowadays be proven in an algorithmic way.
In Sections~\ref{sec:AdvClosureP} and~\ref{sec:Chyzak} we explain how this works.

The successor of the classical Handbook of Mathematical
Functions~\cite{AbramowitzStegun1964} is the
Digital Library of Mathematical Functions (DLMF)~\cite{DLMF}. During its
development, the following happened: on May 18, 2005, Frank Olver,
the mathematics editor of DLMF, sent an e-mail to Peter Paule, asking
whether computer algebra methods could provide automatic verifications
of some identities that were listed in DLMF but whose proofs had been
lost since the author of the corresponding chapter, Henry Antosiewiecz,
had passed away. These identities involve (spherical) Bessel functions,
sine and cosine integrals, Legendre polynomials, and other special
functions. In the following we display only a few of them (in total,
there were about a dozen of such identities):
\begin{align*}
  \frac1z\sin\sqrt{z^2+2zt}
  &=\sum_{n=0}^\infty\frac{(-t)^n}{n!}y_{n-1}(z),
  \\
  \left[\frac{\partial}{\partial \nu} j_\nu(z)\right]_{\nu=0}
  &=\frac1z\bigl(\operatorname{Ci}(2z)\sin z-\operatorname{Si}(2z)\cos z\bigr),
  \\
  J_0(z\sin\theta)
  &=\sum_{n=0}^\infty(4n+1)\frac{(2n)!}{2^{2n}n!^2}j_{2n}(z)P_{2n}(\cos\theta).
\end{align*}
Within two weeks, all identities were proven with computer algebra, by 
the members of Paule's algorithmic combinatorics group at RISC~\cite{GKKPSZ13},
using the algorithms that we are going to present below.

Special functions do not only appear in physics, but are also
frequently used in mathematical analysis, thanks to their orthogonality
properties. A particular application where computer algebra methods
turned out to be very
useful~\cite{KoutschanLehrenfeldSchoeberl12,SchoeberlKoutschanPaule15}
was in the context of simulating
the propagation of electromagnetic waves according to Maxwell's equations
using finite element methods. For efficient solving, the basis
functions~$\varphi_{i,j}(x,y)$
are defined in terms of orthogonal polynomials, which in this case are
Legendre and Jacobi polynomials:
\[
  \varphi_{i,j}(x,y) :=
  (1-x)^i P_j^{(2 i+1,0)}(2 x-1) P_i\big(\textstyle\frac{2 y}{1-x}-1\big).
\]
In the implementation one needs formulate how the partial derivatives
of $\varphi_{i,j}(x,y)$ can be represented in the basis itself, i.e., as
linear combinations of shifts of the $\varphi_{i,j}(x,y)$. In
Section~\ref{sec:Examples} we demonstrate how this was achieved
using algorithms for D-finite functions.

One of the big advantages of the symbolic computation methods presented
in this chapter is that they cannot only deal with named special functions,
but also with no-name D-finite functions that appear as solutions to
(potentially complicated) differential equations or recurrences.
One situation where such functions appear is Zeilberger's
holonomic ansatz~\cite{Zeilberger07} for symbolic determinant evaluations, which
will be elucidated in more detail in Section~\ref{sec:Examples}. For
the moment, let us just mention that it allows us to prove difficult
determinant evaluations such as
\begin{align*}
  & \det_{1\leq i, j \leq 2m+1}
    \left[ \binom{\mu+i+j+2r}{j+2r-2} - \delta_{i,j+2r} \right] \\
  &\qquad=\frac{(-1)^{m-r+1} \, (\mu+3) \, (m+r+1)_{m-r}}{
    2^{2m-2r+1} \, \bigl(\frac{\mu}{2}+r+\frac32\bigr)_{m-r+1}} \cdot
  \prod_{i=1}^{2m} \frac{(\mu+i+3)_{2r}}{(i)_{2r}} \\
  &\qquad\quad\times \prod_{i=1}^{m-r} \frac{\bigl(\mu+2i+6r+3\bigr)_i^2 \,
    \bigl(\frac{\mu}{2}+2i+3r+2\bigr)_{i-1}^2}{\bigl(i\bigr)_i^2 \,
    \bigl(\frac{\mu}{2}+i+3r+2\bigr)_{i-1}^2}.
\end{align*}
Maybe the greatest success of the holonomic ansatz was the proof
of the $q$-TSPP Conjecture~\cite{KKZ2011},
which was famously formulated by David P.~Robbins and George Andrews
in 1983~\cite{Stanley86}.
It states that the orbit-counting generating function for totally
symmetric plane partitions is given by
\[
  \sum_{\pi\in T(n)} q^{|\pi/{S_3}|} = 
  \prod_{1 \leq i \leq j \leq k \leq n}\frac{1-q^{i+j+k-1}}{1-q^{i+j+k-2}}.
\]
Soichi Okada~\cite{Okada89} reformulated this problem as a certain
determinant evaluation, which by Zeilberger's holonomic ansatz
was translated into several $q$-holonomic summation identities,
which finally could be proven by creative telescoping (based on
massive computer calculations).

Although this chapter is mainly about symbolic computation, we shall
mention that the symbolic methods can often be used to support numerical
computations. One approach in this spirit is the holonomic gradient
method~\cite{NakayamaEtAl11,Koyama14}
that can be used for evaluating or optimizing holonomic expressions.
For an input holonomic function $f(x_1,\dots,x_s)$ and a point
$(a_1,\dots,a_s)\in\set R^s$, it outputs an approximation of the
evaluation $f(a_1,\dots,a_s)$, using the following steps:
\begin{enumerate}
\item Determine a holonomic system (set of differential equations) to which
  $f$ is a solution, and let $r$ be its holonomic rank.
\item Fix a suitable ``basis'' of derivatives
  $\mathbf{f}=\bigl(f^{(m_1)},\dots,f^{(m_r)}\bigr)$ of~$f(x_1,\dots,x_s)$.
\item Convert the holonomic system into a set of Pfaffian systems, i.e.,
  $\frac{\mathrm{d}}{\mathrm{d}x_i}\mathbf{f}=A_i\mathbf{f}$ for each~$x_i$.
\item Compute $f^{(m_1)},\dots,f^{(m_r)}$ at a suitably chosen point
  $(b_1,\dots,b_s)\in\set R^s$, for which this is easy to achieve.
\item Use a numerical integration procedure (e.g., Euler,
  Runge-Kutta) to obtain $\mathbf{f}(a_1,\dots,a_s)$.
\end{enumerate}

We close this introductory section by listing diverse applications of
creative telescoping in different areas of mathematics and sciences,
without claiming that this list is complete in any sense.
\begin{itemize}

\item In particle physics, the evaluation of Feynman integrals, an
  easy instance being the integral
  \[
    \int_0^1\int_0^1\frac{w^{-1-\varepsilon /2}(1-z)^{\varepsilon /2}
      z^{-\varepsilon /2}}{(z+w-wz)^{1-\varepsilon}}
    \left(1-w^{n+1}-(1-w)^{n+1}\right)\,dw\,dz,
  \]
  is a central problem. Schneider and collaborators~\cite{schneider07,
    Schneider2016,schneider13b,schneider16} transform these
  integrals into complicated multi-sums that can be treated with
  symbolic summation software (such as the Sigma package~\cite{SchneiderSigma07}).

\item Fast converging series for the efficient computation of mathematical
  constants, such as
  \[
    \frac{16\pi^2}{3} = \sum_{k=0}^\infty \left(\frac{27}{64}\right)^{\!k}
    \frac{\bigl(k!\bigr)^{\!3} \, \bigl(\frac56\bigr)_{\!k} \, \bigl(\frac76\bigr)_{\!k}}{%
      \bigl(\frac32\bigr)_{\!k}^{\!5}} \, (74k^2+101k+35)
  \]
  can be found and/or proven by creative
  telescoping~\cite{Guillera06,Guillera08,Campbell23}.

\item Creative telescoping is used in algebraic statistics to be able to
  evaluate normalizing constants and other quantities of interest, by
  means of the holonomic gradient method. For example, this was carried out
  in the context of MIMO wireless communication systems to evaluate the
  signal-to-noise-ratio (SNR) probability density
  function~\cite{SiriteanuEtAl15,SiriteanuEtAl16},
  which faces accuracy problems when done with standard floating-point
  arithmetic.

\item Certain knot invariants in quantum knot theory, such as the colored
  Jones function, have been computed with the help of creative
  telescoping~\cite{GaroufalidisKoutschan13}. This invariant is a
  $q$-holonomic sequence of
  Laurent polynomials, whose $q$-recurrence can be computed by symbolic
  summation or by guessing. For example, the colored Jones function of
  a double twist knot~$K_{p,p'}$ is given by 
  \[
    J_{K_{p,p'},n}(q) = \sum_{k=0}^{n-1} (-1)^k c_{p,k}(q) c_{p',k}(q)
    q^{-k n -\frac{k(k+3)}{2}} \bigl(q^{n-1};q^{-1}\bigr)_k\,\bigl(q^{n+1};q\bigr)_k
  \]
  where the sequence $c_{p,n}(q)$ is defined by
  \[
    c_{p,n}(q) = \sum_{k=0}^n (-1)^{k + n}
    q^{-\frac{k}{2}+\frac{k^2}{2}+\frac{3 n}{2}+\frac{n^2}{2}+k p+k^2 p}
    \frac{\bigl(1 - q^{2k+1}\bigr) (q;q)_n}{(q;q)_{n-k}(q;q)_{n+k+1}}.
  \]

\item Hypergeometric expressions for generating functions of walks with
  small steps in the quarter plane have been found as solutions to
  differential equations, which had been computed before by creative
  telescoping~\cite{bostan16b}.
  
\item The uniqueness of the solution to Canham's problem which predicts the shape
  of biomembranes was established using creative telescoping, namely by showing
  that the reduced volume $\operatorname{Iso}(z)$ of any
  stereographic projection of the Clifford torus to~$\set R^3$ is
  bijective~\cite{BostanYurkevich22,BostanYuYurkevich24}.
  
\item Creative telescoping can be applied for computing efficiently the
  $n$-dimensional volume of a compact
  semi-algebraic set, i.e., the solution set of multivariate polynomial
  inequalities, up to a prescribed precision $2^{-p}$~\cite{Lairez19}.
  
\item An accurate, reliable and efficient method to compute a certified orbital
  collision probability between two spherical space objects involved in a
  short-term encounter under Gaussian-distributed uncertainty was developed
  in~\cite{Serra16}. The computational method is based on an analytic
  expression for the integral, derived by using the Laplace transform and
  D-finite functions.
  
\item The study of integrals and diagonals related to some topics in theoretical
  physics such as the Ising model or the lattice Green's function has been
  significantly promoted by creative telescoping~\cite{
    BostanBoukraaChristolHassaniMaillard13,
    KoutschanFCC,
    HassaniKoutschanMaillardZenine16,
    AbdelazizBoukraaKoutschanMaillard18b,
    AbdelazizKoutschanMaillard20,
    AbdelazizBoukraaKoutschanMaillard20}.
  
\item Some irrationality measures of mathematical constants such as~$\pi$,
  $\zeta$-values, or elliptic $L$-values were computed with the help of
  creative telescoping~\cite{ZeilbergerZudilin20,KoutschanZudilin22}.
  
\end{itemize}

\section{Rational Integration: Ostrogradsky--Hermite reduction}\label{sec:hermite}


We know from college calculus that any rational function $f(x)\in \bC(x)$ has an indefinite integral of the form
\[\int f(x) \, dx = g_0(x) + \sum_{i=1}^{n} c_i \log(g_i(x)),\] 
where $g_0, g_1, \ldots, g_n$ are rational functions and $c_1, \dots, c_n$ are constants. The rational function $g_0$ 
and the sum $\sum_{i=1}^{n} c_i \log(g_i(x))$ are called
the \emph{rational part} and the \emph{logarithmic part} of the integral $\int f(x) \, dx$, respectively. 
Ostrogradsky~\cite{Ostrogradsky1845} in 1845 and Hermite~\cite{Hermite1872} in 1872 gave a reduction method for finding the rational part 
by only using GCD calculations without any algebraic extensions. We will recall the Ostrogradsky--Hermite reduction 
for rational integration from~\cite{BronsteinBook} and then use it in Section~\ref{sec:ct} for designing a creative-telescoping algorithm for bivariate rational functions.

Let $C$ be a field of characteristic zero, such as $\bQ$, $\bR$ and $\bC$. Let $C[x]$ be the ring of polynomials 
in $x$ over $C$ and $C(x)$ be the field of rational functions in $x$ over $C$. The usual derivation $d/dx$ on $C(x)$
is denoted by $'$, which satisfies the three properties: (i) $x'=1$; (ii) $(f+g)' = f' + g'$ for all $f, g\in C(x)$;
(iii) $(fg)' = f' g + fg'$ for all $f, g\in C(x)$. Two main problems in rational integration are as follows:

\medskip 
\textbf{Integrability Problem.}~~Given a rational function $f\in C(x)$, decide whether there exists another rational function 
$g\in C(x)$ such that $f = g'$. If such a $g$ exists, we say that $f$ is \emph{integrable} in $C(x)$.

\medskip
\textbf{Decomposition Problem.}~~Given a rational function $f\in C(x)$, compute  
$g, r\in C(x)$ such that $f = g' + r$, where $r=a/b$ satisfies some \lq\lq minimal\rq\rq\ conditions: (i) $\gcd(a, b)=1$;
(ii) $\deg(a)<\deg(b)$; (iii) $b$ is squarefree, i.e., $\gcd(b, b')=1$.

\medskip
We first recall two types of polynomial factorizations and related partial fraction decompositions for rational functions.
For a polynomial $P\in C[x]$, a \textit{squarefree factorization} of $P\in C[x]$ is of the form
\[P=P_1P_2^2\cdots P_m^m,\]
where $m\in \bN$ and $P_i\in C[x]$ satisfy that $\gcd(P_i,P_j)=1$ for all $i, j$ with $1\leq i<j\leq m$ and 
each $P_i$ is~\textit{squarefree}.  An~\textit{irreducible factorization} of $P$ is of the form
\[P=P_1^{d_1}P_2^{d_2}\cdots P_m^{d_m},\]
where $d_i \in \bN$ and $P_i\in C[x]$ satisfy that $\gcd(P_i,P_j)=1$ for all $i, j$ with $1\leq i<j\leq m$ and each $P_i$ is irreducible.

Let $f=P/Q\in C(x)$ be such that $\gcd(P, Q) =1$ and $\deg(P)<\deg(Q)$. Given any factorization $Q = Q_1 Q_2 \cdots Q_m$ such that $\gcd(Q_i,Q_j)=1$ for all $i, j$ with $1\leq i<j\leq m$, we have 
a partial fraction decomposition for $f$
\[ f=\frac{P}{Q}=\sum_{i=1}^m\frac{P_i}{Q_i}.\]
Corresponding to the previous two types of polynomial factorizations,  the \textit{squarefree partial fraction decomposition} for $f$ is of the form
\[ f = \sum_{i=1}^m \frac{P_i}{Q_i^i},\]
where $Q=Q_1Q_2^2\cdots Q_m^m$ is a squarefree factorization of $Q$. The \textit{irreducible partial fraction decomposition} for $f$ is of the form 
\[f=\sum_{i=1}^m \frac{P_i}{Q_i^{d_i}},\]
where $Q=Q_1^{d_1}Q_2^{d_2}\cdots Q_m^{d_m}$ is an irreducible factorization of $Q$.
 
The following lemma provides a criterion for testing integrability in $C(x)$, which also 
shows that \textbf{Integrability Problem} can be reduced to \textbf{Decomposition Problem}. 

\begin{lem}
Let $f=a/b\in C(x)$ satisfy the three conditions: (i) $\gcd(a, b)=1$;
(ii) $\deg(a)<\deg(b)$; (iii) $b$ is squarefree, i.e., $\gcd(b, b')=1$.
Then $f$ is integrable in $C(x)$ if and only if $a=0$.
\end{lem}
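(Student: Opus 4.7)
The plan is to handle the trivial direction first and then use the irreducible partial fraction decomposition to track pole orders under differentiation.

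The ``if'' direction is immediate: if $a=0$, then $f=0=0'$. For the ``only if'' direction, suppose $f=g'$ for some $g\in C(x)$, and write its irreducible partial fraction decomposition
\[
  g = P + \sum_{i=1}^{m} \sum_{j=1}^{d_i} \frac{A_{ij}}{Q_i^{j}},
\]
where $P\in C[x]$, the $Q_i\in C[x]$ are pairwise coprime irreducibles, each $A_{ij}\in C[x]$ satisfies $\deg(A_{ij})<\deg(Q_i)$, and $A_{i,d_i}\neq 0$. The key observation is that differentiation strictly raises the pole order at any irreducible denominator.

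More precisely, for each fixed $i$, I would compute
\[
  \left(\frac{A_{ij}}{Q_i^{j}}\right)' = \frac{A_{ij}'\, Q_i - j\, A_{ij}\, Q_i'}{Q_i^{j+1}},
\]
and observe that the contributions to $g'$ with the largest denominator power at $Q_i$ come only from the $j=d_i$ term, producing a pole of order exactly $d_i+1$. To see that no cancellation occurs at this top level, I would reduce the numerator $A_{i,d_i}'\,Q_i - d_i\, A_{i,d_i}\, Q_i'$ modulo $Q_i$, obtaining $-d_i\, A_{i,d_i}\, Q_i'\not\equiv 0\pmod{Q_i}$: indeed $Q_i\nmid A_{i,d_i}$ because $0<\deg(A_{i,d_i})<\deg(Q_i)$ and $A_{i,d_i}\neq0$, $Q_i\nmid Q_i'$ since $Q_i$ is irreducible in characteristic zero and $\deg(Q_i')<\deg(Q_i)$, and $d_i\neq 0$. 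Hence $Q_i^{d_i+1}$ is the exact denominator of the $Q_i$-part of $g'$ in lowest terms, so $g'$ has a pole of order at least $2$ at $Q_i$.

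Since $f=a/b$ with $b$ squarefree and $\gcd(a,b)=1$, every pole of $f$ is simple, so the above forces $m=0$; that is, $g=P\in C[x]$ is a polynomial. Therefore $g'=P'\in C[x]$, and $a/b=P'$ combined with $\deg(a)<\deg(b)$ and $\gcd(a,b)=1$ forces $a/b=0$, i.e., $a=0$. The only subtle point is the non-cancellation argument at the highest pole, which is why I single out the top term $j=d_i$ and use the characteristic-zero hypothesis to guarantee $Q_i'\neq 0$; everything else is bookkeeping on partial fractions.
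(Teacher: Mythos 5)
Your proof is correct and rests on the same key fact as the paper's: differentiation strictly increases the order of every pole, so $g'$ cannot equal a nonzero proper fraction with squarefree denominator, forcing $g$ to be a polynomial and hence $a=0$. The paper localizes at a single root $\beta\in\overline{C}$ of the denominator of $g$ and shows $(x-\beta)^{m+1}$ divides $b$, while you run the same argument through the irreducible partial fraction decomposition over $C$ with a reduction modulo $Q_i$ to rule out cancellation; this is only a presentational difference (one small slip: the condition $0<\deg(A_{i,d_i})$ should simply read $A_{i,d_i}\neq 0$, since $A_{i,d_i}$ may be a nonzero constant, and the divisibility claim $Q_i\nmid A_{i,d_i}$ still holds).
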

\begin{proof}
We only need to show the necessity. Suppose that $f=g'$ for some $g\in C(x)$ and $a\neq0$. Since $\deg(a)<\deg(b)$, $g$ cannot be a polynomial in $C[x]$. Thus we can write $g=P/Q$ with $P, Q\in C[x]$ and $g$ has at least one pole $\beta\in \overline{C}$ with $Q(\beta)=0$ and $P(\beta)\neq 0$. Write $Q= (x-\beta)^m \tilde{Q}$ with $\tilde{Q}(\beta)\neq 0$ and $m\geq 1$. Then
\[f = g' = \frac{-mP\tilde{Q}+(x-\beta)(P'\tilde{Q}-P\tilde{Q}')}{(x-\beta)^{m+1}\tilde{Q}^2}=\frac{a}{b}.\]
Therefore, we have $b(-mP\tilde{Q}+(x-\beta)(P'\tilde{Q}-P\tilde{Q}'))=a(x-\beta)^{m+1}\tilde{Q}^2$. This implies that $(x-\beta)^{m+1}$ divides $b$, which contradicts the assumption that $b$ is squarefree.
\end{proof}

We now explain the main steps of the Ostrogradsky--Hermite reduction that solves the decomposition problem for 
rational functions in $C(x)$.

\begin{enumerate}
\item For a rational function $f\in C(x)$, we first write $f = p + a/b$, where $p,\,a,\,b\in C[x]$ with $\gcd(a,b)=1$ and $\deg(a)<\deg(b)$. 
Note that $p$ can be written as $p=q'$ for some $q\in C[x]$, so we obtain $f=q'+a/b$.
\item Let $b=b_1b_2^2\cdots b_m^m$ be a squarefree factorization of $b$. Then we get a squarefree partial fraction decomposition of $a/b$:
\[\frac{a}{b}=\sum_{i=1}^m\frac{a_i}{b_i^i},\]
where each $a_i \in C[x]$ satisfies that $\deg(a_i)<i\deg(b_i)$.
\item In the following, we use the trick of integration by part to decompose a simple fraction of the form $A/B^m$, where $A,\,B\in C[x]$, $m\geq 2$, $B$ is squarefree and $\deg(A)<m\deg(B)$. 
\begin{align*}
\frac{A}{B^m}&=\frac{UB+VB'}{B^m} =\frac{U}{B^{m-1}}+\frac{VB'}{B^m}\\
&=\frac{U}{B^{m-1}}+\left(\frac{(1-m)^{-1}V}{B^{m-1}}\right)'-\frac{(1-m)^{-1}V'}{B^{m-1}}\\
&=\left(\frac{(1-m)^{-1}V}{B^{m-1}}\right)'+\frac{U-(1-m)^{-1}V'}{B^{m-1}}.
\end{align*}
Repeating the above process, we get
\[\frac{A}{B^m}=\left(\frac{u}{B^{m-1}}\right)'+\frac{v}{B},\]
where $u,\,v\in C[x]$, $\deg(u)<(m-1)\deg(B)$ and $\deg(v)<\deg(B)$. Applying the above reduction to each fraction $a_i/b_i^i$
in the squarefree partial fraction decomposition of $a/b$ yields
\[\frac{a}{b}=\left(\frac{p}{b^-}\right)'+\frac{q}{b^*},\]
where $b^-=\gcd(b,b')$, $b^*=b/b^-$, $p,\,q\in C[x]$, $\deg(p)<\deg(b^-)$ and $\deg(q)<\deg(b^*)$.
\end{enumerate}
Thus any rational function $f\in C(x)$ can be decomposed by the Ostrogradsky--Hermite Reduction into the form
\begin{align}\label{eq-HR-form}
f=g'+\frac{a}{b},
\end{align}
where $a,\,b\in C[x]$ with $\gcd(a,b)=1$, $\deg(a)<\deg(b)$ and $b$ is squarefree.
Such a decomposition is often called an additive decomposition in $C(x)$ with respect to the derivation $'$.  

\begin{rem}
\begin{enumerate}
\item To compute the form~\eqref{eq-HR-form}, we only need to perform the ring operations in $C[x]$ without any computation in some algebraic extension 
of $C$. So the Ostrogradsky--Hermite reduction is called a ``rational'' algorithm for computing additive decompositions of rational functions.
\item We can also reduce the decomposition problem to a problem of solving a linear system, namely the Horowitz--Ostrogradsky approach. Specifically, for a proper fraction $a/b$, we make an ansatz $p=\sum_{i=0}^{d^--1}p_ix^i$ and $q=\sum_{j=0}^{d^*-1}q_jx^j$, where $d^-=\deg(b^-)$, $d^*=\deg(b^*)$,  and $p_i, q_j$ are the undetermined coefficients in~$C$. Then we get a
    linear system in the unknowns $p_i$ and $q_j$ by comparing the coefficients of the equality
\[\frac{a}{b}=\left(\frac{p}{b^-}\right)'+\frac{q}{b^*}.\]
The existence of solutions of this system is guaranteed by the Ostrogradsky--Hermite reduction.
\end{enumerate}
\end{rem}

\begin{ex} Consider the following rational function
\[f=\frac{x^7-24x^4-4x^2+8x-8}{x^8+6x^6+12x^4+8x^2} \in\bQ(x).\] 
We will decompose it into the form~\eqref{eq-HR-form} in two different ways. 
Following the process of the Ostrogradsky--Hermite reduction, we first compute the squarefree factorization of the denominator $b=x^8+6x^6+12x^4+8x^2$ 
which is $b=x^2(x^2+2)^3$. Then we get the squarefree partial fraction decomposition for $f$ of the form
\[ f=\frac{x-1}{x^2}+\frac{x^4-6x^3-18x^2-12x+8}{(x^2+2)^3}.\] 
Performing the reduction for each fraction yields 
\[f=\left(\frac{1}{x}+\frac{6x}{(x^2+2)^2}-\frac{x-3}{x^2+2}\right)'+\frac{1}{x}.\]

In the Horowitz-Ostrogradsky approach, we make the following ansatz 
\[ f=\left(\frac{\sum_{i=0}^{4}p_ix^i}{x(x^2+2)^2}\right)'+\frac{\sum_{j=0}^{2}q_jx^j}{x(x^2+2)},\] 
where $p_i$ and $q_j$ are undetermined coefficients in $\bQ$. Solving the corresponding linear system yields 
\[ (p_0,p_1,p_2,p_3,p_4,q_0,q_1,q_2)=(4,6,8,3,0,2,0,1).\]
This leads to an additive decomposition of $f$
\begin{align*}
f&=\left(\frac{3x^3+8x^2+6x+4}{x(x^2+2)^2}\right)'+\frac{x^2+2}{x(x^2+2)}\\
&=\left(\frac{3x^3+8x^2+6x+4}{x(x^2+2)^2}\right)'+\frac{1}{x}.
\end{align*}
\end{ex}

We now focus on computing the logarithmic part in rational integration. Assume that $f = a/b$, where $a,\,b\in C[x]$, $\gcd(a,b)=1$ and $b$ is squarefree. 
Let $\beta_1,\,\beta_2,\ldots,\beta_n$ be all the roots of $b$ in $\overline{C}$. Then we can decompose $f$ as 
\[\frac{a}{b}=\sum_{i=1}^n\frac{\alpha_i}{x-\beta_i},\]
which leads to 
\[\int\frac{a}{b} \, dx=\sum_{i=1}^n\alpha_i\log(x-\beta_i),\]
where $\alpha_i=\frac{a(\beta_i)}{b'(\beta_i)}$ by Lagrange's formula for residues. 
This formula motivates the definition of \emph{Rothstein-Trager resultants}, which is
\[ R(z)=\resultant_x(b,a-zb')\in C[z]\] 
for $f=a/b\in C(x)$.

\begin{ex}
Consider the rational function $f = 1/(x^3+x)$. Then we have
\[\frac{1}{x^3+x}=\frac{1}{x}+\frac{-\frac{1}{2}}{x-i}+\frac{-\frac{1}{2}}{x+i},\]
which implies that
\[\int\frac{1}{x^3+x} \, dx=\log(x)-\frac{1}{2}\log(x-i)-\frac{1}{2}\log(x+i)=\log(x)-\frac{1}{2}\log(x^2+1).\]
Note that the final expressions in the logarithmic part only involve numbers in $\bQ$.  
\end{ex}
The following theorem shows that it is sufficient to compute the roots of the Rothstein-Trager resultant if we want to
get the logarithmic part (for the proof, see~\cite{BronsteinBook}).

\begin{thm}
Let $f=a/b\in C(x)$ be such that $a,\,b\in C[x]$, $\gcd(a,b)=1$ and $b$ is squarefree. Let $R(z)=\resultant_x(b,a-zb')\in C[z]$. Then
\[\int f dx=\sum_{\alpha\in\bar{C},\,R(\alpha)=0}\alpha\log(g_\alpha),\]
where $g_\alpha=\gcd(b,a-\alpha b')\in C(\alpha)[x]$.
\end{thm}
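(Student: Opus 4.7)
The plan is to reduce the theorem to the partial fraction formula
\[
\int \frac{a}{b}\,dx = \sum_{i=1}^n \alpha_i \log(x-\beta_i)
\]
already derived in the text (where $\beta_1,\dots,\beta_n \in \overline{C}$ are the roots of $b$ and $\alpha_i = a(\beta_i)/b'(\beta_i)$), and then to regroup the terms on the right-hand side by their residue. The content of the theorem is that this regrouping is exactly captured by the roots of $R(z)$ and the gcd factors $g_\alpha$.

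First I would analyse the roots of $R(z) = \resultant_x(b,a-zb')$. By the defining property of the resultant, $R(\alpha)=0$ if and only if $b(x)$ and $a(x)-\alpha b'(x)$ share a root in $\overline{C}$. Since $b$ is squarefree, we have $b'(\beta_i)\neq 0$ for every $i$, so a shared root must be some $\beta_i$ with $a(\beta_i)-\alpha b'(\beta_i)=0$, i.e.\ with $\alpha=\alpha_i$. Conversely, each $\alpha_i$ clearly arises in this way. Thus the roots of $R$ in $\overline{C}$ are precisely the distinct values among $\alpha_1,\dots,\alpha_n$.

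Next I would inspect $g_\alpha = \gcd(b,a-\alpha b')$ for such a root $\alpha$. Its roots in $\overline{C}$ are the common roots of $b$ and $a-\alpha b'$, namely the $\beta_i$ with $\alpha_i=\alpha$; squarefreeness of $b$ guarantees that each such $\beta_i$ appears in $g_\alpha$ with multiplicity one. Taking $g_\alpha$ monic, we obtain the factorisation
\[
g_\alpha = \prod_{i\,:\,\alpha_i=\alpha}(x-\beta_i).
\]
Substituting into the partial fraction formula and collecting contributions with the same residue gives
\[
\int f\,dx = \sum_{i=1}^n \alpha_i\log(x-\beta_i) = \sum_{\alpha\,:\,R(\alpha)=0}\alpha\!\!\sum_{i\,:\,\alpha_i=\alpha}\!\!\log(x-\beta_i) = \sum_{\alpha\,:\,R(\alpha)=0}\alpha\log g_\alpha,
\]
which is the desired identity. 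To justify that $g_\alpha$ actually lies in $C(\alpha)[x]$ (and not merely in $\overline{C}[x]$), I would invoke the Euclidean algorithm: applied to $b \in C[x] \subseteq C(\alpha)[x]$ and $a-\alpha b' \in C(\alpha)[x]$, it produces a gcd with coefficients in $C(\alpha)$.

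The main obstacle is the precise matching in the first two steps: one must show that $R$ has no spurious roots beyond the residues, and that $g_\alpha$ collects \emph{all} (and only) the $\beta_i$ with $\alpha_i=\alpha$, without repetitions. Both points hinge on the squarefreeness of $b$, which both forces $b'(\beta_i)\neq 0$ (so every residue is well defined and appears as an actual root of $R$) and prevents multiple factors in $g_\alpha$ (so the product formula above is exact and the logarithm accounting is correct).
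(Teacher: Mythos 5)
Your proof is correct and follows exactly the route the paper itself sets up: the paper does not actually prove this theorem (it defers to the Bronstein reference), but the partial-fraction expansion $\frac{a}{b}=\sum_i\frac{\alpha_i}{x-\beta_i}$ with $\alpha_i=a(\beta_i)/b'(\beta_i)$ stated just before the theorem is precisely the starting point of the standard Rothstein--Trager argument, and you complete it correctly by identifying the roots of $R$ with the distinct residues and $g_\alpha$ with $\prod_{\alpha_i=\alpha}(x-\beta_i)$, both via squarefreeness of $b$ and $\gcd(a,b)=1$. The only caveat worth flagging is that the partial-fraction formula you (and the paper) invoke tacitly assumes $\deg(a)<\deg(b)$; this is implicit in the context, since the theorem is applied to the remainder left after Ostrogradsky--Hermite reduction, but strictly speaking it should be added to the hypotheses or handled by splitting off the polynomial part first.
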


\begin{ex}
\begin{enumerate}
\item Let $f=1/(x^3+x)\in\bQ(x)$. Then 
\begin{align*}
R(z)&=\resultant_x(x^3+x,1-z(3x^2+1))\\
    &=-4z^3+3z+1\\
    &=-4(z-1)\left(z+\frac{1}{2}\right)^2.
\end{align*}
So we have 
\begin{align*}
    \int f dx=1\cdot \log(g_1)+\left(-\frac{1}{2}\right)\log(g_{-\frac{1}{2}}),
\end{align*}
where $$g_1=\gcd(x^3+x,1-(3x^2+1))=x$$ and $$g_{-\frac{1}{2}}=\gcd\left(x^3+x,1+\frac{1}{2}\left(3x^2+1\right)\right)=x^2+1.$$ So we finally obtain
\[\int f dx=1\cdot \log(x)+\left(-\frac{1}{2}\right)\log(x^2+1).\]
\item Let $f=1/(x^2-2)\in\bQ(x)$. Then $R(z)=-8z^2+1$ with the roots $\sqrt{2}/4$ and $-\sqrt{2}/4$. So we have 
\[\int f dx=\frac{\sqrt{2}}{4}\log\left(x-\sqrt{2}\right)-\frac{\sqrt{2}}{4}\log\left(x+\sqrt{2}\right).\]
\end{enumerate}
\end{ex}

\section{Implementation of Rational Integration}

The classical algorithms for rational integration as introduced in the previous section can be easily implemented
in a computer algebra system. To give an idea what such a code might look like, let us produce a simple implementation
in SageMath.

Polynomials in SageMath know what their derivative is, but rational functions do not. So to get started, we write
a function for differentiating rational functions.  

\begin{verbatim}
def der(rat):
    p = rat.numerator()
    q = rat.denominator()
    return (p.derivative()*q - p*q.derivative())/q^2
\end{verbatim}

This was easy. In the next step, we write a function for integrating polynomials. This is also easy:

\begin{verbatim}
def intpoly(p):
    x = p.parent().gen()
    q = 0
    for i in range(p.degree() + 1):
        q += p[i]/(i+1)*x^(i+1)
    return q
\end{verbatim}

For the implementation of the Ostrogradsky--Hermite reduction process, it is useful to prepare two auxiliary functions.
The first takes a polynomial $p$ as input and returns a triple $(u,v,m)$ such that $p=uv^m$, $v$ is
squarefree, and $u$ and $v$ are coprime. A somewhat brutal way to implement this is as follows.

\begin{verbatim}
def split(p):
    if p.degree() == 0:
       return (p, 1, 1)
    q = p.factor()
    m = max([e for (_, e) in q])
    u = prod([u^e for (u, e) in q if e < m])
    v = prod([u for (u, e) in q if e == m])
    return (u, v, m)
\end{verbatim}

It is brutal in the sense that we rely on polynomial factorization. This is not necessary. In order to
separate the factors with the highest multiplicity, it suffices to apply squarefree decomposition.
Here is an implementation of this alternative.

\begin{verbatim}
def split(p):
    if p.degree() == 0:
       return (p, 1, 1)
    q = p.squarefree_decomposition()
    return (p//q[-1][0]^q[-1][1], q[-1][0], q[-1][1])
\end{verbatim}

The second auxiliary function is for solving modular equations. It takes three polynomials $a,u,v$ as
input and computes a polynomial $b$ such that $\deg(b)<\deg(v)$ and $a=bu\bmod v$. It is assumed 
that $u$ and $v$ are coprime.

The function is based on the extended euclidean algorithm. Applied to $u$ and~$v$, this algorithm
computes $g,b,c$ such that $g=bu+cv$ and $g=\gcd(u,v)$. Since $u$ and $v$ are assumed to be coprime,
we will have $g=1$, but more generally, the approach works whenever $a$ is a multiple of~$g$.
Multiplying the equation $g=bu+cv$ with $a/g$ (which is then a polynomial) gives $a=\frac{ba}{g}u+\frac{ca}{g}v$.
Finally, for every $q$ we have
\[
a=\Bigl(\frac{ba}{g}-qv\Bigr)u+\Bigl(\frac{ca}{g}+qu\Bigr)v,
\]
so we are free to replace
the polynomial $\frac{ba}{g}$ by $\frac{ba}{g}\bmod v$ to ensure the required degree condition.

\begin{verbatim}
def solvemod(a, u, v):
    g, b, c = u.xgcd(v) 
    b, c = b*a//g, c*a//g
    b = b % v
    return b
\end{verbatim}

We can now implement Ostrogradsky--Hermite reduction as follows.

\begin{verbatim}
def hermite(rat):
    p = rat.numerator()
    q = rat.denominator()
    g, p = p.quo_rem(q)
    if g != 0:
        g0, h = hermite(p/q)
        return (intpoly(g) + g0, h)
    u, v, m = split(q) 
    if m == 1:
        return (0, rat)
    else:
        b = solvemod(p, -(m-1)*u*v.derivative(), v)
        g, h = hermite(rat - der(b/v^(m-1)))
        return (g + b/v^(m-1), h)
\end{verbatim}

The first part handles the case when there is a nontrivial polynomial part. Getting this case out of
the way, we can assume in the second part that the input is a proper rational function. We then proceed
as described in the previous section.

The code presented here is not optimized for efficiency but for readability. If efficiency matters,
we may want to avoid the recursive calls in favor of an explicit loop. Also the recomputation of the
squarefree decomposition in each iteration is not optimal and can be easily avoided if we are willing
to write a few more lines of code. The result might look as follows.

\begin{verbatim}
def hermite(rat):
    p = rat.numerator()
    u = rat.denominator()
    if (u.degree() == 0):
        return (intpoly(p), 0)
    g, p = p.quo_rem(u)
    g = intpoly(g)
    dec = u.squarefree_decomposition()
    m = max(e for _, e in dec)
    factors = [1]*(m + 1)
    for f, e in dec:
        factors[e] = f
    while m > 1:
        v = factors[m]
        u = u//v^m
        b = solvemod(p, -(m-1)*u*v.derivative(), v)
        p = (p + (m-1)*b*u*v.derivative()\
                 - b.derivative()*u*v)//v
        g += b/v^(m-1)
        factors[m - 1] *= v
        u *= v^(m - 1)
        m -= 1
    return (g, p/factors[1])
\end{verbatim}

Let us now turn to the computation of the logarithmic part. In the following code, it
is assumed that the input rational function is proper and that its denominator is squarefree. Some
cryptic instructions are needed to introduce a new variable $c$ into the constant ring. We can then
form the resultant of $q$ and $p-zq'$, which amounts to eliminating the variable $x$ from the ideal
generated by $q$ and $p-zq'$. The result is a polynomial in $z$ only. We iterate over the irreducible
factors of this polynomial. Each factor yields a 
contribution of the form
\[
\sum_{\alpha:u(\alpha)=0}\alpha\log g(\alpha,x)
\]
to the closed form of the
integral. Such a contribution is encoded as a pair $(u,g)$ of polynomials, where $g$ is an element
of $(\set Q[z]/\<u>)[x]$. The output of the function is a list of such pairs.

\begin{verbatim}
def logpart(rat):
    p = rat.numerator()
    q = rat.denominator()
    C = p.parent().base_ring()
    x, z = C[p.parent().gen(), 'z'].gens()
    I = C[x,z].ideal([q, p - z*q.derivative()])
    out = []
    for u, _ in C[z](I.elimination_ideal(x).gen(0)).factor():
        if u not in ZZ:
            out.append((u, C['z'].quotient(u)[x](q)\
                           .gcd(p - z*q.derivative())))
    return out
\end{verbatim}

If we put all the code above into a file \verb|ratint.sage|, then we can use it in SageMath as follows.

\begin{verbatim}
sage: load("ratint.sage")
sage: x = QQ['x'].gen()
sage: hermite((x+1)^4*(x+2)^3/(x+4)^2/(x+5)^3)
((1/3*x^6 - 11/6*x^5 + 182/3*x^4 + 8585/6*x^3 + 8448*x^2
+ 22936*x + 30024)/(x^3 + 14*x^2 + 65*x + 100),
 (-1116*x - 684)/(x^2 + 9*x + 20))
sage: (x+1)^4*(x+2)^3/(x+4)^2/(x+5)^3 - der(_[0])
(-1116*x - 684)/(x^2 + 9*x + 20)
sage: logpart(_)
[(z - 3780, x + 4), (z + 4896, x + 5)]
\end{verbatim}

This means we have
\begin{alignat*}1
  &\int \frac{(x+1)^4(x+2)^3}{(x+4)^2(x+5)^3} \, dx\\
  &= \frac{\frac13x^6 - \frac{11}6x^5 + \frac{182}3x^4 + \frac{8585}6x^3 + 8448x^2 + 22936x + 30024}{x^3 + 14x^2 + 65x + 100}\\
  &\quad  + 3780\log(x+4) - 4896\log(x+5).
\end{alignat*}

\section{Classical summation methods}\label{sec:classical}


For a sequence $f(n)$, there are two types of summation problems: the indefinite summation problem 
is to find another sequence $g(n)$ such that 
\[f(n) = g(n+1) - g(n) \triangleq \Delta(g(n))\] and the definite summation problem
is to find a closed-form formula for the definite sum $\sum_{n=a}^{b} f(n)$. Those two problems
are closely connected by the discrete Newton-Leibniz formula
\[\sum_{n=a}^{b} f(n) = g(b+1) - g(a).\]
So one can get a nice formula for the definite sum if $g$ had a nice form. 
For example, let us take $f(n) = 1/(n(n+1))$. We can write $f(n)$ as
\[f(n) = \frac{1}{n} - \frac{1}{n+1} = g(n+1) - g(n) \quad \text{where $g(n) = -\frac{1}{n}$}. \]
Form the above telescoping relation, we get
\[\sum_{n=1}^{2025} f(n) = \frac{2025}{2026}.\]
In the following, we will recall some classical techniques for finding the indefinite sums
of polynomials, rational functions and binomial coefficients.  

%
%

We first study the summation problem on polynomials. 
Recall that the \emph{falling factorial} $x^{\underline{m}}$ of $x$ is defined by
\[ x^{\underline{m}}=x(x-1)\cdots(x-m+1) \quad \text{for all $m\in \bN$}. \]
As a discrete analogue of the derivative formula $(x^m)'=m x^{m-1}$, we have
 \[\Delta(x^{\underline{m}})=m x^{\underline{m-1}}.\]
Note that $\deg(x^{\underline{m}}) = m$, then
$ \left\{1,  x^{\underline{1}},  x^{\underline{2}},  \dots,  x^{\underline{m}},  \dots\right\}$ also forms a basis of $C[x]$, as a vector space over $C$.
Then 
\[ x^m=\sum_{i=0}^m {m\brace i} x^{\underline{i}},\]
where ${m \brace i}$ is the Stirling numbers of the second kind, which counts the number of partitions of the set $\{1,2,\dots,m\}$ into $i$ nonempty subsets.
Note that ${n \brace k}=0$ if $i>m$, ${m \brace m}=1$, ${m\brace 0}=0$ for $m\geq 0$, and the Stirling number satisfies the recurrence relation
\[{m\brace i}={m-1\brace i-1}+i {m-1\brace i} \]
for all $m, i \in \bN$ with $m>i>0$. This recurrence relation can be derived as follows.
\begin{align*}
    x^m &=x\cdot x^{m-1}=x\sum_{0\leq i<m}{m-1\brace i}x^{\underline{i}}=\sum_{0\leq i <m}{m-1\brace i }x\cdot x^{\underline{i}}\\
    &=\sum_{0\leq i<m}{m-1\brace i}(x-i) x^{\underline{i}}+\sum_{0\leq i<m}{m-1\brace i}ix^{\underline{i}}\\
    &=\sum_{0\leq i<m}{m-1\brace i}x^{\underline{i+1}}+\sum_{0\leq i<m}{m-1\brace i}ix^{\underline{i}}\\
    &=x^{\underline{m}}+\sum_{1\leq i<m}\left({m-1\brace i-1}+i{m-1\brace i}\right)x^{\underline{i}}.
\end{align*}

The following theorem gives an explicit solution to the indefinite summation problem on polynomials in $C[x]$.
\begin{thm}
Let $f=\sum_{i=0}^d a_ix^i\in C[x]$. Then $f= g(x+1)-g(x)$ for some $g\in C[x]$ of the form
\[g=\sum_{0\leq j\leq i\leq d}a_i{i\brace j}\frac{x^{\underline{j+1}}}{j+1}.\]
\end{thm}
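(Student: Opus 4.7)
The plan is to verify the identity by directly applying the difference operator $\Delta$ to the proposed expression for $g$ and checking that the result equals $f$. Everything needed is already in place: the linearity of $\Delta$, the discrete power rule $\Delta(x^{\underline{m}}) = m \, x^{\underline{m-1}}$, and the expansion $x^i = \sum_{j=0}^i {i\brace j} x^{\underline{j}}$ which was derived just above the statement.

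First, from $\Delta(x^{\underline{m}}) = m \, x^{\underline{m-1}}$ I would observe the discrete analogue of the antiderivative rule, namely
\[
\Delta\!\left(\frac{x^{\underline{j+1}}}{j+1}\right) = x^{\underline{j}},
\]
valid for all $j \geq 0$. Then, applying $\Delta$ termwise to the double sum defining $g$ and using linearity,
\[
\Delta(g) = \sum_{0 \leq j \leq i \leq d} a_i {i \brace j} \, \Delta\!\left(\frac{x^{\underline{j+1}}}{j+1}\right) = \sum_{0 \leq j \leq i \leq d} a_i {i \brace j} \, x^{\underline{j}}.
\]
Swapping the order of summation and recognizing the inner sum as the falling-factorial expansion of $x^i$ gives
\[
\Delta(g) = \sum_{i=0}^d a_i \sum_{j=0}^i {i \brace j} \, x^{\underline{j}} = \sum_{i=0}^d a_i x^i = f,
\]
which is exactly the claim, since $\Delta(g) = g(x+1) - g(x)$ by definition.

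There is no real obstacle here; the proof is a two-line verification that chains together the discrete Leibniz rule for $x^{\underline{m}}$ with the Stirling expansion of $x^m$. The only thing to be mildly careful about is the summation bounds: the condition $0 \leq j \leq i \leq d$ must be compatible with applying $\sum_{j=0}^i {i\brace j} x^{\underline{j}} = x^i$, and this is indeed the case since ${i \brace j} = 0$ whenever $j > i$, so one could equally well write $g$ with $j$ ranging over $0,\dots,d$ without changing its value. Finally, $g \in C[x]$ because each $x^{\underline{j+1}}/(j+1)$ is a polynomial over any characteristic-zero field $C$.
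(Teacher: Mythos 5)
Your proof is correct and is essentially the same argument as the paper's: both rely on $\Delta(x^{\underline{j+1}}/(j+1))=x^{\underline{j}}$, linearity of $\Delta$, and the Stirling expansion $x^i=\sum_j{i\brace j}x^{\underline{j}}$, with the paper merely reading the chain of equalities in the direction from $f$ to $\Delta(g)$ rather than from $\Delta(g)$ to $f$.
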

\begin{proof}
Since $\Delta(x^{\underline{m}})=m x^{\underline{m-1}}$, we have $x^{\underline{m}}=\Delta\left(\frac{1}{m+1}x^{\underline{m+1}}\right)$ for $m\geq 0$. Hence
\begin{align*}
f &= \sum_{i=0}^d a_i x^i = \sum_{i=0}^d a_i \left( \sum_{j=0}^i \left\{ i \atop j \right\} x^{\underline{j}} \right) = \sum_{i=0}^d \sum_{j=0}^i a_i \left\{ i \atop j \right\} \Delta \left( \frac{1}{j+1} x^{\underline{j+1}} \right) \\
&= \Delta \left( \sum_{i=0}^d \sum_{j=0}^i a_i \left\{ i \atop j \right\} \frac{1}{j+1} x^{\underline{j+1}} \right).
\end{align*}
This completes the proof.
\end{proof}

\begin{ex} By the previous formula, we get
\[\sum_{k=0}^{n-1}k^2=\sum_{k=0}^{n-1}\Delta\left(\frac{k^{\underline{3}}}{3}+\frac{k^{\underline{2}}}{2}\right) \quad \text{and} \quad 
\sum_{k=0}^{n-1}k^3=\sum_{k=0}^{n-1}\Delta\left(\frac{k^{\underline{2}}}{2}+k^{\underline{3}}+\frac{k^{\underline{4}}}{4} \right).
\]
The telescoping relation $f(k)=g(k+1)-g(k)$ leads to the discrete Newton-Leibniz formula 
\[ \sum_{k=0}^{n-1}f(k)=g(n)-g(0).\] 
Therefore, the definite sums can be evaluated as
\[\sum_{k=0}^{n-1}k^2=\frac{n^{\underline{3}}}{3}+\frac{n^{\underline{2}}}{2}=\frac{n(n-1)(2n-1)}{6}\quad \text{and} \quad 
\sum_{k=0}^{n-1}k^3=\frac{n^{\underline{2}}}{2}+n^{\underline{3}}+\frac{n^{\underline{4}}}{4} = \frac{n^2(n-1)^2}{4}.
\]
\end{ex}

We next recall a method in~\cite{Abramov1971} for solving the indefinite summation problem on rational functions.  For a polynomial $P\in C[x]\setminus C$, the \emph{dispersion} of $P$, denoted by $\text{disp}(P)$, is defined as
\[ \max\left\{i\in \mathbb{Z} \mid \gcd(P(x),P(x+i)\neq 1\right\}. \] 
For example, if $P=x(x+3)(x-\sqrt{2})(x+\sqrt{2})$, then $\text{disp}_{\sigma}(P)=3$.
A polynomial $P$ is called \emph{shift-free} if $\text{disp}(P)=0$.

\begin{lem}
Let $f=\frac{a}{b}\in C(x)\setminus \{ 0\}$ with $b\notin C$ and $\text{gcd}(a,b)=1$. Write $\Delta(f)=\frac{P}{Q}$ with $\gcd(P,Q)=1$. Then $ \text{disp}(Q)=\text{disp}(b)+1$.
\end{lem}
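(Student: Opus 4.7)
The plan is to compute $\Delta(f)$ as a single fraction
\[
\Delta(f) = \frac{a(x+1)b(x) - a(x)b(x+1)}{b(x)b(x+1)} =: \frac{N}{D},
\]
so that $Q = D/\gcd(N,D)$, and then determine which irreducible factors of the raw denominator $D = b(x)b(x+1)$ actually survive in $Q$. The target equality will follow from matching inequalities.

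For $\text{disp}(Q)\le\text{disp}(b)+1$: every irreducible factor of $Q$ divides $D$, and therefore has the shape $r(x+\varepsilon)$ with $r$ an irreducible factor of $b$ and $\varepsilon\in\{0,1\}$ (the case $\varepsilon=1$ coming from $b(x+1)$). If two such factors $r_1(x+\varepsilon_1)$ and $r_2(x+\varepsilon_2)$ of $Q$ satisfy $r_2(x+\varepsilon_2) = r_1(x+\varepsilon_1-k)$ for some $k\ge 1$, then $r_1$ and $r_2$ are irreducible factors of $b$ related by the shift $\varepsilon_1-\varepsilon_2-k$, whose absolute value is at least $k-1$. Hence $\text{disp}(b)\ge k-1$, proving $\text{disp}(Q)\le\text{disp}(b)+1$.

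For the reverse inequality, set $d := \text{disp}(b)$ and, after picking an irreducible $p\in C[x]$ realising the dispersion and translating by an integer if necessary, arrange that $p(x)\mid b$, $p(x+d)\mid b$, $p(x-1)\nmid b$, and $p(x+d+1)\nmid b$. Consider the two extremal candidate factors $p(x)$, dividing $b(x)$, and $p(x+d+1)$, dividing $b(x+1)$ (because $p(x+d)\mid b$); they differ by the shift $d+1$, and I claim both survive in $Q$. Modulo $p(x)$ the first summand of $N$ vanishes while the second reduces to $-a(x)b(x+1)$, which is nonzero because $p(x)\nmid a(x)$ by $\gcd(a,b)=1$, and $p(x)\nmid b(x+1)$ (otherwise $p(x-1)\mid b$, contrary to the arrangement). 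Symmetrically, modulo $p(x+d+1)$ the second summand vanishes while the first reduces to $a(x+1)b(x)$, which is nonzero since $p(x+d+1)\nmid b(x)$ by the maximality of $d$ and $p(x+d+1)\nmid a(x+1)$ (else $p(x+d)\mid a$, contradicting $\gcd(a,b)=1$ together with $p(x+d)\mid b$). Thus $p(x)$ and $p(x+d+1)$ are both irreducible factors of $Q$, so $\gcd(Q(x),Q(x+d+1))\neq 1$ and $\text{disp}(Q)\ge d+1$.

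The main delicate point is the shift bookkeeping in the lower bound: one must select the representative $p$ of a shift class of maximal span inside $b$ so that $p$ sits at the lower boundary of that span, and then verify explicitly that neither $p(x)$ nor its one-shift extension $p(x+d+1)$ is absorbed by the numerator $N$. The guiding picture is that multiplication by $b(x+1)$ enlarges the shift range of each irreducible class from $[0,d]$ to $[0,d+1]$, and the two new extremal shifts $0$ and $d+1$ are precisely the ones protected from cancellation, thanks to the combination of coprimality of $a,b$ and the maximality of $d$.
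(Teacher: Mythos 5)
Your proof is correct and follows essentially the same route as the paper's: write $\Delta(f)$ over the common denominator $b(x)b(x+1)$, bound the dispersion of that product from above by $d+1$, and exhibit two factors at distance $d+1$ that survive cancellation because $\gcd(a,b)=1$. The only differences are presentational: you argue with irreducible factors over $C$ where the paper uses roots $\beta\in\overline{C}$, and you spell out the extremal normalization ($p(x-1)\nmid b$ and $p(x+d+1)\nmid b$) that the paper's choice of $\beta$ leaves implicit when it asserts that the numerator does not vanish at $\beta-1$ and $\beta+d$.
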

\begin{proof}
Let $d=\text{disp}(b)$. Then there is a root $\beta$ of $b$ such that $b(\beta)=b(\beta+d)=0$. We have 
\[\Delta(f)=\frac{a(x+1)}{b(x+1)}-\frac{a(x)}{b(x)}=\frac{a(x+1)b(x)-a(x)b(x+1)}{b(x)b(x+1)}.\] 
Notice that $\beta-1$ and $\beta+d$ are not the roots of the numerator $a(x+1)b(x)-a(x)b(x+1)$ but $b(x)b(x+1)$ vanishes at these two points.
So we have  $\text{disp}(Q)\geq d+1$. But $\text{disp}(b(x)b(x+1))\leq d+1$ by definition. Therefore, we get $\text{disp}(Q)=d+1$.
\end{proof}
Since the dispersion of a shift-free polynomial is zero, the above lemma leads to the following summability criterion.
\begin{cor}
Let $f=a/b$ be such that $a,b\in C[x]$ with $\deg(a)<\deg(b)$, $\gcd(a,b)=1$ and $b$ shift-free. Then $f=\Delta(g)$ for some $g\in C(x)$ if and only if $a=0$.
\end{cor}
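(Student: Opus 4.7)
The sufficiency direction is immediate: if $a=0$ then $f=0=\Delta(0)$, so the real content is necessity. My plan is to assume $f=\Delta(g)$ for some $g\in C(x)$ with $a\neq0$ and use the preceding dispersion lemma to derive a contradiction by comparing the denominators of $f$ and $\Delta(g)$ in lowest terms.

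First I would normalize $g$ by writing it as $g=p+c/d$ where $p\in C[x]$ and $c/d$ is a proper rational function with $\gcd(c,d)=1$ and $d$ monic. Because $\Delta$ is linear, $\Delta(g)=\Delta(p)+\Delta(c/d)$, where $\Delta(p)\in C[x]$ contributes only a polynomial. I would then split into two cases depending on whether $d\in C$ or $d\notin C$. If $d\in C$, then $g$ itself is a polynomial and $\Delta(g)$ is a polynomial; but since $f=a/b$ is a proper fraction in lowest terms with $\deg(a)<\deg(b)$, the only way $f$ can equal a polynomial is $a=0$, contradicting our assumption.

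In the main case $d\notin C$, I would invoke the previous lemma applied to $c/d$: writing $\Delta(c/d)=P/Q$ in lowest terms, we get $\mathrm{disp}(Q)=\mathrm{disp}(d)+1\geq1$. Combining with the polynomial part,
\begin{equation*}
\Delta(g)=\Delta(p)+\frac{P}{Q}=\frac{Q\,\Delta(p)+P}{Q}.
\end{equation*}
The key technical step is to verify that this final fraction is already in lowest terms. Since $\gcd(P,Q)=1$, we have $Q\,\Delta(p)+P\equiv P\pmod{Q}$, so any common factor of numerator and $Q$ would divide $P$, and hence would be a unit. Therefore the reduced denominator of $\Delta(g)$ is (up to a unit in $C^\times$) exactly $Q$, which has dispersion at least $1$.

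On the other hand, the reduced denominator of $f$ is $b$, which is shift-free, i.e., has dispersion $0$. By uniqueness of the lowest-terms representation (up to a scalar), $Q$ and $b$ must be associates, giving $\mathrm{disp}(b)\geq 1$, contradicting the shift-free assumption. Hence $a=0$. The main obstacle, and the one that requires a little care, is precisely the reduction argument showing that $Q\,\Delta(p)+P$ is coprime to $Q$; everything else is either a direct invocation of the dispersion lemma or a straightforward degree/partial-fraction observation.
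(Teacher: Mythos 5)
Your proof is correct and follows exactly the route the paper intends: the paper derives this corollary directly from the preceding dispersion lemma (``the dispersion of a shift-free polynomial is zero, [so] the above lemma leads to the following summability criterion''), and your argument simply fills in the details of that derivation, including the careful check that $Q\,\Delta(p)+P$ stays coprime to $Q$ and the degenerate case where $g$ is a polynomial. Nothing to object to.
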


Similar to rational integration, we now show that any rational function $f\in C(x)$ can be decomposed as
\[f=\Delta(g)+r,
\]
where $g\in C(x)$ and $r=\frac{a}{b}$ satisfies that $\deg(a)<\deg(b)$, $\gcd(a,b)=1$ and $b$ shift-free.
We can achieve this goal by using Abramov's reduction that can be viewed as a discrete analogue of the Ostrogradsky--Hermite reduction. 
Let $\sigma$ denote the shift operator defined by $\sigma(f(x)) = f(x+1)$ for any $f\in C(x)$.
The main idea of Abramov's reduction is based on the following reduction formula
\begin{align*}
\frac{a}{\sigma^m(b)}&=\frac{a}{\sigma^m(b)}-\frac{\sigma^{-1}(a)}{\sigma^{m-1}(b)}+\frac{\sigma^{-1}(a)}{\sigma^{m-1}(b)}\\
&=\Delta\left(\frac{\sigma^{-1}(a)}{\sigma^{m-1}(b)}\right)+\frac{\sigma^{-1}(a)}{\sigma^{m-1}(b)}\\
&=\Delta(g)+\frac{\sigma^{-m}(a)}{b},\quad  \text{where $g = \frac{\sigma^{-1}(a)}{\sigma^{m-1}(b)}+\frac{\sigma^{-2}(a)}{\sigma^{m-2}(b)}+\dots+\frac{\sigma^{-m}(a)}{b}$.}
\end{align*}
For any rational funciton $f\in C(x)$, we can decompose it as 
\[
f=p+\sum_{i=1}^n\sum_{j=1}^{m_i}\sum_{\ell=0}^{t_{i,j}} \frac{a_{i,j, \ell}}{\sigma^{\ell}(b_{i}^j)}
\]
where $p, a_{i, j, \ell}, b_{i} \in C[x]$  satisfy that $\deg(a_{i, j, \ell})< \deg(b_i)$ and $b_i$ are irreducible polynomials such that $\sigma^{k}(b_i)\neq b_j$ for any $i\neq j$ and $k\in \bZ$.  
Now applying the above reduction formula to each simple fraction $a_{i, j, \ell}/\sigma^{\ell}(b_{i}^j)$ and the summation formula to polynomials yields
\[f = \Delta(g) + \sum_{i=1}^{n}\sum_{j=1}^{m_i} \frac{\tilde{a}_{i, j}}{b_i^j} = \Delta(g) + \frac{a}{b},\]
where $\deg(a)<\deg(b)$, $\gcd(a,b)=1$ and $b$ is shift-free.
We can get the above additive decomposition by using the greatest factorial factorization introduced by Paule in~\cite{Paule1995b} which avoids
the irreducible polynomial factorization.


We now briefly recall the classical techniques for hypergeometric summation which can be used to prove combinatorial identities involving binomial coefficients.  
An algorithmic way for hypergeometric summation is based on Gosper's algorithm which will be explained in Section~\ref{sec:Gosper}.

For nonnegative integers $n$ and $k$, the binomial coefficient is defined as 
\[\binom{n}{k}=\frac{n!}{k!(n-k)!},
\]
which can be interpreted as the number of ways to choose $k$ apples from a set of $n$ apples. Some basic identities on binomial coefficients are listed as follows.
\begin{gather}
\binom{n}{k}=\binom{n}{n-k}.\\
\binom{n}{k}=\frac{n}{k}\binom{n-1}{k-1}, \; k\neq 0.\\
\binom{r}{m}\binom{m}{k}=\binom{r}{k}\binom{r-k}{m-k}.\\
\binom{n}{k}=\binom{n-1}{k}+\binom{n-1}{k-1}.\\
\sum_{k=0}^n\binom{k}{m}=\binom{0}{m}+\binom{1}{m}+\dots+\binom{n}{m}=\binom{n+1}{m+1}.\\
\sum_{k=0}^n \binom{n}{k} = 2^n.\\
\sum_{k=0}^n(-1)^k\binom{n}{k}=0.\\ \label{EQ:bi9}
\sum_{k=0}^n\binom{r+k}{k}=\binom{r+n+1}{n}.\\ 
\sum_k\binom{r}{m+k}\binom{s}{n-k}=\binom{r+s}{m+n}.
\end{gather}
The last identity is called Chu-Vandermonde's identity. A combinatorial proof of this identity is as follows. 
Firstly, by changing the variables $k\rightarrow k-m\; , n\rightarrow n-m$, the original identity becomes
\[
\sum_{k}\binom{r}{k}\binom{s}{n-k} = \binom{r+s}{n},
\]
The right-hand side represents the number of ways to choose $n$ people from
$r$ men and $s$ women. On the left-hand side, each term in the sum represents the number of ways to choose $k$ men from $r$ men 
and $n-k$ women from $s$ women. Summing these terms for $k$ from $0$ to $r$ coincides with the right-hand side. Many combinatorial identities 
can be proved by using the above list of basic identities. Let us consider the following identity
\[\sum_{k=0}^m \frac{\binom{m}{k}}{\binom{n}{k}}=\frac{n+1}{n+1-m}.
\]
By a direct calculation, we have the identity
\[\frac{\binom{m}{k}}{\binom{n}{k}} = \frac{\binom{n-k}{m-k}}{\binom{n}{m}}.\]
Then we have
\[
\sum_{k=0}^m \frac{\binom{m}{k}}{\binom{n}{k}} = \sum_{k=0}^m \frac{\binom{n-k}{m-k}}{\binom{n}{m}} = \frac{\sum_{k=0}^{m} \binom{n-k}{m-k}}{\binom{n}{m}}= \frac{\sum_{k=0}^{m} \binom{n-m+k}{k}}{\binom{n}{m}} = \frac{\binom{n+1}{m}}{\binom{n}{m}}= \frac{n+1}{n+1-m},
\]
where the identity~\eqref{EQ:bi9} is used. More difficult combinatorial identities will be proved by Zeilberger's method of creative telescoping in Section~\ref{sec:ct}, for instance
\[\sum_{k=0}^n\binom{2n-2k}{n-k}\binom{2k}{k}=4^n,
\]
and we can also show that the sum 
\[T_n := \sum_{k=0}^n\binom{2n-2k}{n-k}^2\binom{2k}{k}^2\] satisfies the recurrence relation
\[n^3  T_n=16(n-1/2)(2 n^2-2 n+1)T_{n-1}-256(n-1)^3 T_{n-2},
\]
which is crucial in the proof of the irrationality of $\zeta(3)$.

\section{Sister Celine's Method}\label{sec:Celine}

Before the availability of symbolic summation algorithms, simplifying binomial sums was so difficult
that people found it hard to imagine that computers could offer any reasonable support on the matter. 
For example, the first edition of The Art of Computer Programming~\cite{knuth68} includes the following exercise:
\begin{quote}
 \textbf{63.} [46] Develop computer programs for simplifying sums that involve binomial coefficients.
\end{quote}
Here 63 is the number of the exercise and 46 is an estimate of the hardness of the task. It is ranked
as a difficult research problem. Meanwhile, we can consider this research problem as being solved, and
consequently, the exercise has been removed from later editions of the book.

The book A=B by Petkov\v sek, Wilf, and Zeilberger~\cite{PWZbook1996} can be read as a solution to the exercise.
It is still a good starting point for getting into the theory. Chapter~4 of A=B is about a summation
algorithm known as Sister Celine's method, named after Sister Mary Celine Fasenmyer, who invented the
method in the 1940s~\cite{Fasenmyer1947,Fasenmyer1949}. The basic strategy of her approach for proving binomial summation
identities is as follows.
\begin{enumerate}
\item Given a sum $S(n):=\sum_k f(n,k)$ construct a linear recurrence with polynomial coefficients
for it, like
\[
    p_0(n)S(n) + p_1(n)S(n+1) + \cdots + p_r(n)S(n+r) = 0    
\]
\item Check whether the conjectured closed form satisfies the recurrence.
\item Check whether the conjectured identity is true for the first few values of~$n$.
\item Conclude that the identity is true for all~$n$.  
\end{enumerate}
The crucial step is the first, but before we discuss this step more closely, let us see the
other steps in an example.

\begin{ex}
\begin{enumerate}
\item Suppose we want to prove $\sum_k\binom nk=2^n$ ($n\geq0$). In step~1, we somehow obtain
  the recurrence $2S(n)-S(n+1)=0$. In step~2, we check that $2\cdot 2^n-2^{n+1}=0$ is indeed true
  for all~$n$. In step~3, we check that $S(0)=1=2^0$, $S(1)=1+1=2=2^1$ and $S(2)=1+2+1=4=2^2$.
  As the initial values match, the proof is complete. (Actually, it would have been sufficient
  to check a single initial value.)
\item Now suppose we want to prove $\sum_k(-1)^k\binom{2n}{k+n}^2=\frac{(2n)!}{n!^2}$ ($n\geq0$).
  Here we obtain the recurrence
  \begin{alignat*}1
    &16 (n+1) (2 n+1) (4 n+7) S(n)\\
    &\quad-2 (4 n+5) (8 n^2+20 n+11) S(n+1)\\
    &\qquad+(n+2) (2 n+3) (4 n+3) S(n+2)=0
  \end{alignat*}
  in step~1. In order to check that the conjectured closed form $\frac{(2n)!}{n!^2}$ also
  satisfies this recurrence, plug this term into the equation and then divide the whole
  equation by $\frac{(2n)!}{n!^2}$. Taking into account that 
  \[
   \frac{\frac{(2n)!}{n!^2}}{\frac{(2n)!}{n!^2}}=1,\quad
   \frac{\frac{(2n+2)!}{(n+1)!^2}}{\frac{(2n)!}{n!^2}}=\frac{2(2n+1)}{n+1},\quad
   \frac{\frac{(2n+4)!}{(n+2)!^2}}{\frac{(2n)!}{n!^2}}=\frac{4(2n+1)(2n+3)}{(n+2)(n+1)}
  \]
  are rational functions in~$n$, checking the equation amounts to simplifying a rational
  expression to zero. This is easy. Checking the initial values in step~3 is also easy,
  so the proof is again complete. 
\end{enumerate}
\end{ex}

We restrict the attention to identities $\sum_k f(n,k)=F(n)$ whose right-hand sides $F(n)$
are such that we can proceed like in the example above.

\begin{defi}\label{def:hg1}
 A function $F(n)$ is called a \emph{hypergeometric term} if there is a rational function
 $u(n)$ such that $F(n+1)/F(n)=u(n)$ for (almost) all~$n$.
\end{defi}

Typical examples are polynomials, rational functions, exponentials (e.g., $(-1)^n$ or $555^n$),
the factorial $n!$, and products and quotients of hypergeometric terms. If $F(n)$ is a
hypergeometric term, then so is $F(an+b)$ for every fixed $a,b\in\set N$.

A rational function $u$ with no roots or poles in the integers together with an initial value
$F(0)$ uniquely determines the hypergeometric term $F(n)=\prod_{k=0}^{n-1}u(k)$. 

The definition extends to bivariate functions as follows. We will consider
sums whose summand is a hypergeometric term.

\begin{defi}\label{def:hg2}
 A function $f(n,k)$ is called a \emph{hypergeometric term} if there are rational functions
 $u(n,k)$ and $v(n,k)$ such that
 \[
   \frac{f(n+1,k)}{f(n,k)}=u(n,k)
   \quad\text{and}\quad
   \frac{f(n,k+1)}{f(n,k)}=v(n,k)
   \quad\text{for (almost) all $n,k$.}
 \]
\end{defi}

Examples include again polynomials, rational functions, exponentials, the binomial coefficient
$\binom nk$, and products and quotients of these.

\begin{ex}
  Let us find out which of the following four expressions are bivariate
  hypergeometric terms:
  \[
    \text{(i) }\ 3^{nk+1},\quad 
    \text{(ii) }\ \frac{\Gamma(n+3k-\pi)}{\Gamma(2n-k+\frac12)},\quad 
    \text{(iii) }\ n^k,\quad 
    \text{(iv) }\ \binom{dn}{k}. 
  \]
  It is clear that (i) is not hypergeometric, because either shift quotient
  is not a rational function but involves $3^n$ resp.~$3^k$. Item (iii)
  is hypergeometric w.r.t.~$k$ but not w.r.t.~$n$. The binomial coefficient
  in (iv) is not hypergeometric for symbolic~$d$, but it is bivariate
  hypergeometric when $d$ is chosen to be a concrete integer. The only
  example that is bivariate hypergeometric without any restrictions is~(ii).
\end{ex}

If $u$ and $v$ have no roots or poles in $\set Z^2$, then $f$ is uniquely determined by $u$ and $v$
and the value $f(0,0)$. Typically however, $u$ and $v$ do have roots or poles. In this case, manual
inspection may be required to check the results of a ``formal'' computation.

Not every pair $(u,v)$ of rational functions gives rise to a consistent definition of a bivariate
hypergeometric term. A necessary condition is that $u(n,k+1)v(n,k)=u(n,k)v(n+1,k)$. This is because
we may rewrite $f(n+1,k+1)$ to $u(n,k+1)f(n,k+1)$ and then further to $u(n,k+1)v(n,k)f(n,k)$, or alternatively
to $v(n+1,k)f(n+1,k)$ and then further to $v(n+1,k)u(n,k)f(n,k)$, and the results must match.

\medskip
Given a hypergeometric term $f(n,k)$, our task is to find a recurrence for the sum $\sum_k f(n,k)$.
The idea is to find a recurrence for the summand $f(n,k)$ which can be translated into a recurrence
for the sum. For simplicity, let us assume that $f(n,k)$ is such that for every $n$ there are only
finitely many integers $k$ such that $f(n,k)$ is nonzero (``natural boundaries''). Then we do not
need to worry too much about the boundaries of sums. 

\begin{ex}
\begin{enumerate}
\item For $f(n,k)=\binom nk$ we have the Pascal triangle recurrence
 \[
   f(n+1,k+1)-f(n,k)-f(n,k+1)=0.
 \]
 Summing this equation over all $k$ yields
 \[
   \underbrace{\sum_k f(n+1,k+1)}_{=S(n+1)} - \underbrace{\sum_k f(n,k)}_{=S(n)} - \underbrace{\sum_k f(n,k+1)}_{=S(n)} = 0,
 \]
 which simplifies to $S(n+1)-2S(n)=0$.
\item For $f(n,k)=(-1)^k\binom{2n}{n+k}^2$, we have the following recurrence:
  \begin{alignat*}1
    &(n+1) (2 n+1) (4 n+7) f(n,k)\\
    &+4 (n+1) (2 n+1) (4 n+7) f(n,k{+}1)
    +(4 n+5) (4 n^2+10 n+5) f(n{+}1,k{+}1)\\
    &+6 (n+1) (2 n+1) (4 n+7) f(n,k{+}2)
    -4 (4 n+5) (6 n^2+15 n+8) f(n{+}1,k{+}2)\\
    &+4 (n+1) (2 n+1) (4 n+7) f(n,k{+}3)
    +(4 n+5) (4 n^2+10 n+5) f(n{+}1,k{+}3)\\
    &+(n+1) (2 n+1) (4 n+7) f(n,k{+}4)\\
    &+(n+2) (2 n+3) (4 n+3) f(n{+}2,k{+}2)=0.
 \end{alignat*}
 This is more complicated than Pascal's recurrence, but note that we can nevertheless proceed as before:
 by summing the equation over all~$k$ we obtain (after a little bit of simplification), the recurrence 
  \begin{alignat*}1
    &16 (n+1) (2 n+1) (4 n+7) S(n)\\
    &\quad-2 (4 n+5) (8 n^2+20 n+11) S(n+1)\\
    &\qquad+(n+2) (2 n+3) (4 n+3) S(n+2)=0
  \end{alignat*}
  for the sum $S(n)=\sum_k f(n,k)$.
\end{enumerate}
\end{ex}

The summand $f(n,k)$ is specified in terms of the rational functions $u$ and~$v$, but the recurrence equations
\[
  f(n+1,k)=u(n,k)f(n,k)
  \quad\text{and}\quad
  f(n,k+1)=v(n,k)f(n,k)
\]
in general do not admit a direct translation to a recurrence for the sum, because $u$ and $v$ involve the
summation variable~$k$. A crucial feature of the recurrences in the example above was that their coefficients
are polynomials in $n$ only. This allowed us to rewrite, for example, $\sum_k (n+1)(2n+1)(4n+7)f(n,k)$ to
$(n+1)(2n+1)(4n+7)\sum_k f(n,k)$. Our goal is to construct such a ``\emph{$k$-free recurrence}'' for a given
hypergeometric term~$f(n,k)$. This can be done with linear algebra.

Suppose we seek a recurrence of the following form:
\begin{alignat*}1
&  a_{0,0}(n)f(n,k) + a_{1,0}(n)f(n+1,k)\\
&+ a_{0,1}(n)f(n,k+1) + a_{1,1}(n)f(n+1,k+1)\\
&+ a_{0,2}(n)f(n,k+2) + a_{1,2}(n)f(n+1,k+2) = 0.
\end{alignat*}
Replace $f(n+1,k+2)$ by $v(n+1,k+1)f(n+1,k+1)$, then $f(n+1,k+1)$ by $v(n+1,k)f(n+1,k)$, then $f(n+1,k)$
by $u(n,k)f(n,k)$, and so on, to obtain
\begin{alignat*}1
& \bigl( a_{0,0}(n) + a_{1,0}(n)u(n,k) + a_{0,1}(n)v(n,k)\\
& + a_{1,1}(n)u(n,k)v(n+1,k)+ a_{0,2}(n)v(n,k)v(n,k+1)\\
& + a_{1,2}(n)u(n,k)v(n+1,k)v(n+1,k+1) \bigr)f(n,k) = 0.
\end{alignat*}
After dividing by~$f(n,k)$, the left-hand side is an explicit rational function in $n$ and $k$ whose numerator
depends linearly on the unknown coefficients~$a_{i,j}(n)$. We now equate the coefficients with respect to $k$
to zero and solve the resulting linear system. 

\begin{ex}
 For $f(n,k)=\binom nk$, let us search for a $k$-free recurrence of the form
 \[
  a_{0,0}(n)\binom nk + a_{1,0}(n)\binom{n+1}k + a_{0,1}(n)\binom n{k+1} + a_{1,1}(n)\binom{n+1}{k+1} = 0
 \]
 Dividing by $\binom nk$ and working out the quotients leads to
 \[
  a_{0,0}(n) + a_{1,0}(n)\frac{n+1}{n+1-k} + a_{0,1}(n)\frac{n-k}{k+1} + a_{1,1}(n)\frac{n+1}{k+1} = 0
 \]
 Bringing the left-hand side on a common denominator and equating the coefficients of the numerator with
 respect to powers of $k$ to zero leads to the linear system
 \[
  \begin{pmatrix}
    -n-1 & -n (n+1) & -n-1 & -(n+1)^2 \\
    -n & 2 n+1 & -n-1 & n+1 \\
    1 & -1 & 0 & 0 \\
  \end{pmatrix}\begin{pmatrix}
    a_{0,0}\\a_{0,1}\\a_{1,0}\\a_{1,1}
  \end{pmatrix}=0
 \]
 The solution space of this sytem is generated by the vector $(-1,-1,0,1)$, which amounts to Pascal's
 recurrence $-f(n,k)-f(n,k+1)+f(n+1,k+1)=0$.  
\end{ex}

In summary, Sister Celine's method works as follows.

\begin{alg}[Sister Celine's method]

\noindent INPUT: a hypergeometric term $f(n,k)$, specified by the rational functions $u$ and $v$ as in Def.~\ref{def:hg2}

\noindent OUTPUT: a recurrence for the sum $S(n)=\sum_k f(n,k)$

\begin{enumerate}
\item Choose $r,s\in\set N$.
\item Use linear algebra to search for a $k$-free recurrence of $f(n,k)$ of order $r$ with respect to $n$
 and order $s$ with respect to~$k$.
\item If there is one, translate it to a recurrence for $S(n)$ and return it.
\item Otherwise, increase $r$ and $s$ and try again. 
\end{enumerate}
\end{alg}

The question is now whether this method always succeeds. The answer is no. There are two ways how it can fail:
a) if a certain hypergeometric term $f(n,k)$ does not satisfy any $k$-free recurrence, so that the method
keeps increasing $r$ and $s$ forever without terminating, and
b) if a $k$-free recurrence for $f(n,k)$ is found but for some reason it cannot be translated into a
nontrivial recurrence for $S(n)$.

Fortunately, there are satisfactory fixes for both issues. The first issue can be repaired by slightly
restricting the scope of the method. It can be shown that every hypergeometric term $f(n,k)$ can be written
in the form
\[
  q(n,k)\phi^n\psi^k \prod_{m=1}^M (a_mn+b_mk+c_m)!^{e_m}
\]
for some rational function~$q$, some constants $\phi,\psi,c_m$, and some integers $a_m,b_m,e_m$. (It is not
difficult to check that every such expression gives rise to a hypergeometric term, but the converse we
claim here is not trivial.) If we define that $f(n,k)$ is called \emph{proper hypergeometric} if it can be
written as above, but with $q$ being a polynomial, then it can be shown that every proper hypergeometric
term satisfies a $k$-free recurrence. A proof can be found in~\cite[Thm. 4.4.1]{PWZbook1996} or in~\cite[Thm. 5.14]{Kauers2023}.
Applied to a proper hypergeometric term, Sister Celine's method will not run into an endless loop.

As an example for the second issue, observe that also
\[
 f(n,k)-f(n+1,k+1)-f(n,k+2)+f(n+1,k+2)=0
\]
is a $k$-free recurrence for $f(n,k)=\binom nk$, but if we try to translate it to a recurrence for the sum
$S(n)=\sum_k\binom nk$, we obtain $0=0$, which is correct but not very useful. 
To make it clear what is going on here, it is best to formulate things in terms of operators.
Write $S_x$ for the operator that maps $f(x)$ to $f(x+1)$ and write $x$ for the operator
that maps $f(x)$ to $x f(x)$. Also write $\Delta_x:=S_x-1$ for the forward shift operator,
which maps $f(x)$ to $f(x+1)-f(x)$. 

Note that we have the relations $S_xx=(x+1)S_x$, $x\Delta_x=\Delta_xx-1$, and $\sum_x \Delta_x=0$.
In view of the latter, we write the $k$-free recurrence in the form
\[
   \bigl(P(n,S_n) + \Delta_k Q(n,\Delta_k,S_n)\bigr)\cdot f(n,k)=0.
\]
Then we see that $P(n,S_n)$ annihilates the sum $\sum_k f(n,k)$. We are in trouble if this is zero,
i.e., if the operator corresponding to the $k$-free recurrence happens to have $\Delta_k$ as a left
factor. If this is the case, i.e., if we have 
\[
  \Delta_k Q(n,\Delta_k,S_n)\cdot f(n,k)=0
\]
for some operator~$Q$, then we can multiply this equation (from the left) with $k$ and apply the
commutation rule $x\Delta_x=\Delta_xx-1$ to obtain
\[
 \bigl(-Q(n,\Delta_k,S_n) + \Delta_k k Q(n,\Delta_k,S_n)\bigr)\cdot f(n,k)=0.
\]
We then write $-Q(n,\Delta_k,S_n)$ as $\tilde P(n,S_n)+\Delta_k\tilde Q(n,\Delta_k,S_n)$, and repeat
the process if $\tilde P(n,S_n)$ is again the zero operator. After at most $\deg_{\Delta_k}Q$
repetitions, we are guaranteed to get a nontrivial result (Wegschaider's lemma, \cite[Lemma~5.19]{Kauers2023}).

At first glance, multiplying a $k$-free recurrence with $k$ seems counterproductive. Indeed, after
performing this multiplication, the recurrence is no longer $k$-free. Instead, at the end of the
process we have a recurrence which can be written in the form
\[
   \bigl(P(n,S_n) + \Delta_k Q(n,k,\Delta_k,S_n)\bigr)\cdot f(n,k)=0,
\]
where $P$ is a nonzero $k$-free operator and $Q$ is some other operator that may or may not involve
$k$ and may or may not be the zero operator. This is good enough, because a presence of $k$ in $Q$
does not prevent us from concluding that $P(n,S_n)$ is an annihilating operator for the sum $\sum_k f(n,k)$.

\section{Implementation of Sister Celine's Method}

We have just learned that Sister Celine's method was the first
deterministic method to derive a recurrence equation satisfied by
a hypergeometric sum. As we will see later, it is usually not the
most efficient one (in terms of computation time and size of the
output), but it is very easy to implement (hence, most efficient in
terms of implementation time). In this section, we want to
demonstrate how this can be done. We use the computer algebra
system \emph{Mathematica} for this purpose.

In addition to the two summation identities from the previous section,
here are some more examples that can be treated with Sister Celine's
method and that we may use as test cases:
\begin{align*}
  & \sum_{k=0}^n \binom{n}{k}^{\!2} = \binom{2n}{n}
  \\
  & \sum_{k=0}^n \binom{n}{k}^{\!2} \binom{n+k}{k}^{\!2}
  \quad\leadsto\text{ second-order recurrence}
  \\
  & \sum_{k} (-1)^k \binom{l+m}{l+k} \binom{m+n}{m+k} \binom{n+l}{n+k} =
  \frac{(l+m+n)!}{l! \, m! \, n!}
\end{align*}

For our implementation, we assume that the hypergeometric summand
expression is called~$f(n,k)$, where $k$ is the summation variable
and $n$ is the free parameter. Also we assume that the orders
$r$ and~$s$ are prescribed by the user, so that we need not implement
a loop for increasing the orders. Before starting with the implementation,
we recall the main steps of Sister Celine's method:
\begin{enumerate}
\item Choose $r,s\in\set N$ (order in~$n$, order in~$k$).
\item\label{step:ansatz} Make an ansatz for a $k$-free recurrence:
  $\sum_{i=0}^r\sum_{j=0}^s c_{i,j}(n) \cdot f(n+i,k+j)$.
\item\label{step:divide} Divide the ansatz by $f(n,k)$ and simplify.
\item\label{step:denom} Multiply the ansatz by its common denominator.
\item\label{step:coeffcomp} Perform coefficient comparison with respect to~$k$.
\item\label{step:solve} Solve the linear system for the unknowns $c_{i,j}(n)$.
\item\label{step:sum} Sum over the $k$-free recurrences and return the result.
\end{enumerate}

\noindent
According to Step~\ref{step:ansatz},
we start by writing down the ansatz
\begin{verbatim}
ansatz = Sum[c[i, j] * (f /. {n -> n + i, k -> k + j}),
  {i, 0, r}, {j, 0, s}]
\end{verbatim}
Using, for example, $f(n,k)=\binom{n}{k}$ and $r=s=1$, as we did in the
previous section in order to derive Pascal's identity, we obtain
\begin{multline*}
  c[0, 0]\,\text{Binomial}[n, k] + c[0, 1]\,\text{Binomial}[n, 1 + k] \\
  + c[1, 0]\,\text{Binomial}[1 + n, k] + c[1, 1]\,\text{Binomial}[1 + n, 1 + k].
\end{multline*}
Now we have to divide this ansatz by~$f(n,k)$ and simplify (Step~\ref{step:divide}).
The best way
to force the system to simplify the shift quotient of a hypergeometric
term into a rational function is by using the command \texttt{FunctionExpand}.
Naively, this could be done as follows:
\begin{verbatim}
ansatz = FunctionExpand[Sum[
  c[i, j] * (f /. {n -> n + i, k -> k + j}),
  {i, 0, r}, {j, 0, s}] / f]
\end{verbatim}
If we try this with the same input as above, we correctly obtain
\[
  (-k + n) \left( \frac{c[0, 0]}{-k + n} + \frac{c[0, 1]}{1 + k}
    + \frac{(1 + n)\,c[1, 0]}{(-1 + k - n) (k - n)}
    + \frac{(1 + n)\,c[1, 1]}{(1 + k) (-k + n)} \right).
\]
However, we claim that this is not the most clever way and that our
implementation will be more efficient if we tell the system more
precisely what to do. In order to demonstrate this, we generate
a more complicated example (the one above was obtained instantaneously):
\begin{verbatim}
Product[(Table[RandomInteger[{-10, 10}], {3}] .
  {n, k, 1})!^((-1)^i), {i, 6}]
\end{verbatim}
yielding the expression
\begin{equation}\label{eq:expr}
  \frac{(-9k-4n-2)! \, (9k-3n-2)! \, (3k+7n-3)!}{%
    (-k+3n+8)! \, (k+5n-8)! \, (2k+9n+5)!}.
\end{equation}
We measure the time (using Mathematica's \texttt{Timing} command)
how long it takes to divide the ansatz by the summand~$f(n,k)$ and
to simplify all its coefficients to rational functions. The output
for our more complicated example is a bit unhandy and therefore not
shown here. The computation took $6.26$ seconds, which is embarrassingly slow.

Hence, it may not be a good idea to apply \texttt{FunctionExpand}
to the whole expression, because then the command has to figure out
which parts can be combined. Instead, we expect it to work better
when we simplify each shift quotient individually. This is achieved
by the following code:
\begin{verbatim}
ansatz = Sum[c[i, j] *
  FunctionExpand[(f /. {n -> n + i, k -> k + j}) / f],
  {i, 0, r}, {j, 0, s}]
\end{verbatim}
Calling this with the expression~\eqref{eq:expr} and $r=s=1$ as before,
we now get the result in much shorter time, namely $0.05$ seconds!
In addition, the result is given in a more simplified form, as can
be seen by comparing the output for $f(n,k)=\binom{n}{k}$ with the
previous output:
\[
  c[0, 0] + \frac{(-k + n)\,c[0, 1]}{1 + k}
  + \frac{(1 + n)\,c[1, 0]}{1 - k + n}
  + \frac{(1 + n)\,c[1, 1]}{1 + k}.
\]
Still, we are not satisfied yet, because calling the above code with
expression~\eqref{eq:expr} and $r=s=10$ results in a computation that
takes more than $90$ seconds. We are able to reduce it to $0.19$ seconds
by explicitly computing and using the rational function certificates $u(n,k)$
and~$v(n,k)$ of the hypergeometric input~$f(n,k)$:
\begin{verbatim}
u = FunctionExpand[(f /. n -> n + 1) / f]
v = FunctionExpand[(f /. k -> k + 1) / f]
\end{verbatim}
Having these certificates, we can easily compute any higher shift
quotient, e.g.,
\[
  \frac{f(n+10,k)}{f(n,k)} = \prod_{i=0}^9 u(n+i,k).
\]
Implementing this idea results in the following code (note that it
does not matter whether we first use~$u$ to reduce the shifts in $n$-direction
and then~$v$ to do the $k$-direction, or vice versa; the result will be
the same, which is ensured by the so-called compatibility conditions):
\begin{verbatim}
ansatz = Sum[c[i, j] *
  Product[u /. {n -> n + i1, k -> k + j}, {i1, 0, i - 1}] *
  Product[v /. k -> k + j1, {j1, 0, j - 1}],
  {i, 0, r}, {j, 0, s}]
\end{verbatim}
In view of the fact that later we will have to solve a linear system
for the unknowns $c_{i,j}$, it is advantageous to construct the
corresponding matrix from the very beginning, instead of carrying
large expressions that represent the equations of the system, from
which one has to extract the coefficients w.r.t.\ the~$c_{i,j}$.
It is achieved by replacing \texttt{Sum} in the previous code snippet
by \texttt{Table}, and by wrapping the whole four lines with
\texttt{Flatten} in order to convert the two-dimensional array
to a one-dimensional list, each of whose entries corresponds to a
column of the system matrix. Also the explicit multiplication by
$c_{i,j}$ can then be omitted.

In Step~\ref{step:denom} we are asked to multiply by the common denominator.
This is easily coded for a single expression, and only slightly more
complicated if we have already split our ansatz into pieces, as
described in the previous paragraph: we have to compute the least
common multiple of the denominators of the individual pieces, and then
multiply:
\begin{verbatim}
Together[ansatz * (PolynomialLCM @@ Denominator[ansatz])]
\end{verbatim}
Note that commands like \texttt{Together} (combine fractions and
cancel common factors) and \texttt{Denominator} are ``listable'',
i.e., the denominator of a list of rational functions is a list
of the respective denominators.

We now have to perform coefficient comparison with respect to~$k$
(Step~\ref{step:coeffcomp}), for which purpose we use the command
\texttt{CoefficientList} that is mapped over the list \texttt{ansatz}.
The Mathematica syntax for this is \texttt{/@}.
One has to take care because the individual
parts of the ansatz may have different degrees in~$k$, resulting in
a ragged array instead of a rectangular matrix. The command
\texttt{PadRight} cures this. Finally we have to \texttt{Transpose}
since the elements in our ansatz list should correspond to columns
of the matrix, not rows:
\begin{verbatim}
mat = Transpose[PadRight[CoefficientList[#, k]& /@ ansatz]]
\end{verbatim}

Step~\ref{step:solve} requires solving the system, i.e., computing
the kernel of the constructed matrix. The following command yields
a basis for the kernel:
\begin{verbatim}
ker = NullSpace[mat]
\end{verbatim}

If this kernel is trivial, then the try was unsuccessful, which means
that the user will have to increase $r$ or~$s$. On the other hand, it could
be that the kernel has dimension greater than~$1$; this happens, for instance,
when the values chosen for $r$ and $s$ are not the minimal ones. We
decide to return a list of recurrences, one for each basis element of
the kernel. In case of failure, this will be the empty list. Each
basis vector of the kernel corresponds to a $k$-free recurrence that
we have to sum with respect to~$k$ (Step~\ref{step:sum}).
However, ``summing'' just means
replacing $f(n+i,k+j)$ by $\text{SUM}[n+i]$, where the new function
$\text{SUM}[n]$ represents the sum $\sum_k f(n,k)$. It is achieved by
taking the scalar product of each kernel vector with a vector of the
corresponding $\text{SUM}[n+i]$ expressions:
\begin{verbatim}
re = ker . Flatten[Table[SUM[n + i], {i, 0, r}, {j, 0, s}]]
\end{verbatim}

Before returning the recurrences, one may want to polish them a little
bit, e.g., by removing denominators that could be produced by the
\texttt{NullSpace} procedure. Putting everything together, we obtain
a basic implementation of Sister Celine's method:
\begin{verbatim}
Celine[f_, n_, k_, r_, s_] :=
Module[{u, v, mat, ker, re},
  u = FunctionExpand[(f /. n -> n + 1) / f];
  v = FunctionExpand[(f /. k -> k + 1) / f];
  mat = Flatten[Table[
    Product[u /. {n -> n + i1, k -> k + j}, {i1, 0, i - 1}] *
    Product[v /. k -> k + j1, {j1, 0, j - 1}],
    {i, 0, r}, {j, 0, s}]];
  mat = Together[mat * (PolynomialLCM @@ Denominator[mat])];
  mat = Transpose[PadRight[CoefficientList[#, k]& /@ mat]];
  ker = NullSpace[mat];
  If[ker === {}, Return[{}]];
  re = ker . Flatten[Table[SUM[n + i], {i, 0, r}, {j, 0, s}]];
  re = Collect[Numerator[Together[#]], _SUM, Expand]& /@ re;
  Return[re];
];
\end{verbatim}

We can try a few examples to convince ourselves that everything
works fine. For example, the input
\begin{verbatim}
Celine[Binomial[n, k], n, k, 1, 1]
\end{verbatim}
yields the output
\[
  \bigl\{-2\,\text{SUM}[n] + \text{SUM}[1 + n]\bigr\}.
\]
Similarly, and very quickly, the input
\begin{verbatim}
Celine[Binomial[n, k]^2, n, k, 2, 2]
\end{verbatim}
yields the output
\[
  \bigl\{(-6 - 4 n)\,\text{SUM}[1 + n] + (2 + n)\,\text{SUM}[2 + n]\bigr\}.
\]
Also the other example from Section~\ref{sec:Celine} is computed in very short time:
\begin{verbatim}
Timing[Celine[(-1)^k * Binomial[2n, n + k]^2, n, k, 2, 4]]
\end{verbatim}
yields
\begin{align*}
  & \bigl\{0.511511, \bigl\{
  \bigl(112 + 400 n + 416 n^2 + 128 n^3\bigr)\,\text{SUM}[n] \\
  &\quad {} + \bigl(-110 - 288 n - 240 n^2 - 64 n^3\bigr)\,\text{SUM}[1 + n] \\
  &\quad {} + \bigl(18 + 45 n + 34 n^2 + 8 n^3\bigr)\,\text{SUM}[2 + n]\bigr\}\bigr\}
\end{align*}
For the Ap\'{e}ry numbers
\[
  \sum_{k=0}^n \binom{n}{k}^2 \binom{n+k}{k}^2,
\]
the computation takes a bit longer and does
not deliver the minimal-order recurrence (whose order is~$2$):
\begin{verbatim}
Timing[Celine[Binomial[n, k]^2 * Binomial[n + k, k]^2,
  n, k, 4, 3]]
\end{verbatim}
produces
\begin{align*}
  & \bigl\{10.3229, \bigl\{
  \bigl(-504 - 2076 n - 3408 n^2 - 2832 n^3 - 1248 n^4 - 276 n^5 - 24 n^6\bigr)
  \,\text{SUM}[n]
  \\ & \quad
  + \bigl(63000 + 194316 n + 245760 n^2 + 162672 n^3 + 59256 n^4 + 11232 n^5
  \\ & \quad
  {} + 864 n^6\bigr)
  \,\text{SUM}[1 + n] +
  \bigl(-277560 - 734604 n - 798792 n^2 - 457224 n^3
  \\ & \quad
  {} - 145392 n^4 - 24360 n^5 - 1680 n^6\bigr)
  \,\text{SUM}[2 + n] +
  \bigl(224280 + 564636 n
  \\ & \quad
  {} + 578136 n^2 + 308280 n^3 + 90360 n^4 + 13824 n^5 + 864 n^6\bigr)
  \,\text{SUM}[3 + n] + \bigl(-9216
  \\ & \quad {} - 22272 n - 21696 n^2 - 10896 n^3 - 2976 n^4 - 420 n^5 - 24 n^6\bigr)
  \,\text{SUM}[4 + n]\bigr\}\bigr\}.
\end{align*}
We will come back to these issues in Section~\ref{sec:Zeilberger}.

\section{Creative Telescoping}\label{sec:ct}

The bottom line of our discussion of Sister Celine's method was that annihilating operators of the form
\[
  P(n,S_n) - \Delta_k Q(n,k,\Delta_k,S_n)
\]
are useful for solving summation problems.
Here, the operator $P$ is supposed to be nonzero and free of $k$ and $Q$ is some other operator that may
or may not involve $k$ and may or may not be nonzero.
If a sequence $f(n,k)$ has an annihilating operator of the above form, then $P(n,S_n)$ is an annihilating
operator for the sum $S(n)=\sum_k f(n,k)$ (provided that a technical condition on the summation range
is satisfied).

We call $P(n,S_n)$ a \emph{telescoper} and $Q(n,k,\Delta_k,S_n)$ a \emph{certificate}. This terminology has
the following background. A hypergeometric term $f(n,k)$ is said to \emph{telescope} (w.r.t.~$k$) if there
is another hypergeometric term $g(n,k)$ such that $f(n,k)=g(n,k+1)-g(n,k)=\Delta_k\cdot g(n,k)$. In this
case, thanks to the telescoping effect, we have $\sum_{k=0}^m f(n,k) = g(n,m+1) - g(n,0)$. Not every
hypergeometric term telescopes in this sense. The effect of applying $P(n,S_n)$ to $f(n,k)$ is that it
maps $f(n,k)$ to a telescoping hypergeometric term. Whether a hypergeometric term telescopes or not is
not obvious, and whether a given operator $P$ is a telescoper for a given hypergeometric term $f$ is
also not obvious. However, if we know not only $P$ but also~$Q$, then we can check easily whether $(P-\Delta_kQ)\cdot f$
is zero. Thus $Q$ certifies that $P$ is a telescoper.

The concept of telescopers applies more generally, as indicated in the following examples. 

\begin{ex}
\begin{enumerate}
\item Let $W_n=\int_0^{\pi/2}(\sin(x))^ndx$. The indefinite integral $\int\sin(x)^ndx$ is not easily expressible,
but we have
\[
  \int((n+1)\sin(x)^n-(n+2)\sin(x)^{n+2})dx=\sin(x)^{n+1}\cos(x),
\]
and consequently
\[
  (n+1)W_n - (n+2)W_{n+2}=\bigl[\sin(x)^{n+1}\cos(x)\bigr]_0^{\pi/2}=0.
\]
\item Consider the formal power series $S(x)=\sum_{n=0}^\infty\binom{2n}nx^n$. The indefinite sum $\sum_{n=0}^N\binom{2n}nx^n$
is not easily expressible, but we have
\[
  \sum_{n=0}^N\bigl(2\binom{2n}nx^n-(1-4x)\binom{2n}n(x^n)'\bigr)=(N+1)\binom{2N+2}{N+1}x^{N+1},
\]
and consequently
\[
  2S(x)-(1-4x)S'(x)=\lim_{N\to\infty}(N+1)\binom{2N+2}{N+1}x^{N+1}=0,
\]
where the limit is meant in the topology of formal power series.
The differential equation $2S(x)-(1-4x)S'(x)=0$ together with the initial value $S(0)=1$ implies
the identity $S(x)=\frac1{\sqrt{1-4x}}$.
\item For $f(x,t)=\frac1{\sqrt{(1-x^2)(1-tx^2)}}$, consider $K(t)=\int_0^1f(x,t)dx$. The indefinite integral $\int f(x,t)dx$
is not easily expressible, but we have
\[
  \int(4(1-t)t\frac{\partial^2}{\partial t^2}f + 4(1-2t)\frac{\partial}{\partial t}f - f) dx=-x\sqrt{\frac{1-x^2}{(1-tx^2)^3}},
\]
and consequently
\[
 4(1-t)t K''(t) + 4(1-2t)K'(t) - K(t) = \bigl[-x\sqrt{\frac{1-x^2}{(1-tx^2)^3}}\bigr]_{x=0}^1=0.
\]
\end{enumerate}
\end{ex}

In these examples, we see that operators of the form $P-D_xQ$ instead of $P-\Delta_kQ$ are useful for integration,
and if the free variable is continuous rather than discrete, we get a differential operator (w.r.t. $t$) rather than
a recurrence operator (w.r.t. $n$) as~$P$. The terminology of telescopers and certificates extends to these (and some further)
variants. The search for telescopers is called \emph{creative telescoping,} a term that first appeared in van der Poorten's
account on Ap\'ery's proof of the irrationality of $\zeta(3)$~\cite{vanDerPoorten79}. The differential version is also known as
\emph{differentiating under the integral sign} and was promoted by Feynman.

Sister Celine's method is a rudimentary algorithm for creative telescoping applicable to hypergeometric terms. More
sophisticated algorithms for this case will be discussed in the next two sections. For the remainder of the present
section, let us focus on the differential case and on rational functions. The task is thus as follows: given a rational
function $f\in C(x,y)$, find operators $P(x,D_x)$ (nonzero!) and $Q(x,y,D_x,D_y)$ such that $(P-D_yQ)\cdot f=0$.

Equivalently, we can ask for an operator $P(x,D_x)$ (nonzero!) and a rational function $g\in C(x,y)$ such that $P\cdot f=D_y\cdot g$.

Recall the Ostrogradsky--Hermite reduction algorithm from Sect.~\ref{sec:hermite}. Given a rational function $f\in C(y)$, it produces two rational
functions $g,h\in C(y)$ such that:
\def\num{\operatorname{num}}\def\den{\operatorname{den}}
\begin{enumerate}
\item $f=D_y\cdot g+h$
\item the denominator $\den(h)$ of $h$ is squarefree
\item the numerator $\num(h)$ of $h$ has a lower degree than the denominator $\den(h)$. 
\end{enumerate}
The key property of this output is that $f$ is integrable in $C(y)$ if and only if $h=0$. 

Now consider a bivariate rational function $f\in C(x,y)$ and apply Ostrogradsky--Hermite reduction w.r.t.~$y$ to $f$ and its derivatives
w.r.t. $x$:
\begin{alignat*}1
  f &= D_y\cdot g_0 + h_0\\
  D_x\cdot f&= D_y\cdot g_1 + h_1\\
  D_x^2\cdot f&= D_y\cdot g_2 + h_2\\
   &\vdots
\end{alignat*}
If we can find a $C(x)$-linear relation among the rational functions $h_0,h_1,h_2,\dots$, say
\[
  c_0h_0 + \cdots + c_rh_r = 0
\]
for certain $c_0,\dots,c_r\in C(x)$, not all zero, then we get
\[
  (c_0 + c_1 D_x + \cdots + c_r D_x^r)\cdot f = D_y g,
\]
where $g=c_0g_0 + \cdots + c_r g_r$, because the $c_0,\dots,c_r$ are free of $y$ and therefore commute with~$D_y$.
We see that a linear dependence among the $h_0,h_1,\dots$ directly translates into a telescoper for~$f$.
The approach to compute a telescoper by searching for a linear dependence among $h_0,h_1,\dots$ is known
as \emph{reduction-based telescoping.}

\def\sqfp{\operatorname{sqfp}}%
Since the denominator of each $h_i$ divides the squarefree part of the denominator of~$f$ and the numerator degree
of each $h_i$ is bounded by its denominator degree, all the $h_0,h_1,\dots$ belong to a finite-dimensional
$C(x)$-linear subspace of~$C(x,y)$. It is therefore guaranteed that we find a linear dependence among
these rational functions. In fact, we can say more precisely that we must find a telescoper of order at most
$\deg_y\sqfp(\den(f))$, where $\sqfp(\cdot)$ denotes the squarefree part. 

Not only does every linear relation among $h_0,h_1,\dots$ translate into a telescoper for~$f$, but also every
telescoper for $f$ gives rise to a linear relation among $h_0,h_1,\dots$. Indeed, if $P=c_0+c_1D_x+\cdots+c_rD_x^r$
is a telescoper for~$f$, then $P\cdot f$ is integrable in $C(x,y)$ w.r.t.~$y$. Also $D_y\cdot(c_0g_0+\cdots+c_rg_r)$
is clearly integrable w.r.t.~$y$, and therefore $c_0h_0+\cdots+c_rh_r=P\cdot f-D_y\cdot(c_0g_0+\cdots+c_rg_r)$ is
integrable w.r.t.~$y$. But since $c_0h_0+\cdots+c_rh_r$ is a proper rational function with a squarefree denominator,
it can only be integrable if it is zero. This completes the proof that every telescoper translates to a linear
dependence among $h_0,h_1,\dots$. In particular, the linear relation with the smallest $r$ corresponds to the
telescoper of minimal order.

\medskip
One feature of reduction-based telescoping is that the linear systems we have to solve in order to find the
linear relation among $h_0,h_1,\dots$ are relatively small. If we do not care about efficiency, there is an
easier way to find a telescoper for a rational function. This alternative approach, which goes back to Apagodu
and Zeilberger~\cite{mohammed05,apagodu06,chen14a}, does not rely on Ostrogradsky--Hermite reduction but only uses linear algebra.

Write $f=\frac pq$ for two polynomials $p,q\in C[x,y]$, and assume for simplicity that $\deg_y p<\deg_y q$.
The first few derivatives of $f$ w.r.t. $x$ can be expressed in terms of $p$ and $q$ and their derivatives.
\begin{alignat*}1
  f &= \frac pq\\
  D_x\cdot f&= \frac{p'q-pq'}{q^2}\\
  D_x^2\cdot f&= \frac{\text{something}}{q^3}\\
              &\vdots
\end{alignat*}
The precise expressions in the numerators are not so important. It suffices to observe that we can take
$q^{i+1}$ as denominator of $D_x^i\cdot f$, and if we do so, then the $y$-degree of the numerator is at most
$\deg_yp+i\deg_yq$. (We do not care if some cancelation is possible to make things smaller.) The
rational functions $f,D_x\cdot f,\dots,D_x^r\cdot f$ have $q^{r+1}$ as a common denominator, and if we write
them all with this denominator, the numerators all have $y$-degree at most $\deg_yp+r\deg_yq$.

It follows that, for every choice $c_0,\dots,c_r\in C(x)$ (free of~$y$), we have that
\[
  (c_0+c_1D_x+\cdots+c_rD_x^r)\cdot f= \frac{\text{something}}{q^{r+1}},
\]
where ``something'' is a certain element of $C(x)[y]$ of degree at most $r\deg_yq+\deg_yp$. Now for some
$b_0,\dots,b_s\in C(x)$ (also free of~$y$), consider the rational function
\[
  g = \frac{b_0+b_1y+\cdots+b_sy^s}{q^r}
\]
and observe that
\[
  D_y\cdot g =\frac{\text{something}}{q^{r+1}}
\]
where ``something'' is a certain element of $C(x)[y]$ of degree at most $\deg_yq+s-1$.
For $s=(r-1)\deg_yq+\deg_yp+1$ the degree bound becomes $r\deg_yq+\deg_yp$ and matches
the degree bound of the first ``something''. 

We want the two ``something''s to become equal. To this end, regard the $c_0,\dots,c_r$
and $b_0,\dots,b_s$ as undetermined coefficients and compare coefficients with respect
to $y$ in order to obtain a linear system over $C(x)$ for the undetermined coefficients.

This system will have $(r+1)+((r-1)\deg_yq+\deg_yp+2)$ variables and $1+r\deg_y q+\deg_yp$
equations, so it is guaranteed to have a nontrivial solution as soon as $r\geq\deg_yq-1$,
because in this case, it has more variables than equations.
Nontrivial means that at least one of the $c_0,\dots,c_r$ or at least one of the $b_0,\dots,b_s$
is nonzero. This is not quite enough, because we want a nonzero telescoper, so we need to
ensure that at least one of $c_0,\dots,c_r$ is nonzero. We encourage the reader to find
out why this is always the case.

\begin{ex}
  Let us compute a telescoper for $f=\frac1{xy^3+y+1}$ with both methods explained above.
  \begin{enumerate}
  \item With Ostrogradsky--Hermite reduction, we find
    \begin{alignat*}3
      f &= D_y\cdot0 &&+\frac1{1+y+xy^3}\\
      D_x\cdot f&=D_y\cdot\frac{3xy^2+9xy+2y+2}{x(27x+4)(xy^3+y+1)} &&+\frac{3(y-3)}{(27x+4)(xy^3+y+1)}\\
      D_x^2\cdot f&=D_y\cdot\frac{-162x^3y^5+\cdots-16y-8}{x^2(27x+4)^2(xy^3+y+1)^2}&&+\frac{6(54x-27xy-y-1)}{x(27x+4)^2(xy^3+y+1)}.
    \end{alignat*}
    The linear system
    \[
      \begin{pmatrix}
        1 & -9/(27x+4) & 6(54x-1)/x/(27x+4)^2 \\
        0 & 3/(27x+4) & -6(27x+1)/x/(27x+4)^2
      \end{pmatrix}\begin{pmatrix}
        c_0\\c_1\\c_2
      \end{pmatrix}=0
    \]
    has a solution space generated by $(6,2(27x+1),x(27x+4))$. Therefore,
    \[
      6 + 2(27x+1)D_x + x(27x+4)D_x^2 
    \]
    is a telescoper for~$f$.
  \item We have
    \[
      (c_0 + c_1D_x + c_2D_x^2)\cdot f=
      \tfrac{(c_0 x^2-c_1 x+2 c_2)y^6+ (2 c_0 x-c_1)y^4+ (2 c_0 x-c_1)y^3+c_0y^2+2 c_0 y+c_0}{(1+y+xy^3)^3}
    \]
    and
    \begin{alignat*}1
      &D_y\cdot\frac{b_0+b_1y+b_2y^2+b_3y^3+b_4y^4}{(xy^3+y+1)^2}\\
      &=\tfrac{-2 b_4 x y^6-3 b_3 x y^5+ (2 b_4-4 b_2 x)y^4+(-5 b_1 x+b_3+4 b_4)y^3+ (3 b_3-6 b_0 x)y^2+ (2 b_2-b_1)y+(b_1-2 b_0)
      }{(1+y+xy^3)^3}.
    \end{alignat*}
    Equating coefficients leads to the linear system
    \[
      \begin{pmatrix}
        1 & 0 & 0 & 2 & -1 & 0 & 0 & 0 \\
        2 & 0 & 0 & 0 & 1 & -2 & 0 & 0 \\
        1 & 0 & 0 & 6 x & 0 & 0 & -3 & 0 \\
        2 x & -1 & 0 & 0 & 5 x & 0 & -1 & -4 \\
        2 x & -1 & 0 & 0 & 0 & 4 x & 0 & -2 \\
        0 & 0 & 0 & 0 & 0 & 0 & 3 x & 0 \\
        x^2 & -x & 2 & 0 & 0 & 0 & 0 & 2 x 
       \end{pmatrix}\begin{pmatrix}
         c_0\\c_1\\c_2\\b_0\\b_1\\b_2\\b_3\\b_4
       \end{pmatrix}=0.
    \]
    Its solution space is generated by the vector
    \[
      \Bigl(6,2 (27 x+1),x (27 x+4),-\tfrac{1}{x},\tfrac{2 (3 x-1)}{x},-\tfrac{1-9 x}{x},0,-3 (x+1)\Bigr).
    \]
    The first three components of this vector give rise to the telescoper.
  \end{enumerate}
\end{ex}


\section{Gosper's Algorithm}\label{sec:Gosper}

In 1978, Gosper presented a celebrated algorithm in~\cite{gosper78} for solving the indefinite summation problem
of hypergeometric sequences.  Recall that a sequence $H(n)$ is \textit{hypergeometric} over $K(n)$ if 
\[\frac{H(n+1)}{H(n)}\in K(n).\]
The sequences $n^2+1$, $1/n$, $2^n$, $n!$ and $\Gamma(\alpha+n)$ are all hypergeometric sequences.
Two hypergeometric sequences $H_1(n)$ and $H_2(n)$ are said to be \textit{similar}, denoted by $H_1\sim H_2$, if
\[\frac{H_1(n)}{H_2(n)}\in K(n).\]

\medskip
\textbf{Hypergeometric Summation Problem.}~~
Given a hypergeometric sequence $H(n)$, decide whether there exists another hypergeometric sequence $G(n)$ such that
\[H(n)=G(n+1)-G(n) \triangleq \Delta_n(G(n)).\]
If such a $G$ exists, we say that $H(n)$ is \emph{hypergeometric summable}.
\medskip

\begin{rem}
If $H(n)=\Delta_n(G(n))$ for some hypergeometric $G(n)$, then $G(n+1)=g(n)G(n)$ for some rational $g(n)\in K(n)$. 
Thus $H(n)=(g(n)-1)G(n)$, which means that $H(n)$ and $G(n)$ are similar over $K(n)$. 
\end{rem}

Let $H(n)$ be a hypergeometric sequence with $H(n+1)/H(n)=f(n)\in K(n)$. To decide the existence of $G(n)$, we make an ansatz $G(n)=y(n)H(n)$. Then $H(n)=\Delta_n(G(n))$ if and only if 
\begin{align*}
    H(n)&=y(n+1)H(n+1)-y(n)H(n)\\
    &=\left(y(n+1)f(n)-y(n)\right)H(n).
\end{align*}
This leads to a recurrence equation for $y(n)$ 
\begin{align}\label{eq-Gosper-eq-1}
    1=f(n)y(n+1)-y(n).
\end{align}
We need to solve this equation for a rational solution in~$K(n)$.
The first step of Gosper's algorithm is to find the denominator of~$y(n)$.

\begin{defi}
Let $f\in K(n)$. We call the triple $(p,q,r)\in K[n]^3$ a \textit{Gosper form} of $f$ if 
\[f=\frac{p(n+1)}{p(n)}\frac{q(n)}{r(n)},\]
where $q, r$ satisfy the following GCD condition
\begin{align}\label{eq-gosper-GCD-condition}
   \gcd(q(n),r(n+i))=1 \qquad\text{for any } i\in\bN. 
\end{align}
\end{defi}
We now show that any rational function has a Gosper form.
Let $f=a(n)/b(n)$ with $a, b\in K[n]$ and $\gcd(a, b)=1$. If $g=\gcd(a(n),b(n+j_0))\neq1$ for some $j_0\in\bN\setminus\{0\}$, 
then $g(n)\mid a(n)$ and $g(n)\mid b(n+j_0)$, which implies that $g(n-j_0)\mid b(n)$. Let $a(n)=g(n)\bar{a}(n)$ and $b(n)=g(n-j_0)\bar{b}(n)$. Then we have
\begin{align*}
    \frac{a(n)}{b(n)}&=\frac{g(n)\bar{a}(n)}{g(n-j_0)\bar{b}(n)}= \frac{g(n)}{g(n-1)}\frac{g(n-1)}{g(n-2)}\cdots\frac{g(n-j_0+1}{g(n-j_0)}\frac{\bar{a}(n)}{\bar{b}(n)}\\
    &=\frac{p_1(n+1)}{p_1(n)}\frac{\bar{a}(n)}{\bar{b}(n)},
\end{align*}
where $p_1(n)=g(n-1)\cdots g(n-j_0)$. We can iterate the above process for $\bar{a}/\bar{b}$ until it satisfies the condition~\eqref{eq-gosper-GCD-condition}.

Let the triple $(p,q,r)\in K[n]^3$ be a \textit{Gosper form} of $f$. Now we would transform the equation~\eqref{eq-Gosper-eq-1} to \textit{Gosper's equation}
\begin{align}\label{eq-Gosper's-eq}
    p(n)=q(n)z(n+1)-r(n-1)z(n).
\end{align}

\begin{lem}[Gosper's Key Lemma]
The equation~\eqref{eq-Gosper-eq-1} has a rational solution if and only if Gosper's equation~\eqref{eq-Gosper's-eq} has a polynomial solution.  
\end{lem}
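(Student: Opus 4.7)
The plan is to set up a bijection between rational solutions of \eqref{eq-Gosper-eq-1} and polynomial solutions of \eqref{eq-Gosper's-eq} via the substitution $y(n) = \frac{r(n-1)\,z(n)}{p(n)}$, and then to argue that the GCD condition \eqref{eq-gosper-GCD-condition} forces polynomiality on the $K(n)$-side. The first step is a routine verification: plugging $f(n) = \frac{p(n+1)q(n)}{p(n)r(n)}$ into $f(n)y(n+1) - y(n)$ produces $\frac{q(n)z(n+1) - r(n-1)z(n)}{p(n)}$, so \eqref{eq-Gosper-eq-1} is literally equivalent to $p(n) = q(n)z(n+1) - r(n-1)z(n)$. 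The ``if'' direction is immediate from this: any polynomial solution $z$ yields a rational $y$ that solves \eqref{eq-Gosper-eq-1}.

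For the converse, I would start from a rational solution $y \in K(n)$ of \eqref{eq-Gosper-eq-1} and set $z(n) := p(n)y(n)/r(n-1) \in K(n)$; by the computation above, $z$ already solves Gosper's equation. The entire content of the lemma is in showing that $z$ must actually lie in $K[n]$. Writing $z = a/b$ in lowest terms and clearing denominators in \eqref{eq-Gosper's-eq} gives
\[
  p(n)\,b(n)\,b(n+1) \;=\; q(n)\,a(n+1)\,b(n) \;-\; r(n-1)\,a(n)\,b(n+1).
\]
Reducing this identity modulo $b(n)$ and modulo $b(n+1)$ in turn, and using $\gcd(a,b)=1$, extracts the two key divisibilities $b(n) \mid r(n-1)\,b(n+1)$ and $b(n+1) \mid q(n)\,b(n)$.

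The main obstacle, and the heart of the argument, is to pass from these divisibilities to a contradiction with \eqref{eq-gosper-GCD-condition} whenever $\deg b > 0$. My plan is to pick an irreducible factor $h(n)$ of $b(n)$ and consider the finite nonempty set $I := \{i \in \mathbb{Z} : h(n+i) \mid b(n)\}$; it is finite because distinct integer shifts of an irreducible polynomial in characteristic zero are pairwise coprime, and $b$ has only finitely many irreducible factors. Let $i_{\max} := \max I$ and $i_{\min} := \min I$. By maximality of $i_{\max}$, we have $h(n+i_{\max}+1) \nmid b(n)$; since $h(n+i_{\max}+1)$ is an irreducible factor of $b(n+1)$, the divisibility $b(n+1) \mid q(n)b(n)$ forces $h(n+i_{\max}+1) \mid q(n)$. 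Symmetrically, by minimality of $i_{\min}$, $h(n+i_{\min}) \nmid b(n+1)$, and $b(n) \mid r(n-1)b(n+1)$ forces $h(n+i_{\min}) \mid r(n-1)$, equivalently $h(n+i_{\min}+1) \mid r(n)$. Setting $j := i_{\max} - i_{\min} \in \mathbb{N}$, the polynomial $h(n+i_{\max}+1)$ then divides both $q(n)$ and $r(n+j)$, contradicting \eqref{eq-gosper-GCD-condition}. Hence $b$ is a unit and $z \in K[n]$, completing the proof.
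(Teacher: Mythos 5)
Your proposal is correct and follows essentially the same route as the paper's proof: reduce to showing necessity via the substitution $y(n)=r(n-1)z(n)/p(n)$, clear denominators of $z=a/b$ to obtain $p(n)b(n)b(n+1)=q(n)a(n+1)b(n)-r(n-1)a(n)b(n+1)$, and use an irreducible factor of $b$ together with extremal shifts to produce a common factor of $q(n)$ and $r(n+j)$ for some $j\in\mathbb{N}$, contradicting the GCD condition. The only cosmetic difference is that you first isolate the divisibilities $b(n+1)\mid q(n)b(n)$ and $b(n)\mid r(n-1)b(n+1)$ before passing to the irreducible factor, whereas the paper argues directly from the cleared-denominator identity.
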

\begin{proof}
Note that the equation~\eqref{eq-Gosper-eq-1} and Gosper's equation~\eqref{eq-Gosper's-eq} can be transformed into each other by taking $y(n)=(r(n-1)/p(n))z(n)$. Therefore, we only need to
show the necessity.  Assume that the equation~\eqref{eq-Gosper-eq-1} has a rational solution. 
Then Gosper's equation~\eqref{eq-Gosper's-eq} also has a rational solution, say $z(n)=a(n)/b(n)\in K(n)$ with $a, b\in K[n]$ and $\gcd(a, b)=1$. 
Now it suffices to show that $b(n)$ is a nonzero constant in $K$. Otherwise, suppose that $b$ is not a constant. Then $b(n)$ has at least one non-trivial irreducible factor $u(n)\in K[n]$.
For this factor, there exists a maximal integer $j\in\bN$ such that $u(n+j)\mid b(n)$ but $u(n+j+1)\nmid b(n)$. We claim that $u(n+j+1)\mid q(n)$.
Since $u(n+j)\mid b(n)$, we have $u(n+j+1)\mid b(n+1)$.  
From the equation 
\begin{align}\label{eq-pf-Gosper's-Key-Lem}
    p(n)b(n)b(n+1)=q(n)a(n+1)b(n)-r(n-1)a(n)b(n+1), 
\end{align} 
 we get $u(n+j+1)\mid q(n)a(n+1)b(n)$. Since $\gcd(a(n), b(n))=1$ and  $u(n+j)\mid b(n)$, we get $u(n+j+1)\nmid a(n+1)$.
 Thus $u(n+j+1)\mid q(n)$.
On the other hand,  there exists a maximal integer $i\in\bN$ such that $u(n-i)\mid b(n)$ but $u(n-i-1)\nmid b(n)$. We claim that $u(n-i)\mid r(n-1)$.
Since $u(n-i)\mid b(n)$, we have $u(n-i)\mid r(n-1)a(n)b(n+1)$ by the equality~\eqref{eq-pf-Gosper's-Key-Lem}.
Since $\gcd(a, b)=1$ and $u(n-i-1)\nmid b(n)$, we have $u(n-i)\nmid a(n)b(n+1)$.
Thus $u(n-i)\mid r(n-1)$.  Combining the above two claims, we get that $\gcd(q(n),r(n+i+j))\neq 1$, a contradiction to the condition~\eqref{eq-gosper-GCD-condition}.
\end{proof}

To find a polynomial solution of Gosper's equation~\eqref{eq-Gosper's-eq}, we only need to bound the degree of $z(n)$ and 
then solve a linear system over $K$ by the method of undetermined coefficients. We first write Gosper's equation in the form
\[p(n)=q(n)\Delta_n(z(n))+(q(n)-r(n-1))z(n).\]
Assume that $z(n)=\sum_{i=0}^dz_in^i$. Then $\Delta_n(z(n))=dz_d n^{d-1}+$ lower terms. There are three cases to be considered.
\begin{enumerate}
    \item If $\deg(q(n))\leq\deg(q(n)-r(n-1))$, then $d=\deg(p(n))-\deg(q(n)-r(n-1))$.
    \item If $\deg(q(n))>\deg(q(n)-r(n-1))+1$, then $d=\deg(p(n))-\deg(q(n))+1$.
    \item If $\deg(q(n))=\deg(q(n)-r(n-1))+1=\rho$, then we can assume that the leading coefficients of $q(n)$ and $q(n)-r(n-1)$ are $c_1$ and $c_2$ respectively. Then  
    $$q(n)\Delta_n(z(n))+(q(n)-r(n-1))z(n)=(c_1dz_d+c_2z_d)n^{\rho+d-1}+\text{ lower terms.}$$ So either $d=-c_2/c_1$ or $\deg(p(n))=\rho+d-1$. 
\end{enumerate}
In summary, we can take $d$ as 
\[d = \max(\{-c_2/c_1,\deg(p(n))-\max\{\deg(q(n), \deg(q(n)-r(n-1)+1))\}  +1\}\cap\bN).\]  
The main steps of Gosper's algorithm are described as follows.

\begin{alg}[Gosper's algorithm]
\qquad

\noindent INPUT: a hypergeometric sequence $H(n)$;

\noindent OUTPUT: a rational solution of the equation $H(n)=\Delta_n(y(n)H(n))$ for $y(n)$ if there exists one; ``No'' otherwise.
\begin{enumerate}
    \item Compute a Gosper form $(p,q,r)\in K[n]^3$ of $f(n)=H(n+1)/H(n)$.
    \item Set $d = \max(\{-c_2/c_1,\deg(p(n))-\max\{\deg(q(n), \deg(q(n)-r(n-1)+1))\}+1\}\cap\bN)$,  where $c_1$ and $c_2$ are the leading coefficients of $q(n)$ and $q(n)-r(n-1)$ respectively.
    \item Make an ansatz $z(n)=\sum_{i=0}^dz_i n^i$ and then obtain a linear system $\mathcal{L}$ with respect to $z_i$'s by comparing the coefficients of the equality
    \[p(n)=q(n)z(n+1)-r(n-1)z(n).\]
    \item Solving the linear system $\mathcal{L}$, return $y(n)=r(n-1)z(n)/p(n)$ if there exists a solution for $z(n)$; ``No'' otherwise.
\end{enumerate}
\end{alg}

\begin{ex}
Let $H=\binom{m}{k}/\binom{n}{k}$. Then we can compute a rational function $y(k)$ such that $H=\Delta_k(yH)$ by Gosper's algorithm as follows:
\begin{enumerate}
    \item Computing a Gosper form $(p,q,r)$ of $H(k+1)/H(k)=(m-k)/(n-k)$, we obtain $p=1$, $q=m-k$ and $r=n-k$.
    \item Finding a polynomial solution with degree zero of the equation 
    \[ 1=(m-k)z(k+1)-(n-k+1)z(k),\]
     we get $z(k)=1/(m-n-1)$. So we have $y(k)=r(k-1)z(k)/p(k)=(n-k+1)/(m-n-1)$.
\end{enumerate}
This leads to the identity $\sum_{k=0}^mH(k)=(n+1)/(n-m+1)$ as proved in Section~\ref{sec:classical}.
\end{ex}

\section{Zeilberger's Algorithm}\label{sec:Zeilberger}

At this point we may wonder: why do we need yet another algorithm?
Recall that we have already seen two symbolic summation algorithms:
\begin{itemize}
\item Gosper's algorithm solves the indefinite hypergeometric
  summation problem
  \[
    f(k) = g(k+1)-g(k) \implies \sum_{k=a}^b f(k) = g(b+1)-g(a),
  \]
  but is not suitable for definite hypergeometric summation problems,
  such as
  \[
    \sum_{k=0}^n \binom{n}{k} = 2^n.
  \]
\item Sister Celine's method applies to definite hypergeometric summation,
  but unfortunately is slow in practice and often does not return the
  minimal-order recurrence. One of the reasons is that the condition
  ``$k$-free'' is too strong: the ansatz
  \[
    A(n,S_n,S_k) = P(n,S_n) + \Delta_k\cdot Q(n,S_n,S_k)
  \]
  could be generalized to
  \[
    A(n,k,S_n,S_k) = P(n,S_n) + \Delta_k\cdot Q(n,k,S_n,S_k).
  \]
  Another disadvantage of Sister Celine's method is that one has
  to choose two parameters~$r$ and~$s$ (orders w.r.t.~$n$ and~$k$).
\end{itemize}
Zeilberger's algorithm, which we will discuss in this section, to some extent
cures the problems the previous two algorithms were suffering from.

Recall that creative telescoping is a method
that deals with parametrized definite sums and integrals,
which yields differential/recurrence equations for them.
For example, the well-known identity
\[
  \sum_{k=1}^\infty \frac{1}{k^2} = \frac{\pi^2}{6}
\]
is not in the scope for the creative telescoping method because it
has no parameter and the result is just a real number. Hence the result
cannot be described reasonably by a recurrence or differential equation.
Let us generalize this identity by introducing a parameter:
\[
  \sum_{k=1}^\infty \frac{1}{k(k+n)} = \frac{\gamma+\psi(n)}{n}.
\]
Now the sum (and thus its evaluation) depend on a parameter~$n$ and
therefore the identity is indeed amenable to creative telescoping.
However, the method will not immediately deliver the closed-form
right-hand side, but a recurrence equation for it, in this case
\[
  (n+2)^2 F_{n+2} = (n+1) (2 n+3) F_{n+1} - n (n+1) F_n,
\]
where $F_n$ represents the sum expression on the left-hand side.

The method that is nowadays known as \emph{Zeilberger's algorithm}~\cite{Zeilberger1990c}
was termed the ``fast algorithm'' by Zeilberger himself, in contrast to the
elimination-based ``slow algorithm'', which is rarely used any more---for
manifest reasons.

When Gosper invented his algorithm, he actually missed the
opportunity to formulate an algorithm for definite hypergeometric summation,
as we will explain in the following. Assume we have a conjecture of the form
$\sum_k f(n,k) = h(n)$ with $f(n,k)$ having finite support and with $h(n)$
being hypergeometric, i.e., $h(n+1)/h(n)=p(n)/q(n)$ for polynomials $p$ and~$q$.
How can we prove it using Gosper's algorithm? Starting from the identity
\[
  q(n)h(n+1) - p(n)h(n)=0 
\]
we obtain, by the definition of~$h(n)$,
\[
  q(n)\sum_k f(n+1,k) - p(n)\sum_k f(n,k) = 0. 
\]
This can be reformulated to
\[
  \sum_k\Bigl(q(n)f(n+1,k) - p(n)f(n,k)\Bigr) = 0,
\]
which is a summation identity that is in principle \emph{provable}
by Gosper's algorithm! Hence we obtain the following recipe:
\begin{enumerate}
\item Apply Gosper's algorithm to $q(n)f(n+1,k) - p(n)f(n,k)$.
\item (Hopefully) obtain $g(n,k)$ such that
  \[
    q(n)f(n+1,k) - p(n)f(n,k) = g(n,k+1) - g(n,k).
  \]
\item Apply $\sum_k$ to the above identity.
\item Note that $g(n,k)$ has finite support, since it is a rational function multiple of~$f$.
\item Denoting $S(n):=\sum_k f(n,k)$ this yields $q(n)S(n+1)-p(n)S(n) = 0$.
\item Check that $h(0)=S(0)$. Hence $S(n)=h(n)$ for all~$n$.
\end{enumerate}

\begin{ex}
Let's look at the (toy) example
\[
  \sum_k \binom{n}{k} = 2^n.
\]
We have $h(n)=2^n$ and hence $h(n+1)-2h(n)=0$. Now we construct
\[
  f(n+1,k) - 2f(n,k) =
  \binom{n+1}{k} - 2\cdot\binom{n}{k} =
  \underbrace{\frac{2k-n-1}{n-k+1}\binom{n}{k}}_{=:\bar{f}(n,k)}
\]
Gosper's algorithm applied to $\bar{f}$ succeeds:
\[
  g(n,k) = \frac{k}{k-n-1} \bar{f}(n,k) = -\binom{n}{k-1}.
\]
This function $g$ has finite support, hence $\sum_k \bar{f}(n,k)=0$.
The original identity follows.
\end{ex}

But what do we do if we don't know the evaluation of the sum:
\[
  S(n) := \sum_k f(n,k) = {?}
\]
(We still assume natural boundaries, i.e., that $f$ has finite support w.r.t.~$k$).
Here, Zeilberger's algorithm enters the stage.

From the existence result in Section~\ref{sec:Celine},
we \emph{know} that a recurrence for $S(n)$ exists,
provided that $f(n,k)$ is a proper hypergeometric term in $n$ and~$k$.
But we \emph{don't know} its order and its coefficients. Hence, we can
proceed in the following way: Try increasing orders $r=0,1,\dots$ until the
algorithm succeeds, which will eventually be the case under the above
assumptions.
For each~$r$, write the recurrence with undetermined coefficients~$p_i\in C(n)$:
\[
  p_r(n)S(n+r) + \dots + p_1(n)S(n+1) + p_0(n)S(n) = 0
\]
and apply Gosper's algorithm to $p_r(n)f(n+r,k)+\dots+p_0(n)f(n,k)$.

And now happens what Zeilberger calls ``the miracle''. When applying
a parametrized variant of Gosper's algorithm to the hypergeometric term
\[
  \bar{f}(n,k)=p_r(n)f(n+r,k)+\dots+p_1(n)f(n+1,k)+p_0(n)f(n,k),
\]
the algorithm works in the very same way, despite the unknown parameters~$p_i$.
The first reason is that the $p_i$ appear \emph{only} in $c(k)$ in Gosper's equation
\[
  a(k)\cdot z(k+1)-b(k-1)\cdot z(k)=c(k).
\]
The second reason is that the $p_i$ appear \emph{linearly},
hence the final linear system can be solved
simultaneously for the $p_i$ and for the coefficients of~$z(k)$:
\[
  z(k) = \sum_{i=0}^d z_i(n) k^i.
\]
Because this is a fundamental result and the main reason 
why Zeilberger's algorithm works, we present its proof here
(which is taken from~\cite{PWZbook1996}).

We want to try a telescoper of order~$r$ on a hypergeometric input $f(n,k)$, i.e.,
we apply Gosper's algorithm to $\bar{f}(n,k):=\sum_{i=0}^r p_i(n)f(n+i,k)$.
Let
\[
  \frac{f(n+1,k)}{f(n,k)} =: u(n,k) =: \frac{u_1(n,k)}{u_2(n,k)}
\]
and
\[
  \frac{f(n,k+1)}{f(n,k)} =: v(n,k) =: \frac{v_1(n,k)}{v_2(n,k)}
\]
with $u_1, u_2, v_1, v_2\in C[n, k]$, and $\gcd(u_1,u_2)=\gcd(v_1,v_2)=1$. Then the shift-quotient of our new
hypergeometric term~$\bar{f}(n,k)$ can be computed as follows:
\begin{align*}
  \frac{\bar{f}(n,k+1)}{\bar{f}(n,k)}
  &= \frac{\sum_{i=0}^r p_i(n)\cdot f(n+i,k+1)}{\sum_{i=0}^r p_i(n)\cdot f(n+i,k)} \\
  &= \frac{\sum_{i=0}^r p_i(n) \left(\prod_{j=0}^{i-1}u(n+j,k+1)\right) f(n,k+1)}{
    \sum_{i=0}^r p_i(n) \left( \prod_{j=0}^{i-1} u(n+j,k)\right) f(n,k)} \\
  &= \frac{\sum_{i=0}^r p_i(n)
    \left(\prod_{j=0}^{i-1} u_1(n+j,k+1)\right)
    \left(\prod_{j=i}^{r-1} u_2(n+j,k+1)\right)}{%
    \sum_{i=0}^r p_i(n)
    \left(\prod_{j=0}^{i-1} u_1(n+j,k)\right)
    \left(\prod_{j=i}^{r-1} u_2(n+j,k)\right)} \\
  &\quad \times\frac{\left(\prod_{j=0}^{r-1} u_2(n+j,k)\right) v_1(n,k)}{%
    \left(\prod_{j=0}^{r-1} u_2(n+j,k+1)\right) v_2(n,k)}.
\end{align*}
By denoting the expression on the last line by~$w(k)$, we realize that
the result takes the form
\[
  \frac{c_0(k+1)}{c_0(k)}\cdot w(k)
\]
for some polynomial~$c_0(k)$ that is a linear expression in the
parameters~$p_i$. Also we note that $w(k)$ does not depend
on any of the~$p_i$. Now we compute the Gosper form of~$w(k)$:
\[
  w(k) = \frac{a(k)c_1(k+1)}{b(k)c_1(k)}.
\]
Let $c(k):=c_0(k)\cdot c_1(k)$, then
\[
  \frac{a(k) c(k+1)}{b(k) c(k)}
\]
is a Gosper form for $\bar{f}(n,k)$. The corresponding Gosper equation is
\[
  a(k) z(k+1) - b(k-1) z(k) = c(k),
\]
which satisfies all the claimed properties. We summarize the
steps described above as an algorithm.

\begin{alg}[Zeilberger's algorithm]

\noindent INPUT: a bivariate hypergeometric sequence $f(n,k)$;

\noindent OUTPUT: a telescoper $P\in C(n)\<S_n>$ and a certificate $g(n,k)$
such that the telescopic relation $P(f)=\Delta_k(g)$ holds.
\begin{enumerate}
\item Initialize $r=0$.
\item Introduce undetermined coefficients $p_0,\dots,p_r$.
\item Apply a parametrized variant of Gosper's algorithm to $\sum_{i=0}^r p_i(n)f(n+i,k)$.
\item If a solution $(y, p_0, \dots, p_r) \in C(n,k)\times C(n)^{r+1}$ was found,
  then return $P=\sum_{i=0}^r p_i(n)S_n^i$ and $g(n,k)=y(n,k)\cdot f(n,k)$.
\item If no solution was found, increase $r$ by~$1$ and go back to Step~2.
\end{enumerate}
\end{alg}

Let us discuss some features of Zeilberger's algorithm.
One advantage compared to Sister Celine's method is that we only need
to choose~$r$, the order w.r.t.~$S_n$. We do not need to prescribe~$s$,
the order w.r.t.~$S_k$, because it will be determined automatically by
the algorithm. In practice, Zeilberger's
algorithm is usually more efficient than Sister Celine's method,
and it is guaranteed to find the telescoper of minimal order (which is not
the case for Sister Celine's method). However, note that the
minimal telescoper is not necessarily the minimal-order recurrence
satisfied by the sum.

Another interesting aspect is the so-called WZ phenomenon.
Given an identity $\sum_k \bar{f}(n,k) = h(n)$ with hypergeometric
right-hand side~$h$. Define $f(n,k):=\bar{f}(n,k)/h(n)$, hence
$S(n):=\sum_k f(n,k)=1$. Therefore the sum satisfies $S(n+1)-S(n)=0$.
Now apply Gosper's algorithm to $f(n+1,k)-f(n,k)$.
If it succeeds we receive $g(n,k)$ such that
\[
  f(n+1,k)-f(n,k) = g(n,k+1)-g(n,k)
\]
where $g(n,k)/f(n,k)=:r(n,k)$ is a rational function.
The pair $(f,g)$ is called a \emph{WZ pair}.
The identity $\sum_k \bar{f}(n,k) = h(n)$ is \emph{certified} solely by
the rational function~$r(n,k)$. And as a bonus, one obtains the
\emph{companion identity}
\[
  \sum_{n\geq0} g(n,k) =
  \sum_{j\leq k-1} \!\Bigl(\lim_{n\to\infty} f(n,j) - f(0,j)\Bigr).
\]

We end this section by presenting the Apagodu-Zeilberger algorithm (cf.\ Section~\ref{sec:ct}).
It is motivated by the idea of executing Zeilberger's algorithm once
and for all for a generic input, and then re-use the information
obtained in this way in all future calls of the algorithm with
concrete inputs. In particular, one obtains a bound on the order
of the telescoper and therefore can omit the loop over~$r$. The
Apagodu-Zeilberger algorithm is based on the following theorem,
which is taken from~\cite{mohammed05}, together with its proof.

\begin{thm}
Let $f(n,k)=p(n,k)\cdot h(n,k)$ be a proper hypergeometric term
such that the polynomial $p(n,k)$ is of maximal degree and
\[
  h(n,k) = \frac{%
    \left(\,\prod_{j=1}^A \bigl(\alpha_j\bigr)_{a'_jn+a_jk}\right)
    \left(\,\prod_{j=1}^B \bigl(\beta_j\bigr)_{b'_jn-b_jk}\right)
  }{%
    \left(\,\prod_{j=1}^C \bigl(\gamma_j\bigr)_{c'_jn+c_jk}\right)
    \left(\,\prod_{j=1}^D \bigl(\delta_j\bigr)_{d'_jn-d_jk}\right)
  } \cdot z^k
\]
with $a_j,a'_j,b_j,b'_j,c_j,c'_j,d_j,d'_j\in\set N$.
Furthermore, let
\[
  r := \max\Biggl(\ \sum_{j=1}^A a_j + \sum_{j=1}^D d_j\ , \
  \sum_{j=1}^B b_j + \sum_{j=1}^C c_j\ \Biggr).
\]
Then there exist polynomials $p_0(n),\dots,p_r(n)\in C[n]$, not all zero,
and a rational function $q(n,k)\in C(n,k)$ such that
$g(n,k):=q(n,k)f(n,k)$ satisfies
\[
  \sum_{i=0}^r p_i(n)f(n+i,k) = g(n,k+1)-g(n,k).
\]
\end{thm}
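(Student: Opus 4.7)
The plan is to apply Gosper's algorithm (Section~\ref{sec:Gosper}) to a candidate summand with undetermined parameters, and then use a dimension count in the spirit of Section~\ref{sec:ct} to force a nonzero solution. First, I would introduce undetermined polynomials $p_0(n),\dots,p_r(n)\in C[n]$ and consider
$$F(n,k) := \sum_{i=0}^r p_i(n)\,f(n+i,k),$$
looking for a rational $q(n,k)$ such that $g(n,k)=q(n,k)f(n,k)$ satisfies $F(n,k)=g(n,k+1)-g(n,k)$. Using the shift rule $(\alpha)_{m+1}/(\alpha)_m=\alpha+m$, I write each quotient $f(n+i,k)/f(n,k)$ and $f(n,k+1)/f(n,k)$ as an explicit rational function of $k$: the Pochhammer blocks of $h$ produce polynomial factors of $k$-degrees $a_j,b_j,c_j,d_j$, while the polynomial prefactor $p(n,k)$ only contributes rational corrections of bounded degree.

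Second, from these expressions I can read off a Gosper form $(A,B,C)$ for the shift quotient $F(n,k+1)/F(n,k)$, with the property that $A(k)$ depends \emph{linearly} on $p_0,\dots,p_r$ while $B$ and $C$ are independent of them. The key quantitative fact is that $\deg_k B\le \sum_j b_j+\sum_j c_j$ and $\deg_k C\le \sum_j a_j+\sum_j d_j$. The associated Gosper equation $B(k)\,z(k+1)-C(k-1)\,z(k)=A(k)$ is therefore linear in the $p_i(n)$, and the degree analysis of Section~\ref{sec:Gosper} supplies an explicit upper bound $d$ on $\deg_k z$ in terms of $r$ and the $a_j,b_j,c_j,d_j$. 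Making the ansatz $z(k)=\sum_{j=0}^d z_j(n)\,k^j$ and equating powers of $k$ yields a homogeneous linear system over $C(n)$ in the $(r{+}1)+(d{+}1)$ unknowns $p_0,\dots,p_r,z_0,\dots,z_d$.

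The crucial step is the counting: choosing $r = \max\bigl(\sum_j a_j+\sum_j d_j,\ \sum_j b_j+\sum_j c_j\bigr)$ makes the number of unknowns strictly exceed the number of scalar equations, so the kernel is nontrivial. The main obstacle, and the subtlest part of the argument, is to exclude the degenerate case where every kernel vector has $p_0=\dots=p_r=0$: such a vector would yield a nonzero polynomial $z(k)$ solving the homogeneous equation $B(k)z(k+1)=C(k-1)z(k)$, which is prevented by the Gosper-form coprimality $\gcd(B(k),C(k+i))=1$ for all $i\in\set N$ together with the explicit degree bound on $z$. Granted this nondegeneracy, any nontrivial kernel element provides a tuple $(p_0,\dots,p_r)$ with not all $p_i$ zero and a corresponding rational certificate $q(n,k)$ obtained from $z(k)$ via Gosper's Key Lemma, so that $g(n,k)=q(n,k)f(n,k)$ yields the telescoping relation claimed.
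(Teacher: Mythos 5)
Your proposal follows essentially the same route as the paper's proof. Both reduce the telescoping relation to a single polynomial identity in $k$ of the form $A(k)=B(k)z(k+1)-C(k-1)z(k)$, where $A$ is linear and homogeneous in the unknown $p_i(n)$ and $z$ is an undetermined polynomial, and both conclude by counting unknowns against the coefficient equations in $k$; the paper's $u(n,k)$ and $v(n,k)$, of $k$-degrees $\sum_j a_j+\sum_j d_j$ and $\sum_j b_j+\sum_j c_j$, play exactly the role of your $B$ and $C$. The only presentational difference is that the paper avoids invoking the Gosper form: it introduces $\bar h(n,k)$ with $n+r$ in the denominator parameters so that each $h(n+i,k)/\bar h(n,k)$ is visibly a polynomial, and reads off $u$ and $v$ directly from the Pochhammer blocks, with no gcd computations.

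Two remarks. First, your degree bounds on $B$ and $C$ are swapped: the coefficient of $z(k+1)$ comes from the numerator of the shift quotient and has $k$-degree at most $\sum_j a_j+\sum_j d_j$, while the coefficient of $z(k)$ comes from the denominator and has degree at most $\sum_j b_j+\sum_j c_j$. Since only the maximum of the two enters the bound on $r$, this is harmless. Second, your justification of nondegeneracy is not correct as stated: the Gosper coprimality $\gcd(B(k),C(k+i))=1$ for $i\in\set N$ does not exclude nonzero polynomial solutions of the homogeneous equation $B(k)z(k+1)=C(k-1)z(k)$ (for instance $B(k)=k$, $C(k)=k+2$, $z(k)=k$), so something more is needed, e.g., the observation that such a solution would force $q(n,k)f(n,k)$ to be free of $k$ and hence $f$ to be a rational function of $k$ up to a $k$-free factor. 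To be fair, the paper's own proof is silent on this point (just as it leaves the analogous question as an exercise in the rational case of Section~\ref{sec:ct}), so you are being more careful than the source, but the specific argument you sketch would not close the gap.
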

\begin{proof}
Let
\[
  \bar{h}(n,k) = \frac{
    \left(\prod_{j=1}^A (\alpha_j)_{a'_jn+a_jk}\right)
    \left(\prod_{j=1}^B (\beta_j)_{b'_jn-b_jk}\right)}{%
    \left(\prod_{j=1}^C (\gamma_j)_{c'_j(n+r)+c_jk}\right)
    \left(\prod_{j=1}^D (\delta_j)_{d'_j(n+r)-d_jk}\right)}
    \cdot z^k.
\]
Then we obtain for its shift-quotient
\begin{align*}
  \frac{\bar{h}(n,k+1)}{\bar{h}(n,k)} &= \frac{
    \left(\prod_{j=1}^A (\alpha_j+a'_jn+a_jk)_{a_j}\right)
    \left(\prod_{j=1}^D (\delta_j+d'_j(n+r)-d_jk-d_j)_{d_j}\right)}{%
    \left(\prod_{j=1}^B (\beta_j+b'_jn-b_jk-b_j)_{b_j}\right)
    \left(\prod_{j=1}^C (\gamma_j+c'_j(n+r)+c_jk)_{c_j}\right)} \cdot z \\
  & =: \frac{u(n,k)}{v(n,k)}.
\end{align*}
Let $g(n,k)=v(n,k-1)\cdot z(k)\cdot\bar{h}(n,k)$ and plug it into
the telescopic relation:
\[
  \sum_{i=0}^r p_i(n) p(n+i,k) h(n+i,k) =
  v(n,k) z(k+1) \bar{h}(n,k+1) - v(n,k-1) z(k) \bar{h}(n,k)
\]
then, dividing by $\bar{h}(n,k)$ yields
\[
  \underbrace{\sum_{i=0}^r p_i(n) p(n+i,k) \frac{h(n+i,k)}{\bar{h}(n,k)}}_{=:w(n,k)}
  = u(n,k) z(k+1) - v(n,k-1) z(k).
\]
Note that $w(n,k)$ is a polynomial, since
\begin{align*}
  \frac{h(n+i,k)}{\bar{h}(n,k)} &=
  \left(\prod\nolimits_{j=1}^A (\alpha_j+a'_jn+a_jk)_{i a'_j}\right)
  \left(\prod\nolimits_{j=1}^C (\gamma_j+c'_j(n+i)+c_jk)_{(r-i)c'_j}\right) \\
  & \times
  \left(\prod\nolimits_{j=1}^B (\beta_j+b'_jn-b_jk)_{i b'_j}\right)
  \left(\prod\nolimits_{j=1}^D (\delta_j+d'_j(n+i)-d_jk)_{(r-i)d'_j}\right).
\end{align*}
Now make an ansatz for $z(k)=\sum_{i=0}^s z_i k^i$ where
\[
  s := \deg_k(w) - \max\bigl(\deg_k(u),\deg_k(v)\bigr).
\]
Coefficient comparison w.r.t.~$k$ yields
$\deg_k(w)+1$ equations in the $(r+1)+(s+1)$ unknowns.
The condition that the number of unknowns should be greater than the number
of equations yields
\[
  r+s+2 \geq \deg_k(w)+2 \implies
  r \geq \max\bigl(\deg_k(u),\deg_k(v)\bigr).
\]
But note that
\[
  \deg_k(u) = \sum_{j=1}^A a_j + \sum_{j=1}^D d_j
  \quad\text{and}\quad
  \deg_k(v) = \sum_{j=1}^B b_j + \sum_{j=1}^C c_j.
  \qedhere
\]
\end{proof}

\section{D-Finite Functions in One Variable}\label{sec:DfiniteUniv}

\begin{defi}
  A function $f(x)$ is called D-finite (differentiably finite, or holonomic)
  if it satisfies a nontrivial linear ordinary differential equation (LODE)
  with polynomial coefficients, i.e.,
  \[
    p_r(x)f^{(r)}(x) + \dots + p_1(x)f'(x) + p_0(x) f(x) = 0
  \]
  with $p_0,\dots,p_r\in C[x]$ (not all zero).
\end{defi}

\begin{ex}
  The following functions are D-finite (w.r.t.~$x$):
  constant functions, $x^n$, $\sin(x)$, $\exp(x)$, \dots .
  The following functions are not D-finite: $\tan(x)$, $\Gamma(x)$.
\end{ex}

D-finite functions share many nice features: they are described by
finitely many initial conditions, hence by a finite amount of data,
which is desirable for computer implementations. They constitute a
quite rich class of functions, which contains most of the elementary functions,
and many special functions. Last but not least, they satisfy nice closure properties,
as we will demonstrate below.

\begin{thm}
  If $f(x)$ and $g(x)$ are D-finite, then also $f(x)+g(x)=:h(x)$ is D-finite.
\end{thm}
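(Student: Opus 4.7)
The plan is to prove this by a dimension-counting argument in the $C(x)$-vector space spanned by the derivatives. First, suppose $f$ satisfies an LODE of order $r$ and $g$ satisfies one of order $s$. Solving the equation for $f$ for the top derivative, I can express $f^{(r)}$ as a $C(x)$-linear combination of $f, f', \ldots, f^{(r-1)}$. By differentiating this relation and substituting iteratively, every higher derivative $f^{(m)}$ for $m \geq r$ can also be written as a $C(x)$-linear combination of $f, f', \ldots, f^{(r-1)}$. Hence the $C(x)$-vector space
\[
V_f := \mathrm{span}_{C(x)}\{f, f', f'', \ldots\}
\]
has dimension at most $r$, and similarly $V_g$ has dimension at most $s$.

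Next, consider the subspace $V_h := \mathrm{span}_{C(x)}\{h, h', h'', \ldots\}$ where $h = f + g$. By linearity of differentiation, $h^{(i)} = f^{(i)} + g^{(i)} \in V_f + V_g$, so
\[
V_h \subseteq V_f + V_g,
\]
and therefore $\dim_{C(x)} V_h \leq r + s$. Consequently, the $r+s+1$ derivatives $h, h', \ldots, h^{(r+s)}$ must be $C(x)$-linearly dependent: there exist rational functions $q_0, q_1, \ldots, q_{r+s} \in C(x)$, not all zero, such that
\[
\sum_{i=0}^{r+s} q_i(x) h^{(i)}(x) = 0.
\]
Clearing denominators produces a nontrivial LODE with polynomial coefficients satisfied by $h$, which is exactly the definition of D-finiteness.

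The main obstacle, such as it is, is a clean justification that $V_f$ is finite-dimensional; this requires the iteration argument above showing that once $f^{(r)}$ lies in the span of $f, \ldots, f^{(r-1)}$, so does every $f^{(m)}$ with $m \geq r$, which follows by induction on $m$ using the product rule to differentiate the coefficients. Everything else is pure linear algebra, and as a bonus the argument is constructive: to find an explicit LODE of order at most $r+s$ for $h$, one writes out the $h^{(i)}$ in the finite basis of $V_f + V_g$ and solves a linear system over $C(x)$.
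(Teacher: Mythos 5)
Your proof is correct and uses essentially the same idea as the paper: both arguments rest on the observation that all derivatives of $f$ (resp.\ $g$) lie in the $C(x)$-span of $f,\dots,f^{(r-1)}$ (resp.\ $g,\dots,g^{(s-1)}$), so that $h,h',\dots,h^{(r+s)}$ must be $C(x)$-linearly dependent. The paper phrases this as a linear system with $u+1$ unknowns and $r+s$ equations having a nontrivial solution for $u=r+s$, which is just your dimension count in different clothing.
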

\begin{proof}
  Let $r,s$ denote the orders of $f,g$, respectively. We have to show that
  there exist $u\in\set N$ and $p_0,\dots,p_u\in C[x]$ such that
  \[
    p_u(x)h^{(u)}(x) + \dots + p_0(x) h(x) = 0,
  \]
  which is equivalent to
  \[
    p_u(x)\bigl(f^{(u)}(x)+g^{(u)}(x)\bigr)+\dots+p_0(x)\bigl(f(x)+g(x)\bigr) = 0.
  \]
  Using the LODE for $f$ we have for any $u\in\set N$ that $f^{(u)}(x)$ can be
  written as a linear combination of $f^{(r-1)}(x),\dots,f(x)$. The same is true
  for~$g(x)$ with $r$ replaced by~$s$. Now we get
  \begin{align*}
    & \ell_0(p_0,\dots,p_u)\cdot f(x) + \ell_1(p_0,\dots,p_u)\cdot f'(x) + \dots \\
    & \qquad {} + \ell_{r-1}(p_0,\dots,p_u)\cdot f^{(r-1)}(x) + {} \\
    & \ell_r(p_0,\dots,p_u)\cdot g(x) + \ell_{r+1}(p_0,\dots,p_u)\cdot g'(x) + \dots \\
    & \qquad {} + \ell_{r+s-1}(p_0,\dots,p_u)\cdot g^{(s-1)}(x) = 0,
  \end{align*}
  where each $\ell_j$ is a linear polynomial in the~$p_i$:
  \[
    \ell_j = \sum_{i=0}^u c_{i,j}(x)\cdot p_i(x).
  \]
  The linear system $\ell_0=\dots=\ell_{r+s-1}=0$ has $u+1$ unknowns, and
  hence it is guaranteed to have a nontrivial solution if $u:=r+s$ is chosen.
\end{proof}
Note that there may exist a solution for smaller~$u$, hence it is a good
idea to try. In practice, one loops for $u=1,2,\dots$.
Without proof (because it is similar to the proof of the previous theorem)
we present the next theorem about closure properties.

\begin{thm}\label{thm:closureD}
  If $f(x)$ and $g(x)$ are D-finite then also
  \renewcommand{\labelenumi}{(\roman{enumi})}
  \begin{enumerate}
  \item $f(x)\cdot g(x)$ (the proof is analogous to the previous theorem)
  \item $\int f(x)\,dx$
  \item $f'(x)$ or more generally: any differential polynomial in $f,f',f'',\dots$
  \item $f(a(x))$ where $a(x)$ is an algebraic function, i.e.,
    $p(x,a)=0$ with $p\in C[x,y]$.
  \end{enumerate}
\end{thm}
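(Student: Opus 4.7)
The plan is to follow the same template as the proof of closure under addition: for each part, exhibit a finite-dimensional $C(x)$-vector space that contains $h$ and all of its derivatives, and then conclude by dimension counting that sufficiently many derivatives of $h$ are $C(x)$-linearly dependent. This linear dependence is exactly a nontrivial LODE for~$h$ with polynomial coefficients.

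For (i), with $f$ of order $r$ and $g$ of order~$s$, the Leibniz rule gives $(fg)^{(u)}=\sum_{k=0}^u\binom{u}{k}f^{(k)}g^{(u-k)}$. The LODEs for $f$ and $g$ allow us to rewrite any $f^{(i)}$ as a $C(x)$-linear combination of $f,f',\dots,f^{(r-1)}$ and likewise for $g$, so every derivative $(fg)^{(u)}$ lies in the $C(x)$-span of the $rs$ products $\{f^{(i)}g^{(j)}:0\le i<r,\ 0\le j<s\}$. The $rs+1$ vectors $fg,(fg)',\dots,(fg)^{(rs)}$ must therefore be dependent, yielding an LODE of order at most~$rs$. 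Part (ii) is immediate: if $h=\int f\,dx$ then $h'=f$, so the LODE $\sum_{i=0}^r p_i f^{(i)}=0$ becomes $\sum_{i=0}^r p_i h^{(i+1)}=0$, an LODE of order $r+1$ for~$h$. Part (iii) for pure derivatives is similar, since derivatives of $f^{(k)}$ are just higher derivatives of $f$, still lying in the $r$-dimensional $C(x)$-span of $\{f,f',\dots,f^{(r-1)}\}$; the differential polynomial case is then obtained by iterating the closure properties under addition (already proved) and multiplication (part~(i)).

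The main obstacle is part (iv), so I would devote most care to it. Let $h(x)=f(a(x))$ where $a$ is algebraic, say of degree~$d$, so $K:=C(x,a(x))$ is a field extension of $C(x)$ of dimension~$d$. By implicit differentiation, $a'(x)\in K$, and inductively every derivative $a^{(k)}$ lies in~$K$. The chain rule and Fa\`a di Bruno's formula express $h^{(u)}$ as a $K$-linear combination of $f(a(x)),f'(a(x)),\dots,f^{(u)}(a(x))$. Since the LODE for~$f$ evaluated at~$a(x)$ gives $\sum_{i=0}^r p_i(a(x))f^{(i)}(a(x))=0$ with coefficients in~$K$, every $f^{(i)}(a(x))$ is a $K$-linear combination of $f(a(x)),\dots,f^{(r-1)}(a(x))$. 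Therefore all derivatives of~$h$ lie in the $K$-vector space spanned by these $r$ elements, which as a $C(x)$-vector space has dimension at most $rd$. Taking $rd+1$ consecutive derivatives of~$h$ produces a nontrivial $C(x)$-linear dependence, i.e., an LODE with polynomial coefficients. The subtle point here is justifying that $p_r(a(x))\ne 0$ and that the derivatives of $a$ indeed stay in~$K$ (which follows from differentiating the minimal polynomial $p(x,a(x))=0$), but both are straightforward.
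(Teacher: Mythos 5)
Your proof is correct and takes essentially the approach the paper intends: the paper omits the proof of this theorem entirely, remarking only that it is analogous to the proof of closure under addition, and your dimension-counting arguments are precisely that analogy. In particular, your treatment of part (iv) via the degree-$d$ field extension $C(x,a(x))$ mirrors the argument the paper later gives for the special case that algebraic functions themselves are D-finite, and you correctly flag the only subtleties (that $a'\in C(x,a)$ by implicit differentiation and that the leading coefficient does not vanish after substitution).
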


We use the following operator notation: Let $D_x$ denote the differentiation
w.r.t.~$x$, i.e., $D_x(f(x))=f'(x)$, $D_x^2(f(x))=f''(x)$, etc., and also
$D_x^0(f(x))=f(x)$.
Let $C(x)\langle D_x\rangle$ denote the polynomial ring in $D_x$ with
coefficients in~$C(x)$. This ring is non-commutative as $D_x\cdot x = x\cdot D_x+1$
(Leibniz rule), or more generally: $D_x\cdot r(x)=r(x)\cdot D_x+r'(x)$ for any
$r\in C(x)$. Using this notation, $f(x)$ is D-finite if and only if
\[
  \exists L\in C(x)\langle D_x\rangle\setminus\{0\}\colon
  L(f(x))=0.
\]
Some of the closure properties stated in Theorem~\ref{thm:closureD} can
be conveniently expressed and proven using the operator viewpoint.
Let $L,M\in C(x)\langle D_x\rangle\setminus\{0\}$ annihilate $f,g$,
respectively, i.e., $L(f(x))=0$ and $M(g(x))=0$. Then:
\begin{itemize}
\item $L\cdot D_x$ annihilates $\int f(x)\,dx$.
\item The least common left multiple of $L$ and~$M$, that is the lowest-order
  operator that equals $U\cdot L$ and $V\cdot M$ for certain
  $U,V\in C(x)\langle D_x\rangle$, annihilates $f+g$ (actually:
  any linear combination $c_1f+c_2g$ for arbitrary constants $c_1,c_2$).
\item If $h$ satisfies $L(h)=g$, then $h$ is also D-finite,
  because $(M\cdot L)(h)=0$.
\end{itemize}

The following theorem is actually a corollary of
Theorem~\ref{thm:closureD}(iv) with $f(x)=x$,
but we find it enlightening to state and prove it separately.
\begin{thm}
  Let $f(x)$ be an algebraic function. Then $f$ is D-finite.
\end{thm}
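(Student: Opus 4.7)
The plan is to exhibit $f, f', f'', \ldots$ as elements of a finite-dimensional $C(x)$-vector space; once this is done, a linear dependence is forced by dimension counting, and that linear dependence is exactly the desired LODE.

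First I would replace the defining polynomial by the minimal polynomial $p \in C(x)[y]$ of $f(x)$ over $C(x)$ (clearing denominators if one insists on coefficients in $C[x]$). Writing $d := \deg_y p$, the quotient ring $K := C(x)[y]/\langle p \rangle$ is a field (a finite algebraic extension of $C(x)$), and $V := C(x) + C(x)f + \cdots + C(x)f^{d-1}$ is a $d$-dimensional $C(x)$-vector space isomorphic to $K$ via $y \mapsto f$.

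Next I would show that $f'(x) \in V$. Differentiating the identity $p(x, f(x)) = 0$ implicitly gives
\[
  p_x(x,f(x)) + p_y(x,f(x)) \, f'(x) = 0,
\]
where $p_x, p_y$ denote the partial derivatives of $p$ with respect to $x$ and $y$. Because $p$ is the minimal polynomial and $\deg_y p_y < \deg_y p$, the element $p_y(x,f(x))$ is nonzero in the field~$K$, hence invertible. Thus $f'(x) = -p_x(x,f(x))/p_y(x,f(x)) \in K \cong V$. By induction, every higher derivative $f^{(k)}(x)$ lies in $V$ as well, since $V$ is closed under the derivation $D_x$ extended to $K$ (differentiating a polynomial expression in $f$ with rational coefficients in $x$ produces another such expression, using $f' \in V$ and the product rule, then reducing modulo $p$).

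Finally, since $V$ has dimension $d$ over $C(x)$, the $d+1$ elements $f, f', f'', \ldots, f^{(d)}$ must be linearly dependent over $C(x)$: there exist $q_0(x), \ldots, q_d(x) \in C(x)$, not all zero, with $\sum_{i=0}^{d} q_i(x) f^{(i)}(x) = 0$. Clearing denominators yields a nontrivial LODE with polynomial coefficients, so $f$ is D-finite. The only subtle point, and the place I would be most careful, is the invertibility of $p_y(x,f(x))$; this is exactly where the minimality (equivalently, separability, which is automatic in characteristic zero) of $p$ is used. Everything else is bookkeeping.
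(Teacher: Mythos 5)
Your proof is correct and follows essentially the same route as the paper: implicit differentiation of the minimal polynomial, inversion of the derivative's denominator modulo the minimal polynomial (the paper does this explicitly via the extended Euclidean algorithm, which is exactly how one computes the inverse of $p_y(x,f)$ in the field $C(x)[y]/\langle p\rangle$ that you invoke), and then a dimension count in the $d$-dimensional $C(x)$-vector space spanned by $1,f,\dots,f^{d-1}$. Your remark about where minimality/separability is used matches the paper's observation that $\gcd(m,r)=1$.
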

\begin{proof}
  Let $m\in C[x,y]$ be the minimal polynomial of~$f$, that is
  $m(x,f(x))=0$ and $m$ is irreducible. Then we have
  \[
    a_d(x)(f(x))^d + a_{d-1}(x)(f(x))^{d-1} + \dots + a_1(x)f(x) + a_0(x) = 0.
  \]
  Hence
  \[
    f' = \frac{-(a'_df^d + \dots +a'_1f+a'_0)}{d\cdot a_d f^{d-1} + \dots + a_1}
    =: \frac{q(x,f)}{r(x,f)}.
  \]
  Note that $\gcd(m,r)=1$, hence by the extended Euclidean algorithm there exist
  $u,v\in C(x)[f]$ with $u\cdot m+v\cdot r=1$. We get that $v(x,f)\cdot r(x,f)=1$
  modulo $m(x,f)$, hence
  \[
    f' = \frac{q(x,f)\cdot v(x,f)}{r(x,f)\cdot v(x,f)} = q(x,f)\cdot v(x,f) =
    c_{1,d-1} f^{d-1} + \dots + c_{1,1}f + c_{1,0}
  \]
  modulo~$m$ for some $c_{1,j}\in C(x)$.
  We see that the derivative $f'$ can be expressed as a $C(x)$-linear
  combination of the powers $f^0,f^1,\dots,f^{d-1}$. Now we differentiate
  again and perform a similar rewriting
  \[
    f'' = c_{2,d-1}f^{d-1}+\dots+c_{2,1}f+c_{2,0},
  \]
  which shows that also $f''$ can be expressed as such a $C(x)$-linear
  combination. Summarizing, we learn that all derivatives of~$f$ live
  in a $C(x)$-vector space of dimension at most~$d$, and therefore
  $f,f',f'',\dots,f^{(d)}$ will be $C(x)$-linearly dependent.
  It follows that $f$ satisfies an LODE with polynomial coefficients
  of order at most~$d$.
\end{proof}

We now turn our attention to univariate sequences, which satisfy
analogous properties as univariate D-finite functions.
\begin{defi}
  A sequence $(a_n)_{n\in\set N}$ is called P-recursive (or holonomic,
  or D-finite) if it satisfies a nontrivial linear ordinary
  recurrence equation (LORE) with polynomial coefficients, i.e.,
  \[
    p_r(n)\cdot a_{n+r} + \dots + p_1(n)\cdot a_{n+1} + p_0(n)\cdot a_n =0
  \]
  with $p_0,\dots,p_r\in C[n]$ (not all zero).
\end{defi}

\begin{ex}
  Examples for P-recursive sequences are the Fibonacci numbers,
  polynomial sequences,
  hypergeometric terms, harmonic numbers, orthogonal polynomials.
  Not P-recursive is for example the sequence of prime numbers.
\end{ex}

P-recursive sequences share many nice features: they are described by
finitely many initial conditions, hence by a finite amount of data,
which is desirable for computer implementations. They constitute a
quite rich class of sequences, and last but not least, they satisfy
nice closure properties, as we will demonstrate below.

\begin{thm}\label{thm:closureS}
  If $a_n$ and $b_n$ are P-recursive then also the following are P-recursive:
  \renewcommand{\labelenumi}{(\roman{enumi})}
  \begin{enumerate}
  \item $a_n+b_n$
  \item $a_n \cdot b_n$
  \item $a_{cn+d}$ for integers $c,d\in\set Z$
  \item $\sum_n a_n$, the indefinite sum, i.e., $c_n$ such that $c_{n+1}-c_n=a_n$
  \end{enumerate}
\end{thm}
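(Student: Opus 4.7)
The plan is to mirror the proofs of the analogous closure properties for D-finite functions. I introduce the shift operator $S_n$ acting by $S_n(a_n) = a_{n+1}$, together with the non-commutative rewriting rule $S_n \cdot n = (n+1)\cdot S_n$, so that the ring $C(n)\langle S_n\rangle$ of recurrence operators plays the role here that $C(x)\langle D_x\rangle$ plays in the differential setting. By hypothesis there exist nonzero operators $L, M \in C(n)\langle S_n\rangle$ with $L(a_n) = 0$ and $M(b_n) = 0$, say of orders $r$ and $s$; in both cases I can solve for the highest shift and thereby rewrite every $a_{n+k}$ with $k \ge r$ as a $C(n)$-linear combination of $a_n, a_{n+1}, \ldots, a_{n+r-1}$, and analogously for $b$.

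For (i), the shifts $(a+b)_{n+u}$ for $u = 0, \ldots, r+s$ all lie in the $C(n)$-vector space spanned by $\{a_{n+i}\}_{0 \le i < r} \cup \{b_{n+j}\}_{0 \le j < s}$, which has dimension at most $r+s$. Linear dependence among $r+s+1$ vectors in this space yields an LORE for $a_n + b_n$; clearing denominators gives polynomial coefficients. (Equivalently, the least common left multiple of $L$ and $M$ in $C(n)\langle S_n\rangle$ annihilates $a_n + b_n$.) For (ii) the same dimension count applies to the larger space spanned by the $rs$ products $a_{n+i}b_{n+j}$ with $0 \le i < r$, $0 \le j < s$, so any $rs+1$ consecutive shifts of $a_n b_n$ are $C(n)$-linearly dependent.

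For (iii), assume first $c > 0$. Writing $m = cn+d$, the LORE for $a$ expresses each $a_{m+ck}$ as a $C(m) \subseteq C(n)$-linear combination of $a_m, a_{m+1}, \ldots, a_{m+r-1}$, so the $r+1$ sequences $c_{n+u} = a_{c(n+u)+d}$ for $u = 0, \ldots, r$ live in a $C(n)$-space of dimension at most $r$ and hence are dependent. For $c < 0$ one uses the LORE in reverse to express backward shifts; this is possible away from the finitely many poles of the leading coefficient of $L$, and the resulting $C(n)$-linear dependence can still be cleared to polynomial form. (Alternatively, one reduces to $c > 0$ by first showing that $\tilde a_n := a_{-n}$ satisfies an LORE of the same order; the case $c = 0$ is trivial.) For (iv), the operator proof is cleanest: since $(S_n - 1)(c_n) = a_n$, we have $\bigl(L\cdot(S_n - 1)\bigr)(c_n) = L(a_n) = 0$, and $L\cdot(S_n - 1)$ is a nonzero element of $C(n)\langle S_n\rangle$ of order $r+1$.

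The main obstacle is the bookkeeping for part (iii) with negative $c$, where inverting the shift operator is only legitimate away from the zeros of the leading coefficient of $L$; one has to verify that this loss at finitely many indices does not prevent the existence of a polynomial-coefficient LORE globally. Everything else is routine linear algebra in a finite-dimensional $C(n)$-vector space, directly analogous to the D-finite case proved earlier.
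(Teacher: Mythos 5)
Your proposal is correct and follows exactly the route the paper intends: the paper states this theorem without proof, remarking only that the recurrence-operator properties are analogous to the differential ones, and your argument is the straightforward transcription of the proof given for the D-finite sum (finite-dimensional $C(n)$-vector spaces for (i)--(iii), the operator product $L\cdot(S_n-1)$ for (iv), mirroring the paper's remark that $L\cdot D_x$ annihilates $\int f\,dx$). The only point worth flagging is that the division by the leading coefficient $p_r(n)$, needed to rewrite higher shifts, can fail at finitely many integers even for $c>0$ in (i)--(iii), not just in the backward-shift case you single out; this is harmless since the cleared-denominator recurrence holds for all sufficiently large $n$ and can be made to hold everywhere by multiplying through by a polynomial vanishing at the exceptional points.
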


Again, we employ operator notation:
$S_n$ denotes the forward shift operator, i.e.,
$S_n(a_n)=a_{n+1}$ and $S_n^2(a_n)=a_{n+2}$, etc.
We denote by $C(n)\langle S_n\rangle$ the ring of all linear recurrence
operators with coefficients in $C(n)$. We observe the following commutation
rule: $S_n\cdot n=(n+1)\cdot S_n$, or more generally: $S_n\cdot r(n)=r(n+1)\cdot S_n$
for any $r(n)\in C(n)$. The properties of linear recurrence operators are
analogous to the ones stated for linear differential operators.

D-finite functions and P-recursive sequences are closely related to each other,
as the next theorem shows. In fact, this is also the reason why, by abuse of
wording, we call P-recursive sequences sometimes also ``D-finite sequences''.

\begin{thm} A power series $f(x)$ is D-finite if and only if the sequence
  of its Taylor coefficients is P-recursive. Stated differently:
  A sequence is P-recursive if and only if its generating function is D-finite.
\end{thm}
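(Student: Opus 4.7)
The plan is to translate between differential operators in $C(x)\langle D_x\rangle$ and shift operators in $C(n)\langle S_n\rangle$ via the correspondence $f(x)=\sum_{n\ge 0}a_n x^n \leftrightarrow (a_n)_{n\ge 0}$. The key coefficient-level identity is $x^i D_x^j(x^n)=n^{\underline{j}}\,x^{n+i-j}$, so the coefficient of $x^N$ in $x^i D_x^j(f)$ equals $(N-i+j)^{\underline{j}}\,a_{N-i+j}$. Once this bookkeeping is set up, both directions become exercises in re-indexing.

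For the forward direction, assume $\sum_{j=0}^r p_j(x)\,f^{(j)}(x)=0$ is a nontrivial LODE for $f$. I would expand each $p_j(x)=\sum_i p_{j,i}x^i$ to rewrite the equation as $\sum_{i,j} p_{j,i}\,x^i D_x^j(f)=0$, then extract the coefficient of $x^N$ for $N$ large enough to avoid boundary issues, obtaining
\[
  \sum_{i,j} p_{j,i}\,(N-i+j)^{\underline{j}}\,a_{N-i+j}=0.
\]
A uniform shift $N\mapsto N+\max_i i$ turns this into a recurrence with polynomial coefficients in $N$ for $(a_n)$. Nontriviality follows by inspecting the contribution with maximal shift $S_n^k$, $k=\max(j-i)$: its coefficient is a nonzero polynomial arising from some nonzero~$p_{j,i}$.

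For the converse, assume $\sum_{j=0}^r p_j(n)\,a_{n+j}=0$ is a nontrivial LORE. I would multiply by $x^n$, sum over $n\ge 0$, and work with the Euler operator $\theta=xD_x$, which satisfies $\theta^i\bigl(\sum_n b_n x^n\bigr)=\sum_n n^i b_n x^n$ together with the commutation rule $\theta^i x^{-j}=x^{-j}(\theta-j)^i$. Using $\sum_{n\ge 0}a_{n+j}\,x^n=x^{-j}\bigl(f(x)-P_{j-1}(x)\bigr)$ with $P_{j-1}(x)=\sum_{\ell=0}^{j-1}a_\ell\,x^\ell$, multiplying through by $x^r$ to clear denominators yields
\[
  L(f)=Q(x),\qquad L:=\sum_{j=0}^r x^{r-j}\,p_j(\theta-j)\ \in\ C[x]\langle\theta\rangle,
\]
where $Q(x)$ is a polynomial of bounded degree that encodes the initial data $a_0,\dots,a_{r-1}$. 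Applying $D_x^{\deg Q+1}$ kills the right-hand side, so $D_x^{\deg Q+1}\cdot L$ annihilates $f$. To confirm $L\neq 0$, I would compute $L(x^n)=\sum_j p_j(n-j)\,x^{n+r-j}$; since the exponents $n+r-j$ are pairwise distinct for $j=0,\dots,r$, the vanishing of $L$ would force every $p_j$ to be identically zero, contradicting nontriviality of the recurrence.

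The step I expect to require the most care is the converse direction, specifically handling the inhomogeneous polynomial $Q$ arising from initial values and verifying that the final composed operator is still nonzero. Both issues are cleanly resolved by the commutation identity $\theta x^{-j}=x^{-j}(\theta-j)$ and by testing $L$ on monomials $x^n$; everything else is linear algebra and re-indexing.
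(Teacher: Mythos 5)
Your proof is correct. The forward direction is essentially the paper's argument: expand the polynomial coefficients of the LODE into monomials, extract the coefficient of $x^N$, and read off a recurrence with falling-factorial (equivalently Pochhammer) coefficients; you add a short argument for nontriviality of the resulting recurrence (distinct degrees of the falling factorials attached to the maximal shift), which the paper omits. Where you genuinely diverge is the converse. The paper disposes of it with ``write the recurrence in the form~\eqref{eq:de2re} and perform the above calculations backwards,'' which silently assumes that an arbitrary LORE can be massaged into that special shape and glosses over the inhomogeneous boundary terms coming from the initial coefficients $a_0,\dots,a_{r-1}$. You instead build the operator $L=\sum_j x^{r-j}p_j(\theta-j)$ directly from the recurrence via the Euler operator $\theta=xD_x$ and the commutation $\theta x^{-j}=x^{-j}(\theta-j)$, obtain $L(f)=Q(x)$ with $Q$ a polynomial of degree at most $r-1$ encoding the initial data, and kill $Q$ by left-composing with a power of $D_x$; nonvanishing of $L$ is checked by applying it to monomials, and nonvanishing of the composite follows since $C(x)\langle D_x\rangle$ has no zero divisors. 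This operator-algebra route is more work to set up but buys an honest, self-contained proof of the harder direction, including exactly the two points (inhomogeneity from initial values, nontriviality of the final annihilator) that the paper's one-line reversal leaves to the reader.
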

\begin{proof}
  Let $f(x)=\sum_{n=0}^{\infty} a_n x^n$. Then
  $f'(x) = \sum_{n=1}^{\infty} n\cdot a_n\cdot x^{n-1}$
  and, more generally for any $i\in\set N$, we get
  \[
    f^{(i)}(x)=\sum_{n=i}^{\infty} (n-i+1)_i a_n x^{n-i} =
    \sum_{n=0}^{\infty} (n+1)_i a_{n+i} x^n.
    \]
  Now we assume that $f$ satisfies the LODE
  \[
    \sum_{i=0}^r \sum_{j=0}^d p_{i,j} \, x^j f^{(i)}(x) = 0,
  \]
  which can equivalently be written as
  \[
    \sum_{i=0}^r \sum_{j=0}^d \sum_{n=0}^{\infty}
    p_{i,j} \, (n+1)_i \, a_{n+i} \, x^{n+j} = 0
  \]
  or
  \[
    \sum_{i=0}^r \sum_{j=0}^d \sum_{n=j}^{\infty}
    p_{i,j} \, (n-j+1)_i \, a_{n-j+i} \, x^n = 0.
  \]
  For $n\geq d$ we obtain the desired LORE:
  \begin{equation}\label{eq:de2re}
    \sum_{i=0}^r \sum_{j=0}^d p_{i,j} \, (n-j+1)_i \, a_{n-j+i} = 0.
  \end{equation}
  Conversely, if $a_n$ is a P-recursive sequence, then write its recurrence
  in the form~\eqref{eq:de2re} and perform the above calculations backwards,
  to conclude that its generating function will be D-finite.
\end{proof}

\section{D-Finite Functions in Several Variables}\label{sec:DfiniteMult}

D-finite functions in several variables are systematically studied by Lipshitz~\cite{Lipshitz1988, Lipshitz1989}. We will recall some basic closure properties
of these functions. Let $F = K(x_1, \ldots, x_n)$ be the field of rational functions in variables $x_1, \ldots, x_n$ and $\delta_{x_i}$ denote 
the usual partial derivation $\frac{\partial}{\partial x_i}$ with respect to $x_i$. Let
$\mathcal{D} := F \langle D_{x_1}, \ldots, D_{x_n} \rangle$ be the ring of linear differential operators over $F$ in which we have the commutation rules: 
\[D_{x_i} \cdot f = f \cdot D_{x_i} + \delta_i(f)\quad \text{for any $f \in F$ and $i\in \{1, \ldots, n\}$};\]
\[D_{x_i}\cdot D_{x_j} = D_{x_j}\cdot D_{x_i} \quad \text{for any $i, j$ with $1 \leq i \leq j \leq n$}.\]
Let $M = K[[x_1,\ldots, x_n ]]$ be the ring of formal power series in variables $x_1, \ldots, x_n$ and~$L \in \mathcal{D}$. Write
\[
    L = \sum_{i_1,\ldots,i_n} \ell_{i_1,\ldots,i_n} D_{x_1}^{i_1}\cdots D_{x_n}^{i_n}, 
\]
and define the action of $L $ on $f\in M$ by
\[
   L \cdot f = \sum_{i_1,\ldots,i_n}  \ell_{i_1,\ldots,i_n} \frac{\partial^{i_1}}{\partial x_1^{i_1}} \cdots  \frac{\partial^{i_n}}{\partial x_n^{i_n}}(f).
\]
This turns $M$ into a~$\mathcal{D}$-module. For $f \in K[[x_1,\ldots,x_n]]$, define the annihilating ideal $I_f$ of $f$ in~$\mathcal{D}$
as 
\[I_f = \left\lbrace L \in \mathcal{D} \ \bigg| \  L \cdot f = 0 \right\rbrace.\]
Note that $I_f$ is also a subspace of~$\mathcal{D}$ as vector spaces over $F$. In general, the dimension of the quotient $\mathcal{D}/I_f$
is infinite. 

\begin{defi}~A power series $f \in K[[x_1,\ldots,x_n]]$ is said to be \emph{D-finite} over $F$ if 
    \[
    \dim_F\big( \mathcal{D}/I_f \big) < + \infty. 
    \]
    A sequence $T \colon \mathbb{N}^n \rightarrow K$ is said to be \emph{D-finite} if the generating function
    \[
    f(x_1, \ldots, x_n) = \sum_{i_1,\ldots,i_n} T(i_1,\ldots, i_n) x_1^{i_1}\cdots x_n^{i_n}
    \]
    is D-finite over~$F$.
\end{defi}
\begin{lem}
    A series $f \in K[[x_1,\ldots, x_n]]$ is D-finite if and only if  
    \[
    I_f \cap F\<D_{x_i}> \neq \{0\}~\text{for each}~i \in \{1,\ldots,n \}.
    \]
\end{lem}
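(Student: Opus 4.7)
The forward direction is a pigeonhole argument. Assume $\dim_F(\mathcal{D}/I_f)<\infty$ and fix an index~$i$. The infinitely many residues $D_{x_i}^k+I_f$, $k=0,1,2,\dots$, must then be linearly dependent over $F$, so some nonzero $F$-linear combination $\sum_{k=0}^{r}a_kD_{x_i}^k$ lies in $I_f$, producing the desired nonzero element of $I_f\cap F\<D_{x_i}>$.

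For the converse, assume that for each $i$ a nonzero operator $L_i\in I_f\cap F\<D_{x_i}>$ of order $r_i$ in $D_{x_i}$ is available. The plan is to show that the finite set
\[
  \mathcal{B}=\{D_{x_1}^{\alpha_1}\cdots D_{x_n}^{\alpha_n}:0\le\alpha_i<r_i\text{ for }i=1,\dots,n\}
\]
already spans $\mathcal{D}/I_f$ as a left $F$-vector space; this immediately yields $\dim_F(\mathcal{D}/I_f)\le r_1\cdots r_n<\infty$. Since the monomials $D^\alpha$ generate $\mathcal{D}$ as a left $F$-module, it suffices to reduce each $D^\alpha$ modulo $I_f$ to an $F$-combination of elements of~$\mathcal{B}$.

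The core reduction works one coordinate at a time. Fix $i$ and $\alpha$ with $\alpha_i\ge r_i$, and use the commutativity of the $D_{x_i}$'s to write $D^\alpha=D^{\alpha'}D_{x_i}^{\alpha_i}$, where $\alpha'$ agrees with $\alpha$ except that its $i$-th entry is zero. Because the leading coefficient of $L_i$ is a unit of~$F$, the quotient $F\<D_{x_i}>/F\<D_{x_i}>L_i$ has $F$-basis $\{1,D_{x_i},\dots,D_{x_i}^{r_i-1}\}$, so there exist $b_0,\dots,b_{r_i-1}\in F$ with
\[
  D_{x_i}^{\alpha_i}\equiv\sum_{k=0}^{r_i-1}b_k\,D_{x_i}^k\pmod{I_f}.
\]
Left-multiplying by $D^{\alpha'}$ and then commuting each $b_k$ to the left of $D^{\alpha'}$ via the iterated Leibniz identity
\[
  D_{x_l}^{m}\cdot b=\sum_{j=0}^{m}\binom{m}{j}\,\delta_l^{m-j}(b)\,D_{x_l}^{j},
\]
one arrives at a congruence $D^\alpha\equiv\sum_\beta c_\beta D^\beta\pmod{I_f}$ with $c_\beta\in F$, where every occurring multi-index $\beta$ satisfies $\beta_i<r_i$ and $\beta_l\le\alpha_l$ for $l\ne i$.

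Applying this coordinate reduction for $i=1,2,\dots,n$ in turn drives every monomial into~$\mathcal{B}$: after processing index~$i$ all summands have their $i$-th exponent strictly below~$r_i$, and because the commuted coefficients $b_k$ lie in~$F$ they introduce no new $D$'s and therefore cannot raise any other exponent in subsequent steps. The only delicate point is precisely this bookkeeping—verifying via Leibniz that passing an element of $F$ to the left of a $D$-monomial $D^{\alpha'}$ spawns only terms whose $D$-exponents are componentwise bounded by those of $D^{\alpha'}$—but once this is checked, finite-dimensionality of $\mathcal{D}/I_f$ over $F$ follows immediately.
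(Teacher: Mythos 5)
Your proof is correct and follows essentially the same route as the paper: a pigeonhole/linear-dependence argument on the powers of $D_{x_i}$ for the forward direction, and for the converse a reduction of every monomial $D_{x_1}^{\alpha_1}\cdots D_{x_n}^{\alpha_n}$ modulo the operators $L_i$ to the finite spanning set with exponents below the orders $r_i$. The only difference is that you spell out the commutation bookkeeping (passing coefficients in $F$ leftward via Leibniz without raising exponents), which the paper leaves implicit.
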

\begin{proof}
    For the necessity, if $f$ is D-finite then~$ d = \dim_F(\mathcal{D} \cdot f) < +\infty$.
    Thus for any~$i = 1,2, \ldots, n$, the elements
    \[
        f, \,  D_{x_i}\cdot f,\,  D_{x_i}^2 \cdot f,\quad  \cdots, \quad D_{x_i}^d \cdot f
    \]
    are linearly dependent over~$F$, which implies that~$I_f \cap F\<D_{x_i}> \neq \{0\}$. 
    
    For the sufficiency, if~$I_f \cap F\<D_{x_i}> \neq \{0\}$ for each $i$ then there exists~$L_i \in I_f \setminus \{0\}$ such that~$L_i = \sum_{j=0}^{d_i} \ell_{i,j} D_{x_i}^j$.
    Hence we have that $D_{x_1}^{i_1}\cdots D_{x_n}^{i_n} \cdot f $ can be rewritten into a $F$-linear combination of terms
    \[
        D_{x_1}^{j_1}\cdots D_{x_n}^{j_n} \cdot f~\text{ with}~0 \leq j_s < d_i .
    \]
    It follows that~$\dim_F(\mathcal{D}\cdot f) < + \infty$.
\end{proof}
The following theorem summarizes some closure properties of D-finite functions. 
\begin{thm}
    Let $f,g \in K[[x_1,\ldots,x_n]]$ be D-finite over~$F$. Then
    \begin{enumerate}
        \item $f + g$ is D-finite.
        \item $L \cdot f$ is D-finite for any~$L \in \mathcal{D}$.
        \item $f \cdot g$ is D-finite.
        \item If $\alpha_1,\ldots,\alpha_n \in K[[y_1,\ldots,y_m]]$ are algebraic over $K(y_1,\ldots,y_m)$, and $f(\alpha_1,\ldots,\alpha_n)$ 
        is well-defined, then $f(\alpha_1,\ldots, \alpha_n)$ is D-finite over~$K(y_1,\ldots,y_n)$.
    \end{enumerate}
  \end{thm}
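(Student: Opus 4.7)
The plan is to use the finite-dimensionality characterization $\dim_F(\mathcal{D}\cdot h) < \infty$ throughout, and treat (1)--(4) in increasing order of difficulty.

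For (1), I would observe the elementary inclusion $\mathcal{D}\cdot(f+g)\subseteq \mathcal{D}\cdot f+\mathcal{D}\cdot g$ of $F$-subspaces of $K[[x_1,\ldots,x_n]]$, so $\dim_F \mathcal{D}\cdot(f+g)\leq \dim_F\mathcal{D}\cdot f+\dim_F\mathcal{D}\cdot g<\infty$. Item (2) is even easier: $\mathcal{D}\cdot(L\cdot f)\subseteq \mathcal{D}\cdot f$ because $\mathcal{D}$ is closed under left multiplication.

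For (3), I would invoke the Leibniz rule. Iterating $D_{x_i}(fg)=(D_{x_i}f)g+f(D_{x_i}g)$ shows that every $D^{\alpha}(fg)$ lies in the $F$-span of the products $\{u\cdot v:u\in \mathcal{D}\cdot f,\ v\in \mathcal{D}\cdot g\}$. If $\{u_1,\ldots,u_p\}$ and $\{v_1,\ldots,v_q\}$ are $F$-bases of $\mathcal{D}\cdot f$ and $\mathcal{D}\cdot g$ respectively, then $\mathcal{D}\cdot(fg)$ sits inside the $F$-span of $\{u_iv_j\}_{i,j}$, which has dimension at most $pq<\infty$.

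The main obstacle is (4), where one must pass from D-finiteness over $K(x_1,\ldots,x_n)$ to D-finiteness over $K(y_1,\ldots,y_m)$, and the intermediate derivatives live over the larger field $K(y_1,\ldots,y_m,\alpha_1,\ldots,\alpha_n)$. My plan is first to use the preceding lemma and prove that $h:=f(\alpha_1,\ldots,\alpha_n)$ is annihilated by a nonzero element of $K(y_1,\ldots,y_m)\<D_{y_j}>$ for each $j$. Applying the chain rule inductively, every $D_{y_j}^k h$ is a $K(y_1,\ldots,y_m,\alpha_1,\ldots,\alpha_n)$-linear combination of evaluations $(D_{x_1}^{\beta_1}\cdots D_{x_n}^{\beta_n}f)(\alpha_1,\ldots,\alpha_n)$, because partial derivatives of the $\alpha_i$ themselves lie in the algebraic extension (since algebraic functions are D-finite and their derivatives remain in the same field extension). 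Since $f$ is D-finite, only finitely many such evaluations are $K(y_1,\ldots,y_m,\alpha_1,\ldots,\alpha_n)$-linearly independent. Hence the sequence $(D_{y_j}^k h)_{k\geq 0}$ lives in a finite-dimensional vector space over $K(y_1,\ldots,y_m,\alpha_1,\ldots,\alpha_n)$, giving a linear relation with coefficients in that field. To descend to coefficients in $K(y_1,\ldots,y_m)$, I would use that $K(y_1,\ldots,y_m,\alpha_1,\ldots,\alpha_n)$ is a finite algebraic extension of $K(y_1,\ldots,y_m)$, so any finite-dimensional vector space over the larger field is still finite-dimensional over the smaller one, producing the desired annihilating operator in $K(y_1,\ldots,y_m)\<D_{y_j}>$. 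Doing this for every $j=1,\ldots,m$ and invoking the lemma concludes the proof.
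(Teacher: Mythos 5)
Your proofs of (1) and (2) coincide with the paper's: both rest on the inclusions $\mathcal{D}\cdot(f+g)\subseteq\mathcal{D}\cdot f+\mathcal{D}\cdot g$ and $\mathcal{D}\cdot(L\cdot f)\subseteq\mathcal{D}\cdot f$. For (3) you diverge: the paper reduces to the univariate closure property by invoking the preceding lemma (D-finiteness is equivalent to $I_h\cap F\langle D_{x_i}\rangle\neq\{0\}$ for each $i$) and then citing the one-variable product theorem, whereas you argue directly from the Leibniz rule that $\mathcal{D}\cdot(fg)$ lies in the $F$-span of the finitely many products $u_iv_j$. Your route is more self-contained and gives the explicit dimension bound $pq$; the paper's route is shorter on the page but leans on a theorem proved in a different (univariate) setting, whose transfer to coefficients in $F=K(x_1,\ldots,x_n)$ is left implicit. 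For (4) the paper offers no proof at all (it is left as an exercise), so your sketch is genuine added content; it follows the standard Lipshitz argument (chain rule, confinement of all $D_{y_j}^k h$ to a finite-dimensional space over the algebraic extension $E=K(y_1,\ldots,y_m,\alpha_1,\ldots,\alpha_n)$, then descent to $K(y_1,\ldots,y_m)$ using $[E:K(y_1,\ldots,y_m)]<\infty$) and is essentially right. The one point you should acknowledge is that the rational-function coefficients $c_{\beta,l}\in F$ expressing $D^{\beta}f$ in a finite basis must themselves be evaluable at $(\alpha_1,\ldots,\alpha_n)$; the hypothesis that $f(\alpha_1,\ldots,\alpha_n)$ is well-defined does not automatically guarantee this, and a complete proof needs a word about avoiding the poles of these coefficients.
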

  \begin{proof}
        \begin{enumerate}
            \item Since $\dim_F\bigl(\mathcal{D}\cdot f\bigr) < +\infty$ and $\dim_F\bigl(\mathcal{D}\cdot g\bigr) < +\infty$, we have that 
            \[
                \dim_F(\mathcal{D}\cdot f + \mathcal{D}\cdot g) < + \infty.
            \]
            By observing the fact that $\mathcal{D}\cdot(f+g) \subseteq \mathcal{D}\cdot f + \mathcal{D}\cdot g$, we obtain that 
            \[
                \dim_F\bigl( \mathcal{D}\cdot (f+g)\bigr) < + \infty.
            \]
            \item Since $L \cdot f \in \mathcal{D}\cdot f$ and $P \cdot (L \cdot f) \in \mathcal{D}\cdot f$ for any~$P \in \mathcal{D}$, we have that~$\mathcal{D}\cdot (L \cdot f) \subseteq \mathcal{D}\cdot f $.
            Hence 
            \[
                \dim_F\bigl( \mathcal{D}\cdot (L \cdot f)\bigr) \leq \dim_F\bigl( \mathcal{D}\cdot f \bigr) < +\infty.
            \]
            \item   
            Since $I_f \cap F\<D_{x_i}> \neq \{0\} $ and $I_g \cap F\<D_{x_i}> \neq \{0\} $, by the closure property in the univariate case (Theorem~\ref{thm:closureD} in Section~\ref{sec:DfiniteUniv}), we get~$I_{fg} \cap F\<D_{x_i}> \neq \{ 0\}$, which implies that $f\cdot g$ is D-finite.
            \item We leave the proof of this property as an exercise.
        \end{enumerate}
    \end{proof}
    \begin{defi}
        Let $\mathcal{A} := K[x_1,\ldots,x_n]\langle D_{x_1},\ldots,D_{x_n}\rangle$ and $f$ be an element of an $\mathcal{A}$-module~$M$. 
         A left ideal $J$ of~$\mathcal{A}$ is said to be \emph{holonomic} if for every subset $U \subseteq \{ x_1,\ldots,x_n, D_{x_1},\ldots,D_{x_n}\}$ with $|U| =n+1$, we have~$J \cap K[U] \neq \{0\}$. The element $f$ is said to be holonomic if the annihilating  ideal $J_f = \{L\in \mathcal{A} \mid L\cdot f =0\}$ is holonomic. 
    \end{defi}
    The following theorem says that the D-finiteness is equivalent to the holonomicity in the differential setting.
    \begin{thm}\label{thm:holiffdf}
        Let $f$ be an element of a $\mathcal{D}$-module which can also be viewed as an $\mathcal{A}$-module. 
        Then $f$ is holonomic if and only if $f$ is D-finite. 
    \end{thm}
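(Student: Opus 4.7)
The plan is to prove the two implications separately. The direction \emph{holonomic $\Rightarrow$ D-finite} is a direct application of the earlier lemma, whereas \emph{D-finite $\Rightarrow$ holonomic} requires a Bernstein-style dimension count, which is where the real work lies.

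For the easy direction, for each index $i\in\{1,\ldots,n\}$ consider the subset $U_i=\{x_1,\ldots,x_n,D_{x_i}\}$, which has cardinality $n+1$. By holonomicity, $J_f\cap K[U_i]\neq\{0\}$, and any element of this intersection is a polynomial in $D_{x_i}$ with coefficients in $K[x_1,\ldots,x_n]\subseteq F$, hence a nonzero element of $I_f\cap F\langle D_{x_i}\rangle$. The lemma characterizing D-finiteness then yields that $f$ is D-finite.

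For D-finite $\Rightarrow$ holonomic, introduce the Bernstein filtration $B_0\subseteq B_1\subseteq\cdots$ on $\mathcal{A}$, where $B_k$ is the $K$-linear span of all monomials $x_1^{\alpha_1}\cdots x_n^{\alpha_n}D_{x_1}^{\beta_1}\cdots D_{x_n}^{\beta_n}$ with $|\alpha|+|\beta|\le k$, so $\dim_K B_k=\binom{2n+k}{2n}$, a polynomial in $k$ of degree $2n$. For any $U\subseteq\{x_1,\ldots,x_n,D_{x_1},\ldots,D_{x_n}\}$ with $|U|=n+1$, a PBW argument on the subalgebra generated by $U$ gives $\dim_K(B_k\cap K[U])\ge \binom{n+1+k}{n+1}$, which is a polynomial in $k$ of degree $n+1$. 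The crucial claim is then that the evaluation map $B_k\to\mathcal{A}\cdot f$, $L\mapsto L\cdot f$, has image of $K$-dimension bounded by a polynomial in $k$ of degree only $n$. Granting this, the restriction of the evaluation map to $B_k\cap K[U]$ is forced to have a nontrivial kernel for $k$ sufficiently large, producing the required nonzero element of $J_f\cap K[U]$.

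The main obstacle is therefore the growth estimate on $\dim_K(B_k\cdot f)$. The plan for establishing it is as follows. Fix an $F$-basis $e_1,\ldots,e_m$ of $\mathcal{D}\cdot f$, choose $P_j\in\mathcal{A}$ with $e_j=P_j\cdot f$, and find a common denominator $Q\in K[x_1,\ldots,x_n]\setminus\{0\}$ together with a constant $d_0$ such that for all $i,j$ one has $Q\cdot(D_{x_i}\cdot e_j)=\sum_\ell q_{ij\ell}\,e_\ell$ with $q_{ij\ell}\in K[x_1,\ldots,x_n]$ of degree at most $d_0$. By induction on $|\beta|$, every monomial $x^\alpha D^\beta\in B_k$ then satisfies $Q^{|\beta|}\cdot x^\alpha D^\beta\cdot f=\sum_j p_{j}\,e_j$ for polynomials $p_j\in K[x_1,\ldots,x_n]$ of degree at most $(d_0+1)k$. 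Summing over monomials, $Q^k\cdot B_k\cdot f$ sits inside $\bigoplus_j K[x_1,\ldots,x_n]_{\le(d_0+1)k}\cdot e_j$, a $K$-space of dimension $m\cdot\binom{n+(d_0+1)k}{n}=O(k^n)$. Since $Q$ is invertible in $F$ and $\mathcal{D}\cdot f$ is an $F$-vector space, multiplication by $Q^k$ is a $K$-linear injection on $B_k\cdot f$, yielding $\dim_K(B_k\cdot f)=O(k^n)$. Comparing this with $\dim_K(B_k\cap K[U])=\Theta(k^{n+1})$ completes the argument.
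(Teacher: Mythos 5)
Your proposal is correct and follows essentially the same route as the paper: the easy direction via the sets $U_i=\{x_1,\dots,x_n,D_{x_i}\}$ and the lemma characterizing D-finiteness, and the hard direction via the same dimension count, where your space $Q^{-k}\bigoplus_j K[x_1,\dots,x_n]_{\le(d_0+1)k}\,e_j$ is exactly the paper's $V_k$ and your $B_k\cap K[U]$ plays the role of its $W_{U,k}$, with the identical comparison of $\Theta(k^{n+1})$ against $O(k^n)$. The Bernstein-filtration packaging and the explicit injectivity remark for multiplication by $Q^k$ are only presentational differences.
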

    \begin{proof}
        For the necessity, we assume that $f$ is holonomic. Then for every $i = 1,\ldots,n$, let 
        \[
            U_i = \{x_1,\ldots,x_n, D_{x_i} \}, \quad |U_i|=n+1.
        \]
         we have that $J_f \cap K[U_i] \neq \{0\}$, which implies that 
        \[
            I_f \cap F\<D_{x_i}> \neq \{0\},\quad \text{for every } i=1,\ldots,n.
        \]
        Hence $f$ is D-finite over $F$.
        
   For the sufficiency, we assume that $f$ is D-finite. Then $r = \dim_F(\mathcal{D}\cdot f) < +\infty$.
        Let~$\{b_1=f,\ldots,b_r\}$ be a basis of the vector space of~$\mathcal{D}\cdot f$. Then for any $g \in \mathcal{D}\cdot f$, 
        \[
            g = g_1 b_1 + \cdots + g_r b_r = \vec{g}\cdot \vec{b}^T \quad \text{with} \quad g_i \in F. 
        \]
        We write for $i=1,\ldots,n$,
        \[
            D_{x_i}\cdot g = A_i \cdot \vec{g} \cdot \vec{b}^T + \delta_i(\vec{g})\cdot \vec{b}^T,
        \]
        where $A_i \in F^{r \times r}$.
        Let $q \in K[x_1,\ldots,x_n]$ be the common denominator of all entries of~$A_i \ (i=1,\ldots,n)$, and $d \geq 1 $ be such that the total degree of $q$ as well as the entries of $qA_i \ (i=1,\ldots,n)$ are less than~$d$. Note that 
        \[
            f = (1,0,\ldots,0)
            \begin{pmatrix}
                b_1\\
                \vdots\\
                b_r
            \end{pmatrix}.
        \]
        For every $k \in \bN$, we have that if $i_1+\cdots + i_n+j_1 + \cdots + j_n \leq k $ then 
        { 
        \begin{equation*}
            x_1^{i_1}\cdots x_n^{i_n}D_{x_1}^{j_1}\cdots D_{x_n}^{j_n} \cdot f \in V_k,
        \end{equation*}
        }
        where 
        \[V_k := 
            \Bigg\lbrace \frac{1}{q^k}(p_1b_1 + \cdots + p_rb_r)\ \bigg| \  p_i \text{ are polynomials of total degree of } kd\Bigg\rbrace.\]
        Let $W_{U,k}$ be the vector space over $K$ generated by monomials of variables in $U \subseteq \{x_1,\ldots,x_n,D_{x_1},\ldots,D_{x_n} \}$ with~$|U|=n+1$.
        Let 
        \begin{align*}
            \phi_k \colon W_{U,k} &\longrightarrow V_k \\
                            L     & \longmapsto L \cdot f
        \end{align*} be a $K$-linear map. Then we have that 
        \[
            \dim_K(W_{U,k}) = \binom{n+1+k}{k} = \binom{n+1+k}{n+1} \sim O(k^{n+1})
        \]
        and 
        \[
            \dim_{K}(V_k) \leq r \cdot \binom{n+kd}{kd} = r \cdot \binom{n+kd}{n} \sim O(k^n).
        \]
        Then for large enough~$k$, we have that $\ker(\phi_k)$ is nontrivial, which implies that 
        \[
            J_f \cap K[U] \neq \{0\}.
        \]
        Then $f$ is holonomic. 
    \end{proof}
    
    The following corollary will be used to guarantee the existence of telescopers for D-finite functions in two variables.
    \begin{cor}\label{COR:ct}
        Let $f(x,y)$ be D-finite over~$K(x,y)$. Then there exist operators $P \in K(x)\langle D_x\rangle \setminus\{0\}$ and $Q \in K(x,y)
        \langle D_x,D_y\rangle$ such that 
        \[P \cdot f = D_y\bigl(Q\cdot f\bigr).\]
    \end{cor}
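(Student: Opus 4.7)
The plan is to extract the telescoper from a carefully chosen element of the holonomic annihilator $J_f$. By Theorem~\ref{thm:holiffdf}, $f$ being D-finite over $K(x,y)$ implies that $f$ is holonomic, so applying the definition with $U = \{x, D_x, D_y\}$ (a subset of size $n+1 = 3$) yields a nonzero element $L \in J_f \cap K[x, D_x, D_y]$. Among all such operators I would fix one of \emph{minimal $D_y$-degree}; call this minimum~$d$.

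Because $D_y$ commutes with $x$ and with $D_x$, the operator $L$ splits uniquely as $L = P(x, D_x) + D_y \cdot R(x, D_x, D_y)$, where $P \in K[x][D_x]$ collects the terms of $L$ that contain no $D_y$. Applying $L$ to $f$ gives $P \cdot f = -D_y(R \cdot f)$, so if $P \neq 0$ we are done by setting $Q := -R$.

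The remaining task is to exclude the case $P = 0$. Suppose for contradiction that $P = 0$, so $L = D_y R$ with $R \neq 0$ and $D_y(R \cdot f) = 0$. Then $g := R \cdot f$ is independent of $y$. If $g = 0$, then $R$ itself belongs to $J_f \cap K[x, D_x, D_y] \setminus \{0\}$ and has $D_y$-degree $d - 1$, contradicting the minimal choice of~$L$. Otherwise $g$ is a nonzero D-finite function of $x$ alone, so there is an $M \in K[x, D_x] \setminus \{0\}$ with $M \cdot g = 0$; since the Ore algebra $K[x, D_x, D_y]$ is a domain, $MR$ is then a nonzero element of $J_f \cap K[x, D_x, D_y]$ of $D_y$-degree $d - 1$, once more contradicting minimality. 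Hence $P \neq 0$ and the corollary follows.

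The main obstacle is the claim used above that $g = R \cdot f$, known \emph{a priori} only to be D-finite over $K(x,y)$, is actually D-finite over $K(x)$. I would justify this by taking any nontrivial relation $\sum_i a_i(x,y) D_x^i g = 0$ with $a_i \in K(x,y)$ (which exists since $R \cdot f$ is D-finite over $K(x,y)$ by the closure properties) and then specializing $y$ to a value in $K$ (or in a sufficiently large finite extension) at which all $a_i$ are defined and at least one is nonzero. Because every $D_x^i g$ is $y$-independent, the specialized relation still holds and supplies the required nonzero annihilator of $g$ in $K(x)\langle D_x\rangle$.
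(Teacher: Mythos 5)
Your proof is correct, but the mechanism you use to guarantee that the $D_y$-free part of the annihilator is nonzero is genuinely different from the paper's. Both arguments start identically: Theorem~\ref{thm:holiffdf} (holonomy, applied with $U=\{x,D_x,D_y\}$) supplies a nonzero $L\in K[x]\langle D_x,D_y\rangle$ annihilating $f$, and since $D_y$ is central in that algebra one can split off the part free of $D_y$. The paper then factors $L=D_y^m\bigl(\bar P(x,D_x)+D_y\bar Q\bigr)$ with $\bar P\neq 0$ and applies Wegschaider's trick $y^mD_y^m=D_yR+m!$, which \emph{constructively} converts the given $L$ into an operator of the form $P+D_yQ$ with $P=m!\,\bar P\neq 0$. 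You instead choose $L$ of minimal $D_y$-degree and exclude $P=0$ by exhibiting an annihilator of strictly smaller $D_y$-degree: either $R$ itself (when $R\cdot f=0$) or $MR$, where $M\in K[x]\langle D_x\rangle\setminus\{0\}$ annihilates the $y$-independent function $g=R\cdot f$; the degree count and the domain property of the Ore algebra make the contradiction go through. The price of your route is the auxiliary lemma that a $y$-independent D-finite element of $K[[x,y]]$ is D-finite over $K(x)$. Your specialization argument for this is fine, though it is cleaner (and avoids the detour through a field extension of $K$) to clear denominators and compare coefficients of powers of $y$ in $\sum_i b_i(x,y)\,D_x^i g=0$, which immediately yields a nonzero annihilator with coefficients in $K[x]$. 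In summary: the paper's proof is shorter and produces $P$ and $Q$ explicitly from any $y$-free annihilator, while yours is an extremal argument in the spirit of the minimality tricks used for discrete telescopers; both are valid.
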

    \begin{proof}
        Since $f$ is D-finite, there exists  $L \in K[x]\<D_x, D_y>$ such that $L\cdot f =0$. Now we write 
        \[
            L = D_y^m \bigl( \bar{P}(x, D_x) + D_y \bar{Q} \bigr).
        \]
        By Wegschaider's trick, $y^m D_y^m = D_y R + m!$ for some~$R \in K[y]$. Then we have 
        \[
            0 = y^m L \cdot f = \bigl( D_y R + m! \bigr) (\bar{P}(x, D_x) + D_y \bar{Q})\cdot f = \bigl( P(x, D_x) + D_y Q \bigr) \cdot f.
        \]
        This completes the proof.
    \end{proof}

\section{Advanced Closure Properties}\label{sec:AdvClosureP}

In the previous section, we have seen that a function $f(x,y)$ is called \emph{D-finite}
if the left ideal $I$ consisting of all operators $L\in C(x,y)\<D_x,D_y>$ mapping $f$ to
zero satisfies
\[
  \dim_{C(x,y)}C(x,y)\<D_x,D_y>/I<\infty.
\]
Moreover, the function is called \emph{holonomic} if the left ideal $J$ consisting of
all operators $L\in C[x,y]\<D_x,D_y>$ mapping $f$ to zero satisfies
\[
J\cap C[U]\neq\{0\}
\]
for every $U\subseteq\{x,y,D_x,D_y\}$ with $|U|>2$. An ideal $J$ is called \emph{holonomic}
if it has this property. 

The notions extend to functions in more variables, and to shift and other operators
besides derivations. 

D-finiteness and holonomy are preserved by addition, multiplication, and a few other
operations. This means, for example, that when $f$ and $g$ are D-finite or holonomic,
then so are $f+g$ and~$fg$. These \emph{closure properties} are based on linear algebra.
In the present section, we will discuss more advanced closure properties based on
creative telescoping.

Recall that creative telescoping refers to the search for annihilating operators of the
form $P-D_xQ$ or $P-\Delta_xQ$ where $P$ (the telescoper) must be nonzero and free of $x$
while $Q$ (the certificate) may or may not involve $y$ and may or may not be nonzero.
So far we have discussed creative telescoping algorithms for rational functions and (proper)
hypergeometric terms. By the following theorem, the concept of creative telescoping
applies in a more general setting.

\begin{thm}\label{thm:holoct}
 If $J\subseteq C[x,y]\<\partial_x,\partial_y>$ is holonomic, then there exist
 $P\in C[x]\<\partial_x>\setminus\{0\}$ and $Q\in C[x,y]\<\partial_x,\partial_y>$
 such that $P-\partial_yQ\in J$.
 Moreover, given a basis of $J$, such $P$ and $Q$ can be computed.
\end{thm}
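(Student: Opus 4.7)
The plan is to use the holonomy hypothesis to extract an annihilator in which $y$ does not appear, and then, if its $y$-free and $\partial_y$-free part happens to vanish, to invoke a Wegschaider-style left-multiplication trick in order to produce a nontrivial such part. First I would invoke the definition of holonomy with the subset $U = \{x, \partial_x, \partial_y\}$, which has the required cardinality $n+1 = 3$: this yields a nonzero element $L \in J \cap C[x]\langle \partial_x, \partial_y\rangle$. Since the coefficients of $L$ do not involve $y$, the operator $\partial_y$ commutes with all of them, so $L$ decomposes uniquely as $L = \sum_{i=0}^d L_i(x,\partial_x)\,\partial_y^i$ with $L_i \in C[x]\langle\partial_x\rangle$. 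If $L_0 \neq 0$, the proof is immediate: set $P := L_0$ and $Q := -\sum_{i \geq 1} L_i \partial_y^{i-1}$, so that $L = P - \partial_y Q \in J$ with $P \in C[x]\langle\partial_x\rangle \setminus \{0\}$.

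In the remaining case $L_0 = 0$, let $k \geq 1$ be minimal with $L_k \neq 0$ and factor $L = \partial_y^k M$, where $M := \sum_{j \geq 0} L_{k+j} \partial_y^j$ has nonzero $\partial_y$-free part $M_0 = L_k$. Since $J$ is a left ideal, $y^k L \in J$. The key algebraic input is the Weyl-algebra identity
\[
y^k \partial_y^k = \partial_y R_k + c_k
\]
for an explicit $R_k \in C[y]\langle \partial_y\rangle$ and a nonzero constant $c_k$ (concretely $c_k = (-1)^k k!$), established by a short induction on $k$ using $[\partial_y, y] = 1$. Applying this on the right to $M$ turns $y^k L \in J$ into $c_k M + \partial_y(R_k M) \in J$. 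Writing $M = M_0 + \partial_y M'$ and rearranging yields
\[
c_k M_0 + \partial_y\bigl(c_k M' + R_k M\bigr) \in J,
\]
so $P := c_k M_0 \in C[x]\langle\partial_x\rangle \setminus \{0\}$ and $Q := -(c_k M' + R_k M) \in C[x,y]\langle\partial_x,\partial_y\rangle$ have the required properties.

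For the effective version, each ingredient is algorithmic given a basis of $J$: extracting a nonzero $L \in J \cap C[x]\langle \partial_x, \partial_y\rangle$ is an elimination problem in the Weyl algebra, solvable by a noncommutative Gröbner basis under a term order favoring $y$; splitting $L$ or $M$ into its $\partial_y$-free part and a $\partial_y$-multiple is pure coefficient reading; and the identity $y^k\partial_y^k = \partial_y R_k + c_k$ produces $R_k$ and $c_k$ by the same short induction. The main obstacle, both conceptually and computationally, is the elimination of $y$ in the first step: this is precisely where the holonomy hypothesis is indispensable, since without it there is no a priori guarantee that a $y$-free annihilator exists, and in practice the Gröbner basis computation in the rational or polynomial Weyl algebra is by far the most expensive part. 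The Wegschaider trick itself is a cheap syntactic rearrangement that simply trades polynomial degree in $y$ against the $\partial_y$-valuation of $L$.
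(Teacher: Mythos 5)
Your proof is correct and follows essentially the same route the paper takes: the paper defers the proof of this theorem, but its proof of the analogous Corollary~\ref{COR:ct} uses exactly your two ingredients, namely extracting a $y$-free annihilator $L\in J\cap C[x]\<\partial_x,\partial_y>$ from the holonomy condition with $U=\{x,\partial_x,\partial_y\}$ and then applying Wegschaider's trick $y^k\partial_y^k=\partial_yR_k+c_k$ after left-multiplying by $y^k$. Your write-up is in fact slightly more careful than the paper's sketch (correct sign $c_k=(-1)^kk!$, the easy case $L_0\neq0$ handled separately, $R_k$ correctly placed in $C[y]\<\partial_y>$), the only caveat being that your commutator $[\partial_y,y]=1$ restricts the argument to the differential case, which is also all the paper's version covers.
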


Rather than entering into why this is true, we shall discuss the implications of this
theorem. Recall that telescopers were introduced in order to deal with definite sums
and integrals: if things go smoothly, every telescoper for a summand/integrand is an
annihilating operator for the sum/integral. For things to go smoothly means the following.

\begin{defi}
 A sum/integral over a holonomic function $f$ is said to have \emph{natural boundaries}
 if there is a telescoper/certificate pair $(P,Q)$ such that $Q\cdot f$ evaluates to
 zero at the boundaries of the summation/integration range.
\end{defi}

For example, note that for every operator $Q\in C[n,k]\<S_n,S_k>$, we have $[Q\cdot\binom nk]_{k=-\infty}^{\infty}=0$,
because $\binom nk=0$ whenever $k<0$ or~$k>n$. With this terminology, we can say that
the class of holonomic functions with natural boundaries is closed under definite
summation/integration. What about non-natural boundaries?

\begin{ex}
  Consider the sum $S(n)=\sum_{k=0}^n\binom{2n}k$. For the summand, we have the creative
  telescoping relation
  \[
    (S_n-4)\cdot\binom{2n}k=\Delta_k\frac{k(2k-6n-5)}{2(2n+1)(n+1)}\binom{2n+2}k.
  \]
  Summing this equation over $k=0,\dots,n$ gives
  \[
    \sum_{k=0}^n\binom{2(n+1)}k - 4\sum_{k=0}^n\binom{2n}k
    =\Bigl[\frac{k(2k-6n-5)}{2(2n+1)(n+1)}\binom{2n+2}k\Bigr]_{k=0}^{n+1}.
  \]
  The right-hand side evaluates to $-\frac{4n+3}{4n+2}\binom{2n+2}{n+1}$, and the first sum
  on the left is recognized as $S(n+1)-\binom{2(n+1)}{n+1}$. The second sum is $S(n)$, so
  if we move $\binom{2(n+1)}{n+1}$ to the right-hand side and simplify a bit, we obtain
  the recurrence
  \[
    S(n+1) - 4 S(n) = -\frac1{4n+2}\binom{2n+2}{n+1}
  \]
  for the sum. We see that non-natural boundaries lead to inhomogeneous recurrences. In the
  present example, the right-hand side is annihilated by $(n+2)S_n-(4n+2)$, so if we apply
  this operator to the equation above, we obtain a homogeneous recurrence for the sum:
  \[
   (n+2)S(n+2)-(8n+10)S(n+1)+(16n+8)S(n)=0.
  \]
  In particular, we see that the sum is holonomic.
\end{ex}

In general, a sum/integral with non-natural boundaries leads to an inhomogeneous recurrence
or differential equation where the right-hand side involves evaluations of the summand/integrand:
$P\cdot f=[Q\cdot f]_\Omega$. If the right-hand side is annihilated by an operator~$L$, then
the operator $LP$ annihilates the definite sum/integral. The question is therefore whether
holonomy is preserved under evaluation.

To answer this question, observe that Thm.~\ref{thm:holoct} remains true if we exchange the
roles of $y$ and~$\partial_y$. In other words, in a holonomic ideal $J\subseteq C[x,y]\<\partial_x,\partial_y>$
we also always have operators $P\in C[x]\<\partial_x>\setminus\{0\}$ and $Q\in C[x,y]\<\partial_x,\partial_y>$
such that $P-yQ\in J$. This implies that if $f(x,y)$ is holonomic, so is $f(x,0)$, for if
we set $y=0$ in the equation $P\cdot f(x,y)=yQ\cdot f(x,y)$, then we obtain $P\cdot f(x,0)=0$.
In short, holonomy is preserved under evaluation.

By the closure of holonomy under evaluation, we can considerably relax the requirement of having
natural boundaries. Integration ranges can be any semi-algebraic set, and summation ranges can
be any rational polygonal sets. More precise statements can be found in Sect.~5.3 of~\cite{Kauers2023}.

\medskip
For proving closure properties, the concept of holonomy is more handy than the concept of D-finiteness,
because the latter may suffer from trouble with singularities, and, in the summation case, with the
lack of an existence guarantee for telescopers. On the other hand, algorithms based on holonomy tend
to be much more expensive than algorithms based on D-finiteness. It is therefore desirable to combine
the best of the two worlds. According to Theorem~\ref{thm:holiffdf} there is a tight
connection, at least in the differential case:
\begin{itemize}
\item If an ideal $I\subseteq C(x,y)\<D_x,D_y>$ is D-finite, then the ideal $I\cap C[x,y]\<D_x,D_y>$ is
  holonomic.
\item If an ideal $J\subseteq C[x,y]\<D_x,D_y>$ is holonomic, then the ideal $\<J>$ it generates in
  $C(x,y)\<D_x,D_y>$ is D-finite. 
\end{itemize}
In particular, telescoper/certificate pairs always exist for D-finite ideals in the differential case.

\def\res{\operatorname{res}}
For a formal infinite series $f(x,y)=\sum_{n,k\in\set Z}a_{n,k}x^ny^k$, define the \emph{residue}
$\res_y f(x,y)$ as the series $\sum_{n\in\set Z} a_{n,-1}x^n$. For every series $g(x,y)$ we then
have $\res_y D_y\cdot g(x,y)=0$, because differentiation cannot produce terms with exponent~$-1$.
Therefore, if $P$ and $Q$ are such that $(P-D_yQ)\cdot f=0$ and $P$ is free of~$y$, then
$P\cdot\res_y f=0$. Note that in this formal setting, there is no trouble with singularities or
cumbersome boundary conditions.

However, there is a little algebraic issue: what do we mean by $Q\cdot f$ if $Q$ is an element of
$C(x,y)\<D_x,D_y>$ and $f$ is a bilateral infinite series? More precisely, while it is clear that
the bilateral infinite series form a $C[x,y]\<D_x,D_y>$-module, what shall it mean to multiply such a
series with a rational function? In order to make this precise, we need an interpretation of
rational functions as infinite series, and a way to multiply infinite series.

Without any restriction on the support of the series, multiplication is not well defined. That's
why in the univariate case, formal Laurent series are defined as infinite series of the form
$\sum_{n=n_0}^\infty a_nx^n$, i.e., series having a minimal exponent. These series form a
field denoted by $C((x))$, the quotient field of the formal power series ring $C[[x]]$.

The construction of formal Laurent series in the case of several variables is a bit more subtle.
We do not need to go into the details here, they are explained in~\cite{aparicio12}.
All that matters is that there is a way to construct fields $C((x,y))$ whose elements are infinite
series with certain restrictions on their supports, and that these fields have a natural
$C(x,y)\<D_x,D_y>$-module structure. For elements of such fields, it is therefore meaningful to
talk about D-finiteness. In particular, we can say that for every D-finite element of a field $C((x,y))$,
the residue (w.r.t. $y$, say) is a D-finite element of~$C((x))$. 

\begin{ex}
  In an earlier example, we have found a telescoper $P$ for the rational function
  $f=\frac1{xy^3+y+1}$. If $\tilde f$ is any expansion of $f$ as a bivariate infinite series,
  then $P\cdot\res_y\tilde f=0$.
\end{ex}

\def\diag{\operatorname{diag}}
The fact that D-finiteness is preserved under taking residues implies a couple of other useful
closure properties. For example, the \emph{diagonal} of a series $a(x,y)=\sum_{n,k}a_{n,k}x^ny^k$ is
defined as
\[
  \diag a(x,y) := \sum_n a_{n,n}x^n.
\]
Since the diagonal can be expressed as a residue via
\[
  \diag a(x,y) = \res_y y^{-1} a(y,x/y),
\]
it is clear that D-finiteness is also preserved under taking diagonals.

Next, the \emph{Hadamard product} of infinite series $a(x,y)=\sum_{n,k}a_{n,k}x^ny^k$ and $b(x,y)=\sum_{n,k}b_{n,k}x^ny^k$
is defined by the termwise product of the coefficients:
\[
  a(x,y)\odot b(x,y) := \sum_{n,k} a_{n,k}b_{n,k}x^ny^k.
\]
Since the Hadamard product can be expressed by residues via
\[
  a(x,y)\odot b(x,y) = \res_{x'}\res_{y'} \frac1{x'y'} a(x',y')b(x/x',y/y'),
\]
where $x'$ and $y'$ are two new variables, it is clear that D-finiteness is also preserved under taking Hadamard products.

Finally, the \emph{positive part} of an infinite series $a(x,y)=\sum_{n,k}a_{n,k}x^ny^k$ (which may involve some terms
with negative exponents) is defined as
\[
  [x^>y^>]a(x,y) := \sum_{n,k>0}a_{n,k}x^ny^k.
\]
Since the positive part can be expressed as a Hadamard product via
\[
  [x^>y^>]a(x,y) = \frac x{1-x}\frac y{1-y} \odot a(x,y)
\]
and D-finiteness is preserved under Hadamard products, it is clear that D-finiteness is also preserved under taking positive parts.

These relationships are worked out in more detail in~\cite{bostan16b}, where they have been used to compute annihilating operators
arising in the context of lattice walk counting.

\medskip
The relation between holonomy and D-finiteness is more complicated in the discrete case.
In this case, not every D-finite function admits a telescoper, and even if it does, we may have trouble with singularities.
\begin{ex}
\begin{enumerate}
\item For $f(n,k)=\binom nk^2$, we have the telescoping relation
 \[
   \biggl((n+1)S_n - 2(2n+1)-\Delta_k\frac{k^2(2k-3n-3)}{(n-1-k)^2}\biggr)\cdot f(n,k)=0.
 \]
 In view of the factor $n-1-k$ in the denominator of the certificate, it is dangerous to sum this relation
 over all~$k$.
\item $\frac1{n^2+k^2}$ is an example for a hypergeometric term which does not have a telescoper.
\end{enumerate}
\end{ex}

In order to escape from these problems, it has been proposed in~\cite{bostan17a} to use residues of infinite series
also for solving summation problems.
The idea is to first use some rewriting rules in order to translate a given expression involving sums and binomial
coefficients into a multivariate rational function.
During this translation, definite sums are made indefinite by introducing new auxiliary variables as needed.
In the end, the auxiliary variables are identified with the original variables by a residue computation.

\begin{ex}
  Consider the sum $S(n)=\sum_{k=0}^n\binom nk$.
  Its innermost expression is the binomial coefficient, whose generating function is
  \[
    \sum_{n,k=0}^\infty\binom nkx^ny^k=\frac1{1-(1+y)x}.
  \]
  For applying the sum, replace the upper summation bound by a new variable~$m$, unrelated to the upper parameter $n$ in the
  binomial coefficient. The generating function is then
  \[
    \sum_{n,m=0}^\infty\biggl(\sum_{k=0}^m\binom nk\biggr)x^nz^m=\frac1{1-z}\frac1{1-(1+z)x}.
  \]
  To finally identify $n$ and $m$, we take the diagonal of this series and find
  \[
    \diag\frac1{1-z}\frac1{1-(1+z)x}=\frac1{1-2x}.
  \]
  As the latter is the generating function of~$2^n$, we have found the expected closed form for~$S(n)$.
\end{ex}

This approach applies to a certain class of summation expressions that the authors of~\cite{bostan17a} call
\emph{binomial sums.} This class does not contain every (proper) hypergeometric term.
On the other hand, unlike the creative telescoping algorithms discussed in earlier sections,
it also covers nested sums, it does not suffer from trouble with singularities, and there
is a guarantee that every binomial sum is D-finite.


\section{Abramov--van Hoeij's Algorithm and Chyzak's Algorithm}\label{sec:Chyzak}


The problems of symbolic summation and creative telescoping for hypergeometric terms (the first-order case) have been studied in Sections~\ref{sec:Gosper} and \ref{sec:Zeilberger}. 
In this section, we shall explain algorithms for symbolic integration and creative telescoping for D-finite functions (the high-order case), namely Abramov--van Hoeij's algorithm~\cite{AbramovHoeij1997,AbramovHoeij1999} and Chyzak's algorithm~\cite{Chyzak2000}, respectively.

The integration problem on D-finite functions is as follows. 
\begin{pro}\label{PB:dfinite}
    Given a D-finite function $f(y)$ of order~$n$, decide whether there exists another D-finite function $g(y)$ of the same order~$n$ such that~$f = D_y(g)$.
\end{pro}

Let $\mathcal{D}_y$ be the ring $C(y)\langle D_y\rangle$ of linear differential operators in $y$ over $C(y)$
and $f(y)$ be a D-finite function over $C(y)$. Since $\mathcal{D}_y$ is a left Euclidean domain, the annihilating ideal $I_f := \{P\in \mathcal{D}_y \mid P(f)=0 \}$
is uniquely generated by one monic operator $L$, which is called the \emph{minimal annihilator} of $f$. For any operator~$L = \sum_{i=0}^r a_i D_y^i \in \mathcal{D}_y$,
we call $L^{*} = \sum_{i=0}^r (-1)^i D_y^i a_i$ the \emph{adjoint operator} of~$L$. From the definition, we have $(L^{*})^{*} =L$ and the following Lagrange's identity: 
For functions $u,v$ in some $\mathcal{D}_y$-module, we have 
        \[
            uL(v)-vL^{*}(u) = D_y\bigl(M(u,v)\bigr),
        \]
        where $M$ is a polynomial of successive derivatives of $u$ and $v$ with order at most~$\ord(L)$. 
        This identity is a high-order extension of the Leibniz rule:
        \[
            u D_y(v)+ v D_y(u) = D_y(u v) \quad \text{with  $L =D_y$ and $L^{*}=-D_y$}.
        \]
    In particular, if $L^{*}(r)=1$ and~$L(f)=0$, then Lagrange's identity implies that $f = D_y(T\cdot f)$ for some $T \in \mathcal{D}_y$ with~$\ord(T) < \ord(L)$.
   
Problem~\ref{PB:dfinite} can be solved by the following algorithm. 
\begin{alg}[Abramov--van Hoeij's Algorithm]
    \qquad
    
    \noindent INPUT: A D-finite function $f$ defined by the minimal operator $L \in \mathcal{D}_y$ with~$\ord(L) = n$.

    \noindent OUTPUT: $g= T(f)$ with $T \in \mathcal{D}_y$ such that $f = D_y(g)$, otherwise return \underline{No}.
    \begin{enumerate}
        \item Compute the adjoint operator~$L^{*}$ of $L$.
        \item Decide whether there exists $r\in C(y)$  such that $L^{*}(r)=1$. If no such $r$ exists, return \underline{No}, 
        otherwise return  $T(f)$, where $T$ satisfies $r L + D_y T =1$, computed by the extended left Euclidean algorithm in $\mathcal{D}_y$.
    \end{enumerate}
\end{alg}

The next theorem proves the correctness of the above algorithm. 
\begin{thm}
    Let $f(y)$ be a D-finite function with minimal annihilator $L\in \mathcal{D}_y$ of order~$n$. Then the following conditions are equivalent:
    \begin{itemize}
        \item[(1)] $f= D_y(g)$ for some D-finite function $g$ of the same order~$n$.
        \item[(2)] $f = D_y(T(f))$ for some $T \in \mathcal{D}_y$ of order~$\leq n-1$.
        \item[(3)] There exists $r\in C(y)$  such that $L^{*}(r)=1$.
    \end{itemize}
\end{thm}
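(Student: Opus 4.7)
The plan is to prove the equivalences $(1)\Leftrightarrow(2)$ and $(2)\Leftrightarrow(3)$, using the structure of $\mathcal{D}_y\cdot f$ as an $n$-dimensional $C(y)$-vector space (modulo $I_f$) for the first part, and Lagrange's identity together with adjoint manipulations for the second.

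For $(1)\Rightarrow(2)$, I would let $M=\sum_{i=0}^n m_i D_y^i$ be the minimal annihilator of~$g$, with $m_n\neq 0$. Since $f=D_y(g)$, i.e.\ $g'=f$ and hence $g^{(i)}=f^{(i-1)}$ for $i\geq 1$, the equation $M(g)=0$ rewrites as $m_0 g + m_1 f + m_2 f' + \cdots + m_n f^{(n-1)}=0$. Here $m_0$ must be nonzero, for otherwise the remaining terms would form a nontrivial annihilator of~$f$ of order at most $n-1$, contradicting the minimality of~$L$. Solving for~$g$ yields $g=T(f)$ with $T=-m_0^{-1}(m_1+m_2 D_y+\cdots+m_n D_y^{n-1})\in\mathcal{D}_y$ of order $\leq n-1$, and then $f=D_y(T(f))$. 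For $(2)\Rightarrow(1)$, set $g:=T(f)$; clearly $f=D_y(g)$, and I would argue that the minimal annihilator of~$g$ has order exactly~$n$ by two bounds. The upper bound $\leq n$ follows because $g,g',\dots,g^{(n)}$ are $n+1$ elements of the $n$-dimensional $C(y)$-space $\mathcal{D}_y\cdot f/I_f$, hence $C(y)$-linearly dependent. The lower bound $\geq n$ follows by repeating the argument of $(1)\Rightarrow(2)$: a hypothetical smaller annihilator $M$ of~$g$ of order $m<n$ would, via the same rewriting $g'=f$, produce either an annihilator of~$f$ of order $\leq n-1$ or a nonzero equation $(D_y T'-1)(f)=0$ of order $\leq n-1$, each contradicting minimality of~$L$.

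For $(3)\Rightarrow(2)$, I would apply Lagrange's identity to $u=r$ and $v=f$:
\[
rL(f)-fL^*(r)=D_y\bigl(M(r,f)\bigr).
\]
Since $L(f)=0$ and $L^*(r)=1$, this collapses to $-f=D_y(M(r,f))$, i.e.\ $f=D_y\bigl(-M(r,f)\bigr)$. The bilinear expression $M(r,f)$ involves only derivatives of~$f$ of order $<\ord(L)=n$ (with coefficients that are polynomial combinations of~$r$ and its derivatives, all of which lie in $C(y)$), so $M(r,f)=T(f)$ for some $T\in\mathcal{D}_y$ of order $\leq n-1$, as required.

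For $(2)\Rightarrow(3)$, from $f=D_y(T(f))$ we get $(D_yT-1)(f)=0$, so $D_yT-1$ lies in the left ideal $\mathcal{D}_yL$, say $D_yT-1=UL$ for some $U\in\mathcal{D}_y$. Taking adjoints gives $-T^*D_y-1=L^*U^*$, and evaluating both sides at the constant function~$1$ (noting $D_y(1)=0$) yields $-1=L^*\bigl(U^*(1)\bigr)$. Since $U^*(1)=\sum_i(-1)^i u_i^{(i)}\in C(y)$ when $U=\sum_i u_iD_y^i$, the rational function $r:=-U^*(1)$ satisfies $L^*(r)=1$. The main technical obstacle I anticipate is the bookkeeping of operator orders in the $(1)\Leftrightarrow(2)$ part: ensuring that~$g=T(f)$ has minimal annihilator of order \emph{exactly}~$n$ (not just $\leq n$) requires the minimality of~$L$ to be exploited carefully in both directions; the adjoint calculation in $(2)\Rightarrow(3)$ is routine once the Lagrange identity is set up and the convention for~$L^*$ is fixed.
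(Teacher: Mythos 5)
Your proof is correct and uses essentially the same ingredients as the paper's: right division of the annihilator of $g$ by $D_y$ plus minimality of $L$ for $(1)\Rightarrow(2)$, taking adjoints of the operator identity and evaluating at the constant $1$ for $(2)\Rightarrow(3)$, and Lagrange's identity for the converse direction. The only difference is organizational — you prove the pairwise equivalences $(1)\Leftrightarrow(2)$ and $(2)\Leftrightarrow(3)$ rather than the paper's cycle $(1)\Rightarrow(2)\Rightarrow(3)\Rightarrow(1)$, relocating the ``order exactly $n$'' argument from the paper's $(3)\Rightarrow(1)$ step into your $(2)\Rightarrow(1)$ step.
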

\begin{proof}
    $(1) \Rightarrow (2)$: Let $P$ be the minimal operator of order~$n$ for~$g$, i.e., $P(g)=0$. Since $L(f) = L(D_y(g)) = 0$, 
    we have~$P \mid L D_y$. Since $\mathcal{D}_y$ is a left Euclidean domain,  $P = \tilde{P}D_y + r$ for some~$r \in C(y)$ and~$\tilde{P}\in \mathcal{D}_y$ with~$\ord(\tilde{P}) < \ord(P)$. 
    If~$r = 0$, then~$\tilde{P} D_y(g) =\tilde{P}(f) = 0$, which contradicts with the minimality of $L$. Then $r \neq 0$. Since $P(g)=0$, we have
    \[P(g) = \tilde{P}D_y(g) + rg=0. \]
    This implies that $g = T(f)$ with $T=-\frac{1}{r} \tilde{P}$.
    \medskip
    
    $(2) \Rightarrow (3)$: If $f = D_y(T(f))$ for some $T \in \mathcal{D}_y$ of order~$\leq n-1$, then $L  \mid 1-D_y T$, which implies that there exists $r \in C(y)$ 
    such that~$r L = 1 - D_y T$, i.e., $1 = r L + D_y T $. Taking the adjoint operator on both sides of this equality yields 
    \[   1 = L^{*} \cdot r + T^{*}(-D_y).
    \]
    Acting the operator on~$1$ and noting that $D_y(1)=0$ yields $L^{*}(r) =1$.
         \medskip 
    
    $(3)\Rightarrow (1)$: Assume that there exists $r\in C(y)$  such that $L^{*}(r)=1$. In Lagrange's identity
    \[
        u L(v) -v L^{*}(u) = D_y\bigl( M(u,v)\bigr),
    \]
    we can take $v = f$ and~$u=r$. Then 
    \[
        r L(f) -f L^{*}(r) = -f = D_y(T \cdot f).
    \]
    It suffices to take~$g = -T(f)$. It is clear that $g$ satisfies an operator of order at most~$n$ because~$\dim_{C(y)} \mathcal{D}_y \cdot f =n$. 
    If~$\ord(g) < n$, then $f=D_y(g)$ will also have order~$<n$, which leads to a contradiction with the minimality of $L$.  Thus, $g$ is also a D-finite function of 
    the order~$n$.
\end{proof}

The problem of creative telescoping for bivariate D-finite functions is as follows.
\begin{pro}\label{PB:ctdfinite}
    Given a bivariate D-finite function $f(x, y)$, find a nonzero operator $P\in C(x)\langle D_x\rangle$ and $Q\in C(x, y)\langle D_x, D_y\rangle$
    such that~$P(f) = D_y(Q(f))$. Such an operator $P$ if exists is called a \emph{telescoper} for $f$.
\end{pro}
The existence of telescopers for bivariate D-finite functions is guaranteed by Theorem~\ref{thm:holoct}.
Chyzak's algorithm~\cite{chyzak00} can compute the minimal telescoper for a given bivariate D-finite function.  Let $f(x,y)$ be a D-finite function over~$C(x,y)$. Then 
\[
    \dim_{C(x,y)} \biggl( C(x,y)\langle D_x, D_y\rangle / I_f \biggr) < +\infty.
\]
Since the quotient module $C(x,y)\langle D_x, D_y\rangle/I_f$ is isomorphic to the module 
$M := C(x,y)\langle D_x, D_y\rangle\cdot f$, which is also 
a finitely dimensional vector space over $C(x, y)$.  
We now describe Chyzak's algorithm as follows (See~\cite[Section 5.4]{Kauers2023} for more discussions).

\begin{alg}[Chyzak's Algorithm]
\qquad

\noindent INPUT: A D-finite function $f\in M$ with $\dim_{C(x,y)} M < \infty$.

\noindent OUTPUT: A nonzero operator $P\in C(x)\langle D_x \rangle$ and $Q\in M$ such that~$P(f)=D_y(Q)$.
\begin{enumerate}
    \item Let $e_1, \ldots, e_d$ be a $C(x, y)$-vector space basis of $M$ and let $A = (a_{i, j})\in C(x, y)^{d\times d}$
    be such that $D_y(e_j) = \sum_{i=1}^{d} a_{i, j}e_i$ for $j=1, \ldots, d$.
    \item For~$r = 0,1,\ldots$
        \begin{enumerate}
            \item[2.1] Set $P = \sum_{k=0}^r c_k D_x^k$ and $Q = \sum_{i=1}^{d} q_i e_i$. Write $D_x^k(f) = \sum_{i=1}^{d} b_{k, i} e_i$.
            \item[2.2] Solve the coupled linear system (using Barkatou's algorithm~\cite{barkatou2000} for instance)
            \[
A\begin{pmatrix}
q_1\\
\vdots\\
q_d
\end{pmatrix}+\begin{pmatrix}
D_y(q_1)\\
\vdots\\
D_y(q_d)
\end{pmatrix}=c_0
\begin{pmatrix}
b_{0, 1}\\
\vdots\\
b_{0, d}
\end{pmatrix}
+\cdots + c_s
\begin{pmatrix}
b_{r, 1}\\
\vdots\\
b_{r, d}
\end{pmatrix}
\]
for $q_1, \ldots, q_d\in C(x, y)$ and $c_0, \ldots, c_r\in C(x)$.
            \item[2.3] If there is a solution with $(c_0, \ldots, c_r) \neq 0$, then \\
               \,\,\,  Return $(c_0 + \cdots + c_rD_x^r, q_1e_1 + \cdots + q_de_d)$.
        \end{enumerate}
\end{enumerate}
\end{alg}


\section{Examples and Applications}\label{sec:Examples}

In this section, we put the algorithms discussed previously into practice,
by applying them to concrete problems, some of which are artificial, but most
of which come from real-world applications. For demonstration purposes, we
employ the third author's Mathematica package
\texttt{HolonomicFunctions}~\cite{koutschan10c}.

The first step in a creative telescoping computation is usually the conversion
of some input function, which is given as a mathematical expression, into a
holonomic representation, i.e., a set of recurrence and/or differential
operators that annihilate the given expression---desirably a so-called (left)
Gr\"obner basis~\cite{Buchberger65,CoxLittleOShea92}.
This step can be executed automatically
by invoking the algorithms for closure properties (see Section~\ref{sec:DfiniteUniv}).
For atomic expressions, such as individual special functions, the defining equations
have to be looked up. All these steps are combined in the command
\texttt{Annihilator}.

\begin{ex}
  We look at the four expressions
  \[
    \operatorname{erf}\bigl(\sqrt{x+1}\bigr)^{\!2} + \exp\bigl(\sqrt{x+1}\bigr)^{\!2}
  \]
  \[
    \Bigl(\bigl(\sinh(x)\bigr)^{\!2} + \bigl(\sin(x)\bigr)^{\!-2}\Bigr) \cdot
    \Bigl(\bigl(\cosh(x)\bigr)^{\!2} + \bigl(\cos(x)\bigr)^{\!-2}\Bigr)
  \]
  \[
    \frac{\log\bigl(\sqrt{1-x^2}\bigr)}{\exp\bigl(\sqrt{1-x^2}\bigr)}
  \]
  \[
    \arctan\bigl(\mathrm{e}^x\bigr)
  \]
  and ask which of them are D-finite. With the knowledge that we already
  acquired in the course of this lecture, we can determine that the first
  and third expression have no reason not to be D-finite, while the
  second expression involves the reciprocals of the sine and cosine
  functions (not D-finite since they have infinitely many singularities),
  and the fourth expression substitutes $\mathrm{e}^x$, which is not an
  algebraic function. Indeed, calling
\begin{verbatim}
Annihilator[Erf[Sqrt[x + 1]]^2 + Exp[Sqrt[x + 1]]^2, Der[x]]
\end{verbatim}
  we obtain a differential operator that annihilates this expression:
  \begin{align*}
    \bigl\{ & (32 x^4+80 x^3+84 x^2+56 x+20) D_{x}^5
      +(96 x^4+336 x^3+348 x^2+252 x +144) D_{x}^4 \\
    & {} +(64 x^4+336 x^3+312 x^2+154 x+249) D_{x}^3
      +(-32 x^2-88 x +52) D_{x}^2 \\
    & {} +(-64 x^3-64 x^2+24 x-84) D_{x}\bigr\}
  \end{align*}
  proving that it is D-finite. Similarly
\begin{verbatim}
Annihilator[Log[Sqrt[1 - x^2]]/Exp[Sqrt[1 - x^2]], Der[x]]
\end{verbatim}
  yields
  \begin{align*}
    \bigl\{ & (4 x^{11}-17 x^9+27 x^7-19 x^5+5 x^3) D_{x}^4
      +(8 x^{10}-4 x^8-46 x^6+72 x^4 -30 x^2) D_{x}^3 \\
    & {} +(8 x^{11}-30 x^9+10 x^7+96 x^5-159x^3+75 x) D_{x}^2 \\
    & {} +(8 x^{10}-12 x^8+14 x^6-96 x^4+159 x^2-75) D_{x}
      +(4 x^{11}-13 x^9+7 x^7)\bigr\}.
  \end{align*}
  However, if we try to feed the last expression into the command
\begin{verbatim}
Annihilator[ArcTan[Exp[x]], Der[x]]
\end{verbatim}
  we obtain the following error messages, suggesting that the expression
  is most likely not D-finite:
  \begin{quote}
  \textbf{DFiniteSubstitute::algsubs:} The substitutions for continuous variables
  $\{\mathrm{e}^x\}$ are supposed to be algebraic expressions.
  Not all of them are recognized to be algebraic.
  The result might not generate a $\partial$-finite ideal. \\
  \textbf{Annihilator::nondf:} The expression (w.r.t. Der[x]) is not recognized to be
  $\partial$-finite. The result might not generate a zero-dimensional ideal.
  \end{quote}
\end{ex}

In the introduction (see Section~\ref{sec:intro}) we have already alluded
to an application in numerical analysis, where for the efficient implementation
of the simulation of electromagnetic waves, it was necessary to find certain
difference-differential relations of the basis functions
\[
  \varphi_{i,j}(x,y) :=
  (1-x)^i P_j^{(2 i+1,0)}(2 x-1) P_i\big(\textstyle\frac{2 y}{1-x}-1\big).
\]
As before, the first step is to derive a holonomic system satisfied by
$\varphi_{i,j}(x,y)$, which is done by the command
\begin{verbatim}
annphi = Annihilator[
  (1 - x)^i * JacobiP[j, 2i + 1, 0, 2x - 1] *
  LegendreP[i, 2y/(1 - x) - 1], {S[i], S[j], Der[x], Der[y]}]
\end{verbatim}
The output is a bit unhandy (about 0.5 MB in size) and therefore not printed
here, but the computation is rather quick (less than a second). Since the
\texttt{HolonomicFunctions} package returns a set of generators of the
annihilator ideal that form a left Gr\"obner basis, one can exploit the
nice properties of such bases for further processing. For example,
from the support of the operators, obtained by \texttt{Support[annphi]},
\begin{align*}
  \bigl\{ & \{S_{j}, D_{x}, D_{y}, 1\}, \{D_{y}^2, D_{y}, 1\},
  \{D_{x} D_{y}, S_{i}, D_{x}, D_{y}, 1\}, \{D_{x}^2,
  S_{i}, D_{x}, D_{y}, 1\},  \\ &\{S_{i} D_{y}, S_{i}, D_{x},
  D_{y}, 1\}, \{S_{i} D_{x}, S_{i}, D_{x}, D_{y}, 1\},
  \{S_{i}^2, S_{i}, D_{x}, D_{y}, 1\}\bigr\},
\end{align*}
one can deduce that this annihilator has holonomic rank~$4$, by reading off
the monomials $\{1,S_i,D_x,D_y\}$ that are not divisible by any of the
leading monomials and therefore form a vector space basis of the solution space.
We make an ansatz with undetermined coefficients and represent it
in that basis (this corresponds to reduction modulo the Gr\"obner basis);
the ansatz operator is in the ideal if and only if all the
coefficients of the remainder equal zero, which leads to a linear system
of equations. This strategy is implemented in the command \texttt{FindRelation},
which we use in the form
\begin{verbatim}
FindRelation[annphi, Eliminate -> {x, y},
  Pattern -> {_, _, 0 | 1, 0}]
\end{verbatim}
asking for an operator whose coefficients are free of~$x$ and~$y$ and
which may have arbitrary degrees in $S_i$ and $S_j$, but at most
degree~$1$ in~$D_x$ and no~$D_y$. The computation takes just a few seconds
and returns an operator that translates into the following recurrence
relation (note that this relation was crucial for the simulations and
therefore even entered a patent~\cite{SchoeberlKoutschanPaule15}):

\begin{align*}
  & (2i+j+3)(2i+2j+7)\tfrac{\mathrm{d}}{\mathrm{d} x}\varphi_{i,j+1}(x,y)
    +2(2i+1)(i+j+3)\tfrac{\mathrm{d}}{\mathrm{d} x}\varphi_{i,j+2}(x,y) \\
  & {}-(j+3)(2i+2j+5)\tfrac{\mathrm{d}}{\mathrm{d} x}\varphi_{i,j+3}(x,y)
    +(j+1)(2i+2j+7)\tfrac{\mathrm{d}}{\mathrm{d} x}\varphi_{i+1,j}(x,y) \\
  & {}-2(2i+3)(i+j+3)\tfrac{\mathrm{d}}{\mathrm{d} x}\varphi_{i+1,j+1}(x,y)
    -(2i+j+5)(2i+2j+5)\tfrac{\mathrm{d}}{\mathrm{d} x}\varphi_{i+1,j+2}(x,y) \\
  & {}+2(i+j+3)(2i+2j+5)(2i+2j+7)\varphi_{i,j+2}(x,y) \\
  & {}+2(i+j+3)(2i+2j+5)(2i+2j+7)\varphi_{i+1,j+1}(x,y) = 0.
\end{align*}

As we mentioned already before, one of the standard tables of identities
involving special functions is the book by Gradshteyn and
Ryzhik~\cite{GradshteynRyzhik14}. Opening it at a random page, we find
the identity
\[
  \int_{-1}^1 \bigl(1-x^2\bigr)^{\nu-\frac12} \mathrm{e}^{\mathrm{i}ax}
  C_n^\nu(x)\,dx =
  \frac{\pi \, 2^{1-\nu} \, \mathrm{i}^n \, \Gamma(2\nu+n)}{n!\,\Gamma(\nu)}
  a^{-\nu} J_{\nu+n}(a)
\]
that involves the Gamma function~$\Gamma(x)$, the Gegenbauer
polynomials~$C_n^{(\alpha)}(x)$, and the Bessel function~$J_\nu(x)$.
Let us see how creative telescoping can assist in proving this identity.
We start with the left-hand side, where creative telescoping delivers
a holonomic system for the integral. By typing the command
\begin{verbatim}
CreativeTelescoping[(1 - x^2)^(nu - 1/2) * Exp[I a x] *
  GegenbauerC[n, nu, x], Der[x], {S[n], Der[a]}]
\end{verbatim}
we receive the following output:
\begin{align*}
  \bigl\{ & \{(a n+a) S_{n}+(i a n+2 i a \nu) D_{a}+(-i n^2-2 i n \nu),\\
  & \quad a^2 D_{a}^2+(2 a \nu +a) D_{a}+(a^2-n^2-2 n \nu)\}, \\
  & \{i (n+1) S_{n}-i (n x+2 \nu x), (n+1) S_{n}-i (a x^2-a-i n x-2 i \nu x)\}
    \bigr\}
\end{align*}
(note that the output consists of a list of telescopers~$P$ and a list
of corresponding certificates~$Q$.)

Similarly, we aim at computing a holonomic system for the right-hand side,
which is much easier since no integral or summation is involved. We
can obtain it by pure application of closure properties, as demonstrated
above. The command
\begin{verbatim}
Annihilator[
  Pi * 2^(1 - nu) * I^n * Gamma[2 nu + n] / n! / Gamma[nu] *
  a^(-nu) * BesselJ[nu + n, a], {S[n], Der[a]}]
\end{verbatim}
almost instantaneously delivers
\begin{align*}
  \bigl\{ & (a n+a) S_{n}+(i a n+2 i a \nu) D_{a}+(-i n^2-2 i n \nu), \\
  & a^2 D_{a}^2+(2 a \nu +a) D_{a}+(a^2-n^2-2 n \nu)\bigr\}
\end{align*}
which turns out to be the same set of operators that was found for
the left-hand side by creative telescoping. Hence, the identity is
established by comparing a suitable amount of initial values (this
has to be done by hand and is not shown here explicitly).

In order to emphasize the versatility of the approach, we give a
collection of identities that can be proven by creative telescoping,
in a similar fashion as demonstrated above:
\[
  \sum_{k=0}^n\binom{n}{k}^{\!2}\binom{k+n}{k}^{\!2}=\sum_{k=0}^n\binom{n}{k}\binom{k+n}{k}\sum_{j=0}^k\binom{k}{j}^{\!3}
\]
\[
  \int_0^\infty \frac{1}{\left(x^4+2 a x^2+1\right)^{m+1}} \, dx=\frac{\pi
  P_m^{\left(m+\frac{1}{2},-m-\frac{1}{2}\right)}(a)}{2^{m+\frac {3}{2}}(a+1)^{m+\frac{1}{2}}}
\]
\[
  e^{-x} x^{a/2} n!\, L_n^a(x)=\int_0^{\infty } e^{-t} t^{\frac{a}{2}+n} J_a\big(2 \sqrt{t x}\big) dt
\]
\[
  \int_{-\infty}^\infty\sum_{m=0}^\infty\sum_{n=0}^\infty\frac{H_m(x)H_n(x)r^ms^ne^{-x^2}}{m!\,n!} \, dx
  =\sqrt{\pi } e^{2 r s}
\]

Our next application, the \emph{holonomic ansatz}~\cite{Zeilberger07},
is a computer-algebra-based approach to find and/or prove the evaluation
of a symbolic determinant~$\det(A_n)$, where the dimension of the square
matrix $A_n:=(a_{i,j})_{0\leq i,j<n}$ is given by a symbolic parameter~$n$.
The method is applicable
to non-singular matrices whose entries~$a_{i,j}$ are holonomic sequences
in the index variables~$i$ and~$j$. Moreover, the entries~$a_{i,j}$
must not depend on~$n$, i.e., $A_{n-1}$ is an upper-left submatrix of~$A_n$.

The holonomic ansatz works as follows: define the quantity
\begin{equation}\label{cnj}
  c_{n,j} := (-1)^{n-1+j} \frac{M_{n-1,j}}{M_{n-1,n-1}}
\end{equation}
where $M_{i,j}$ denotes the $(i,j)$-minor of the matrix~$A_n$.
In other words, $c_{n,j}$ is the $(n-1,j)$-cofactor
of $A_n$ divided by $\det(A_{n-1})$. Using Laplace expansion with respect to
the last row, one can write
\begin{equation}\label{H3}
  \sum_{j=0}^{n-1} a_{n-1,j} c_{n,j} = \frac{\det(A_n)}{\det(A_{n-1})}.
\end{equation}
Under the assumptions that (i) the bivariate sequence $c_{n,j}$ is holonomic
and that (ii) its holonomic definition is known, the symbolic sum on the
left-hand side of~\eqref{H3} can be tackled with creative
telescoping, yielding a linear recurrence in~$n$ for the sum.
If an evaluation~$b_n$ for the determinant of~$A_n$ is conjectured,
then one can prove it by verifying that $b_n/b_{n-1}$ satisfies the
obtained recurrence and by comparing a sufficient number of initial values.

What can be said about the two assumptions? There is no general theorem that
implies that $c_{n,j}$ is always holonomic, and in fact, there are many
examples where it is not. If (i) is not satisfied, i.e., if $c_{n,j}$ is not
holonomic, then the method fails (not necessarily; in some situations one may
succeed to overcome the problem by applying a mild
reformulation; see~\cite{KoutschanNeumuellerRadu16}). Concerning~(ii): by a holonomic
definition we mean a set of linear recurrence equations whose coefficients
are polynomials in the sequence indices~$n$ and~$j$, together with finitely
many initial values, such that the entire bivariate sequence
$(c_{n,j})_{1\leq n,\,0\leq j<n}$ can be produced by unrolling the recurrences
and by using the initial values. The question now is how the original
definition~\eqref{cnj} can be converted into a holonomic definition.

Clearly, \eqref{cnj} allows one to compute the values of $c_{n,j}$ for
concrete integers~$n$ and~$j$ in a certain, finite range. From these data,
candidate recurrences can be constructed by the method of guessing (i.e.,
employing an ansatz with undetermined coefficients; cf.~\cite{Kauers09}). It
remains to prove that these recurrences, constructed from finite, and
therefore incomplete data, are correct, i.e., are valid for all $n\geq1$ and
$0\leq j<n$. For this purpose, we show that $c_{n,j}$ is the unique solution
of a certain system of linear equations, and then we prove that the sequence defined by
the guessed recurrences (and appropriate initial conditions) also satisfies the
same system. By uniqueness, it follows that the two sequences agree,
i.e., that the guessed recurrences define the desired sequence~$c_{n,j}$.

Suppose that the last row of $A_n$ is replaced by its $i$-th row;
the resulting matrix is clearly singular, turning~\eqref{H3} into
\begin{equation}\label{H2}
  \sum_{j=0}^{n-1} a_{i,j} c_{n,j} = 0\qquad (0\leq i<n-1).
\end{equation}
For each $n\in\set N$ the above equation~\eqref{H2} represents a system
of $n-1$ linear equations in the $n$ ``unknowns'' $c_{n,0},\dots,c_{n,n-1}$,
whose coefficient
matrix $(a_{i,j})_{0\leq i<n-1,0\leq j<n}$ has full rank because
$\det(A_{n-1})\neq0$ (if the latter is not known a priori, it can be argued by
induction on~$n$). Hence the homogeneous system~\eqref{H2} has a
one-dimensional kernel. The solution is made unique by normalizing with respect to its
last component, that is, by imposing a condition that is obvious
from~\eqref{cnj}, namely
\begin{equation}\label{H1}
  c_{n,n-1} = 1.
\end{equation}
Hence, \eqref{H2} and~\eqref{H1} together define $c_{n,j}$ uniquely. On the
other hand, given a holonomic definition of $c_{n,j}$, creative telescoping
and holonomic closure properties can be applied to prove~\eqref{H1}
and~\eqref{H2}, respectively. If these proofs succeed, then it follows that
the guessed recurrences are correct.

The holonomic ansatz has already been applied in many different contexts
\cite{KKZ2011, KoutschanThanatipanonda19, DuKoutschanThanatipanondaWong22};
here we want to showcase its most
recent application~\cite{KoutschanKrattenthalerSchlosser25} to a
real-world problem: in his study of the twenty-vertex model, Di
Francesco~\cite[Conj.~8.1 + Thm.~8.2]{DiFran21} came up with the
following conjectured determinant evaluation:
\begin{equation}\label{eq:DiFran1}
  \det_{0\leq i,j<n} \left(2^i \binom{i+2j+1}{2j+1} - \binom{i-1}{2j+1}\right) =
  2\prod_{i=1}^n \frac{2^{i-1} \, (4i-2)!}{(n+2i-1)!}.
\end{equation}
For computing a sufficient number of values for~$c_{n,j}$, it is more
efficient to employ their definition via~\eqref{H1} and~\eqref{H2},
rather than computing determinants in the spirit of~\eqref{cnj}. Then
we invoke the Mathematica package \texttt{Guess.m}~\cite{Kauers09},
which delivers three recurrence relations for the quantities~$c_{n,j}$,
whose shape suggests that $c_{n,j}$ indeed is a holonomic sequence.
The recurrence operators are too big to be displayed here (they would
require approximately one page), so we give only their supports instead:
\begin{equation}\label{eq:supp}
  \{S_j^2, S_n, S_j, 1\}, \quad
  \{S_n S_j, S_n, S_j, 1\}, \quad
  \{S_n^2, S_n, S_j, 1\}.
\end{equation}
These three operators generate the annihilator ideal~$I$ of~$c_{n,j}$
(and have the desirable property of forming a left Gr\"obner basis);
in the code, we denote them by \texttt{annc}.

We want to show that the guessed recurrences (represented by~$I$)
produce the correct values of $c_{n,j}$ for all~$j$ with
$0\leq j<n$. For this
purpose, we introduce another sequence $\tilde{c}_{n,j}$ that is defined
via~$I$, and we show that it actually agrees with the
sequence~$c_{n,j}$. The latter will be done by verifying that~\eqref{H1}
and~\eqref{H2} hold when $c_{n,j}$ is replaced by~$\tilde{c}_{n,j}$.

From the leading monomials $S_j^2,\, S_nS_j,\, S_n^2$ in~\eqref{eq:supp} one
can deduce that the holonomic rank of~$I$ is three, since there are
exactly three irreducible monomials that are not divisible by any of the
leading monomials: $1,\,S_j,\,S_n$. Stated differently, one needs to specify
the initial values $\tilde{c}_{1,0},\,\tilde{c}_{1,1},\,\tilde{c}_{2,0}$ in
order to fix a particular solution of the annihilator~$I$. Hence,
we define $\tilde{c}_{n,j}$ to be the unique solution of~$I$
whose three initial values agree with~$c_{n,j}$.

From this definition of $\tilde{c}_{n,j}$ one can derive algorithmically a
(univariate) recurrence for the almost-diagonal sequence
$\tilde{c}_{n,n-1}$ by the following command
\begin{verbatim}
DFiniteSubstitute[annc, {j -> n - 1}]
\end{verbatim}
This recurrence has order~$3$, which is equal to the
holonomic rank of~$I$, as expected.  The corresponding operator
has the right factor $S_n-1$, and more precisely, it can be written in the form
\begin{multline*}
  \bigl(9 (n+4) (2 n+5) (3 n+2) (3 n+4) (3 n+5) (3 n+7) p_1(n) S_n^2 \\
  + 12 (3 n+2) (3 n+4) (4 n+3) (4 n+5) p_2(n) S_n \\
  - 16 n (2 n+1) (4 n-1) (4 n+1) (4 n+3) (4 n+5) p_1(n+1)\bigr) \cdot (S_n - 1),
\end{multline*}
where $p_1(n)$ and $p_2(n)$ are irreducible polynomials of degree~$9$
and~$11$, respectively. It follows that any constant sequence is a solution
of this recurrence.  Together with the initial conditions
$\tilde{c}_{1,0}=\tilde{c}_{2,1}=\tilde{c}_{3,2}=1$, which are easy to
check, this proves that $\tilde{c}_{n,n-1}=1$ holds for all $n\geq1$.

The proof of the summation identity~\eqref{H2} is achieved by the method
of creative telescoping. For reasons of efficiency,
we split the sum in \eqref{H2} into two sums as follows:
\[
  \sum_{j=0}^{n-1} a_{i,j} \tilde{c}_{n,j} =
  \sum_{j=0}^{n-1} 2^i \binom{i+2j+1}{2j+1} \tilde{c}_{n,j}
  - \sum_{j=0}^{n-1} \binom{i-1}{2j+1} \tilde{c}_{n,j}.
\]
For the first of the two sums, we obtain an annihilator ideal by typing
\begin{verbatim}
annci = OreGroebnerBasis[Append[annc, S[i] - 1], 
  OreAlgebra[S[n], S[j], S[i]]];
annSmnd1 = DFiniteTimes[Annihilator[
  2^i * Binomial[i + 2 j + 1, 2 j + 1],
  {S[n], S[j], S[i]}], annci];
id2fct1 = FindCreativeTelescoping[annSmnd1, S[j] - 1];
\end{verbatim}
and analogously for the second sum (each of these computations takes
about 20 minutes). The ideal is generated by
four operators whose supports are as follows:
\begin{align*}
  & \{S_i^3, S_n^2, S_nS_i , S_i^2, S_n, S_i, 1\}, \quad
    \{S_i^2 S_n, S_n^2, S_nS_i, S_i^2, S_n, S_i, 1\}, \\
  & \{S_i S_n^2, S_n^2, S_nS_i, S_i^2, S_n, S_i, 1\}, \quad
    \{S_n^3, S_n^2, S_nS_i, S_i^2, S_n, S_i, 1\}.
\end{align*}
Actually, the two sums are annihilated by the very same operators, hence
these operators constitute an annihilator for the left-hand side of~\eqref{H2}.
The leading terms of the operators have the form:
\begin{align*}
  & 12 (i-1) i (i+1) (3 n+1) (3 n+4) (4 n-1) (4 n+1) (i-n+3) (i-n+4) q_1 S_i^3, \\
  & {-9} i (3 n-1) (3 n+1) (3 n+4) q_2 S_nS_i^2, \\
  & {-18} (i-1) i (n+1) (2 n+3) (3 n-1) (3 n+1)^2 (3 n+2) (3 n+4) (i+2 n+5) q_3 S_n^2S_i, \\
  & {-54} (n+1) (n+2) (2 n+3) (2 n+5) (3 n-1) (3 n+1)^2 (i-2 n-6) (i-2 n-5) q_4 S_n^3,
\end{align*}
where $q_1,q_2,q_3,q_4$ are (not necessarily irreducible) polynomials in~$n$ and~$i$.
It remains to check a finite set of initial values. The shape of this set is
determined by the support displayed above, by the condition $i<n-1$, and
by the zeros of the leading coefficients of the operators. More precisely
we have to verify that $\sum_{j=0}^{n-1} a_{i,j} \tilde{c}_{n,j} = 0$ for
\begin{align*}
  (i,n) \in \{&(0, 2), (0, 3), (0, 4), (0, 5), (0, 6), (1, 3), (1, 4), (1, 5), (2, 4), \\
              & (1, 6), (1, 7), 
                (2, 5), (2, 6), 
                (2, 7), (2, 8), 
                (3, 5), (4, 6)\} 
\end{align*}
(where the points in the first line are determined by the support, and the
second line is determined by the zeros of the leading coefficients).
This verification is successful, and hence it follows that $\tilde{c}_{n,j}=c_{n,j}$
for all~$j$ with $0\leq j<n$, which allows us to use $I$ as a holonomic
definition for~$c_{n,j}$.

In order to derive a recurrence for the left-hand side of~\eqref{H3} we
split the sum into two sums, as before:
\[
  \sum_{j=0}^{n-1} a_{n-1,j} c_{n,j} =
  \sum_{j=0}^{n-1} 2^{n-1} \binom{n+2j}{2j+1} c_{n,j}
  - \sum_{j=0}^{n-1} \binom{n-2}{2j+1} c_{n,j}.
\]
Then we compute, for each of the two
sums, a recurrence by creative telescoping:
\begin{verbatim}
annSmnd1 = DFiniteTimes[Annihilator[2^(n - 1) *
  Binomial[n + 2 j, 2 j + 1], {S[n], S[j]}], annc];
id3fct1 = FindCreativeTelescoping[annSmnd1, S[j] - 1];
annSmnd2 = DFiniteTimes[Annihilator[
  Binomial[n - 2, 2 j + 1], {S[n], S[j]}], annc];
id3fct2 = FindCreativeTelescoping[annSmnd2, S[j] - 1];
rec = DFinitePlus[id3fct1[[1]], id3fct2[[1]]];
\end{verbatim}
In both cases, the output is a recurrence \texttt{rec}
of order~$6$ with polynomial coefficients of degree~$52$.
Actually one finds that both sums satisfy the same
order-$6$ recurrence, and hence so does their sum.  One now has to verify
that $b_n/b_{n-1}$ satisfies this order-$6$ recurrence, where $b_n$ denotes
the right-hand side of \eqref{eq:DiFran1}. We have
\[
  \frac{b_n}{b_{n-1}} =
  \frac{(4n-2)!}{(3n-1)! \, \bigl(\frac{n+1}{2}\bigr)_{n-1}}.
\]
Note that this expression is hypergeometric in $n/2$ and hence satisfies a
second-order recurrence whose operator has support $\{S_n^2,1\}$:
\begin{verbatim}
annqb = Annihilator[(4 n - 2)! / (3 n - 1)! /
  Pochhammer[(n + 1)/2, n - 1], S[n]];
Factor[annqb]
\end{verbatim}
yields
\begin{align*}
  \bigl\{ & 27 (3 n-1) (3 n+1)^2 (3 n+2) (3 n+4) (3 n+5) S_{n}^2-{} \\
  & 256 (2 n+1) (2 n+3) (4 n-1) (4 n+1) (4 n+3) (4 n+5)\bigr\}
\end{align*}
Right-dividing the operator \texttt{rec} by the second-order operator \texttt{annqb}
can be performed by
\begin{verbatim}
OreReduce[rec, annqb]
\end{verbatim}
which returns~$0$, hence \texttt{rec} annihilates $b_n/b_{n-1}$.
It now suffices to verify
\[
  \frac{\det_{0\leq i,j\le n-1}(a_{i,j})}{\det_{0\leq i,j\le n-2}(a_{i,j})} = \frac{b_n}{b_{n-1}}
\]
for $n=2,\dots,7$. On both sides, one calculates the values $4$, $15$, $832/15$,
$204$, $9728/13$, $16445/6$, respectively.  By virtue of the recurrence \texttt{rec},
the asserted identity~\eqref{eq:DiFran1} holds for all integers $n\geq1$.

\section{Conclusion}

Creative telescoping is a key technique in symbolic summation and integration that has been the subject
of intensive research during the past 35 years. The purpose of this introductory article, like the purpose
of the introductory course at the RTCA special semester on which it is based, was not to summarize the
current state of the art but to only explain the basic principles and the main results on the matter.
We have discussed the summation case for hypergeometric terms and the integration case for rational
functions in some detail and then turned to the more general concepts of D-finiteness and holonomy,
where the method of creative telescoping applies as well and finds many additional applications.

We hope that the reader got a sense what a telescoper is, why it is useful, and how it can be computed.

In this concluding section, we will briefly comment on more advanced topics related to
creative telescoping that we did not cover in the RTCA course.
One such aspect concerns the use of differential or difference fields.

A \emph{derivation} on a field $K$ is a map $D\colon K\to K$ that satisfies
\[
  D(a+b)=D(a)+D(b)
  \quad\text{and}\quad
  D(ab)=D(a)b+aD(b)
\]
for all $a,b\in K$. A field $K$ together with a derivation is called a \emph{differential field.}

If $E$ is a differential field and $K$ is a subfield of~$E$, then an element $e$ of $E$ is called
a \emph{primitive} (over~$K$) if $D(e)=u$ for some $u\in K$, and it is called
\emph{hyperexponential} (over~$K$) if $D(e)/e=u$ for some $u\in K$. A differential field
$K=C(t_1,\dots,t_d)$ is called \emph{liouvillian} if every $t_i$ is primitive or hyperexponential
over $C(t_1,\dots,t_{i-1})$ and for all $r\in K$ we have $D(r)=0$ if and only if $r\in C$.

\begin{ex}
  Consider the expression $x-\log(1+\exp(x))$. We want to construct a differential field which
  contains an element that behaves like this expression upon differentiation.
  To this end, we take $K=C(t_1,t_2,t_3)$ and define $D(c)=0$ for all $c$ as well as
  \begin{alignat*}3
    D(t_1) &= 1 &\quad&\text{so that $t_1$ behaves like $x$}\\
    D(t_2) &= t_2 &&\text{so that $t_2$ behaves like $\exp(x)$, and}\\
    D(t_3) &= \frac{t_2}{1 + t_2} &&\text{so that $t_3$ behaves like $\log(1+\exp(x))$}.
  \end{alignat*}
  Then $t_1-t_3$ behaves like $x-\log(1+\exp(x))$.

  Note that $D$ is completely specified by its values on $C$ and the generators $t_1,t_2,t_3$.
  Note also that while it is clear by the definition of $C$ that $D(c)=0$
  for all~$c$, it is not obvious that there is no element in $K\setminus C$ whose
  derivative is zero. However, it can be checked that this is the case. 
\end{ex}

The celebrated Risch algorithm for indefinite integration~\cite{Risch1969,Risch1970,BronsteinBook} solves the integration
problem in liouvillian fields: given a liouvillian field $K$ and an element $f\in K$, it constructs a
liouvillian field $E$ containing $K$ and an element $g\in E$ such that $D(g)=f$, or it proves
that no such $E$ exists. For example, it can find
\[
  \int\frac{dx}{1+\exp(x)}=x-\log(1+\exp(x))
\]
and prove that
\[
  \int\frac{dx}{x+\exp(x)}\text{ is not elementary.}
\]
The Risch algorithm reduces the given integration problem to an integration problem in a smaller field,
which is then solved recursively.

Within the Risch algorithm, the following parameterized version of the integration problem has
to be solved:
\begin{itemize}
\item given $f_1,\dots,f_r\in K$
\item find $c_1,\dots,c_r\in C$ and $g\in K$ such that
\[
  c_1 f_1 + \cdots + c_r f_r = D(g).
\]
\end{itemize}
Although Risch's algorithm focusses on indefinite integration, this problem formulation looks
remarkably similar to the specification of creative telescoping. Indeed, as pointed out by
Raab~\cite{raab12,Raab2016}, we can construct a creative telescoping procedure based on Risch's algorithm
in a similar way as Zeilberger's algorithm is based on Gosper's algorithm for indefinite
summation. As a result, we obtain a method for handling definite integrals over liouvillian
functions.

There is also an analogous theory for the discrete case. A \emph{difference field} is a field
$K$ together with a field automorphism~$\sigma$, i.e., a map $\sigma\colon K\to K$ with
\[
\sigma(a+b)=\sigma(a)+\sigma(b)
\quad\text{and}\quad
\sigma(ab)=\sigma(a)\sigma(b)
\]
for all $a,b\in K$.
This automorphism plays the role of a shift.

If $E$ is a difference field and $K$ is a subfield of~$E$, then an element $e$ of $E$ is called
a \emph{sum} (over~$K$) if $\sigma(e)-e=u$ for some $u\in K$, and it is called
a \emph{product} (over~$K$) if $\sigma(e)/e=u$ for some $u\in K$. A difference field
$K=C(t_1,\dots,t_d)$ is called a \emph{$\Pi\Sigma$-field} if every $t_i$ is either a
sum or a product over $C(t_1,\dots,t_{i-1})$ and for all $r\in K$ we have $\sigma(r)=r$ if
and only if $r\in C$.

\begin{ex}
  Consider the expression $\sum_{k=1}^n\frac{1+2^k}{1 + \sum_{i=1}^k\frac1i}$. 
  We want to construct a difference field which contains an element that behaves
  like this expression upon shift.
  To this end, we take $K=C(t_1,t_2,t_3,t_4)$ and define $\sigma(c)=c$ for all $c$ as well as
  \begin{alignat*}3
    \sigma(t_1) &= t_1+1 &&\text{so that $t_1$ behaves like $n$}\\
    \sigma(t_2) &= 2t_2 &&\text{so that $t_2$ behaves like $2^n$}\\
    \sigma(t_3) &= t_3 + \frac1{t_1+1} &&\text{so that $t_3$ behaves like $\sum_{k=1}^n\frac1k$, and}\\
    \sigma(t_4) &= t_4 + \frac{1+2t_2}{1+t_3+\frac1{t_1+1}} &\quad&\text{so that $t_4$ behaves like the target expression}.
  \end{alignat*}
  Again, while it is clear by definition that $\sigma(c)=c$ for all $c\in C$, it is not
  obvious (but true) that there is not also some element $r$ in $K\setminus C$ with
  $\sigma(r)=r$.
\end{ex}

$\Pi\Sigma$-fields were introduced by Karr~\cite{Karr1981,Karr1985}, who used them to formulate a counterpart
of Risch's algorithm for the summation case.
Like Risch's algorithm, also Karr's algorithm includes a subroutine that can be used to do
creative telescoping.
This was first observed by Schneider~\cite{Schneider2004}, who has since extended the algorithmic theory
of $\Pi\Sigma$-fields in many directions. 

For many years, the line of research extending creative telescoping to functions
described by annihilating operators and the line of research extending creative telescoping
to functions described by differential or difference fields were developped side by side
with surprisingly little interaction. It would be interesting and challenging to combine
these two trends into a unified theory.

With the line of research that focussed on operator techniques, much of the research
during the past years has been devoted to reduction-based telescoping techniques.
In Sect.~\ref{sec:ct}, we explained the idea of this technique for rational functions
in the differential case. The technique was extended to hyperexponential functions~\cite{bostan13a},
to rational functions in more variables~\cite{bostan13b}, to algebraic functions~\cite{chen16a},
to hypergeometric terms (in the summation case)~\cite{chen15a,huang16},
to problems involving discrete as well as continuous variables~\cite{bostan16},
and finally to the
general D-finite setting~\cite{vdHoeven2017,vdHoeven2018,bostan18a,vanderHoeven21,brochet24,chen25,du25}.
In fact, there are two competing
approaches to reduction-based telescoping for D-finite functions, one based on
Lagrange's identity and one based on integral bases. For the time being, it is
unclear which of the two is superior. 

The extension of reduction-based creative telescoping to further function classes is
a subject of ongoing research. Some further research problems related to creative
telescoping were raised in a paper by Chen and Kauers~\cite{ChenKauers2017}. 


\bibliographystyle{spmpsci}
\bibliography{ct}


\backmatter
\appendix

\printindex


\end{document}